\newcommand{\Comment}[1]{}
\long\def\LongVersion#1\LongVersionEnd{#1}
\long\def\ShortVersion#1\ShortVersionEnd{}
\newtheorem{theorem}{Theorem}[section]
\newtheorem{lemma}[theorem]{Lemma}
\newtheorem{observation}[theorem]{Observation}
\newtheorem{corollary}[theorem]{Corollary}
\newtheorem{proposition}[theorem]{Proposition}
\theoremstyle{definition}
\newtheorem{property}[theorem]{Property}
\theoremstyle{plain}
\newtheorem{theorem}{Theorem}
\newtheorem{lemma}[theorem]{Lemma}
\newtheorem{observation}[theorem]{Observation}
\newtheorem{corollary}[theorem]{Corollary}
\newtheorem{proposition}[theorem]{Proposition}
\theoremstyle{definition}
\newtheorem{property}[theorem]{Property}
\theoremstyle{plain}
\newtheorem{theorem}{Theorem}[section]
\newtheorem{lemma}{Lemma}[section]
\newtheorem{corollary}{Corollary}[section]
\newtheorem{proposition}{Proposition}[section]
\newtheorem{observation}{Observation}[section]
\theoremstyle{definition}
\theoremstyle{plain}
\newtheorem{GlobalTheorem}{Theorem}
\newtheorem*{claim*}{Claim}
\newtheorem*{observation*}{Observation}
\newenvironment{AvoidOverfullParagraph}[0]
{\sloppy\ignorespaces}
{\par\fussy\ignorespacesafterend}
\newcommand{\Station}[0]{\mathit{s}}
\newcommand{\cA}{\mathcal{A}}
\newcommand{\Power}[0]{\psi}
\newcommand{\Noise}[0]{\mathit{N}}
\newcommand{\Reals}[0]{\mathbb{R}}
\newcommand{\Grid}[0]{\mathit{G}}
\newcommand{\Area}[0]{\mathrm{area}}
\newcommand{\Perimeter}[0]{\mathrm{per}}
\newcommand{\SmallRadius}[0]{\delta}
\newcommand{\LargeRadius}[0]{\Delta}
\newcommand{\Boundary}[0]{\partial}
\newcommand{\DataStructure}[0]{\ensuremath{\mathtt{DS}}}
\newcommand{\QuerDS}[0]{\ensuremath{\mathtt{QDS}}}
\newcommand{\ColumnsCollection}[0]{\mathit{Q}}
\newcommand{\Energy}[0]{\mathrm{E}}
\newcommand{\GridSpace}[0]{\gamma}
\newcommand{\ReceptionZone}[0]{\mathcal{H}}
\newcommand{\Segment}[2]{\overline{{#1} \, {#2}}}
\newcommand{\Interference}[0]{\mathrm{I}}
\newcommand{\SINR}[0]{\mathrm{SINR}}
\newcommand{\Sign}[0]{\mathrm{sign}}
\newcommand{\Remainder}[0]{\mathrm{rem}}
\newcommand{\Var}[0]{\mathrm{SC}}
\newcommand{\Ball}[0]{\mathit{B}}
\newcommand{\HearPoly}[0]{\mathit{H}}
\newcommand{\QuerPoly}[0]{\mathit{Q}}
\newcommand{\PolyZone}[0]{\mathcal{Q}}
\newcommand{\Type}[0]{\mathrm{T}}
\newcommand{\MinDist}[0]{\kappa}
\newcommand{\dist}[1]{\mathrm{dist} ({#1})}
\newcommand{\RightMost}[0]{\mu_r}
\newcommand{\LeftMost}[0]{\mu_l}
\newcommand{\SmallRadiusBound}[0]{\tilde{\SmallRadius}}
\newcommand{\LargeRadiusBound}[0]{\tilde{\LargeRadius}}
\newcommand{\FatnessParameter}[0]{\varphi}
\newcommand{\hx}[0]{\mathit{\bar{r}}}
\newcommand{\UPN}[0]{uniform power network}
\newcommand{\UPN}[0]{UPN}
\newcommand{\Figure}[0]{Figure}
\newcommand{\Figure}[0]{Fig.}
\newcommand{\Paragraph}[1]{\vspace{-0.03in}\paragraph{#1}}
\def\dnsparagraph{\vspace{-10pt}\paragraph}
\begin{document}

\begin{titlepage}

\title{SINR Diagrams: \\
Towards Algorithmically Usable SINR Models of Wireless Networks}

\author{
Chen Avin\footnotemark[1]
\and
Yuval Emek\footnotemark[2]
\and
Erez Kantor\footnotemark[2]
\and
Zvi Lotker\footnotemark[1]
\and
David Peleg\footnotemark[2]
\and
Liam Roditty\footnotemark[3]
}

\date{\today}

\def\thefootnote{\fnsymbol{footnote}}

\footnotetext[1]{
\noindent
Department of Communication Systems Engineering, Ben Gurion University, 
Beer-Sheva, Israel.
E-mail:{\tt \{avin,zvilo\}@cse.bgu.ac.il}.
Z. Lotker was partially supported by a gift from Cisco research center}

\footnotetext[2]{
\noindent
Department of Computer Science and Applied Mathematics,
Weizmann Institute of Science, Rehovot, Israel.
E-mail: 
{\tt \{yuval.emek,erez.kantor,david.peleg\}@weizmann.ac.il}.
Supported in part by grants from the Minerva Foundation and
the Israel Ministry of Science.
}

\footnotetext[3]{
\noindent
Department of Computer Science, 
Bar Ilan University, Ramat-Gan, Israel.
E-mail: 
{\tt liam.roditty@gmail.com}.
}

\maketitle

\begin{abstract}
The rules governing the availability and quality of connections in a wireless 
network are described by \emph{physical} models such as the
\emph{signal-to-interference \& noise ratio (SINR)} model.
For a collection of simultaneously transmitting stations in the plane, it is
possible to identify a \emph{reception zone} for each station, consisting of
the points where its transmission is received correctly.
The resulting \emph{SINR diagram} partitions the plane into a reception zone
per station and the remaining plane where no station can be heard.

SINR diagrams appear to be fundamental to understanding the behavior of
wireless networks, and may play a key role in the development of suitable
algorithms for such networks, analogous perhaps to the role played by
Voronoi diagrams in the study of proximity queries and related issues in
computational geometry.
So far, however, the properties of SINR diagrams have not
been studied systematically, and most algorithmic studies in wireless
networking rely on simplified \emph{graph-based} models such as the \emph{unit
disk graph (UDG)} model, which conveniently abstract away interference-related
complications, and make it easier to handle algorithmic issues, but
consequently fail to capture accurately some important aspects of wireless
networks.

The current paper focuses on obtaining some basic understanding of SINR
diagrams, their properties and their usability in algorithmic applications.
Specifically, based on some algebraic properties of the polynomials defining
the reception zones we show that assuming uniform power transmissions, the
reception zones are convex and relatively well-rounded.
These results are then used to develop an efficient approximation algorithm
for a fundamental point location problem in wireless networks.
\end{abstract}

\thispagestyle{empty}

\end{titlepage}

\section{Introduction}
\label{section:Introduction}

\ShortVersion 
\Paragraph{Background.}
\ShortVersionEnd 
\LongVersion 
\subsection{Background}
\LongVersionEnd 
\LongVersion 
It is commonly accepted that traditional (wired, point-to-point) communication
networks are satisfactorily represented using a graph based model.
The question of whether a station $s$ is able to transmit a message 
to another station $s'$ depends on a single
(necessary and sufficient) condition, namely, that there be a wire 
connecting the two stations.
This condition is thus independent of the locations of the two stations, of
their other connections and activities, and of the locations, connections or
activities of other nearby stations\footnote{
Broadcast domain wired networks such as LANs are an exception, but even most
LANs are collections of point-to-point connections.
}.

In contrast, wireless networks are considerably harder to represent
faithfully, due to the fact that deciding whether a transmission 
by a station $s$ is successfully received by another station $s'$
is nontrivial, and depends on the positioning 
and activities of $s$ and $s'$, as well as on the positioning
and activities of other nearby stations, which might interfere with
the transmission and prevent its successful reception. 
Thus such a transmission from $s$ may reach $s'$ under certain circumstances 
but fail to reach it under other circumstances.
Moreover, the question is not entirely ``binary'', in the sense that
connections can be of varying quality and capacity.
\LongVersionEnd 
\ShortVersion 
Wireless networks are hard to represent faithfully,
due to the fact that deciding whether a transmission 
by a station $s$ is successfully received by another station $s'$
is nontrivial, and depends on the positioning and activities of $s$, $s'$,
and nearby stations that might interfere with
the transmission and prevent its successful reception. 
Thus such a transmission from $s$ may reach $s'$ under certain circumstances 
but fail to reach it under other circumstances.
Moreover, the question is not entirely ``binary'', in the sense that
connections can be of varying quality and capacity.
\ShortVersionEnd 

The rules governing the availability and quality of wireless connections
can be described by {\em physical} or {\em fading channel} models
(cf. \cite{PL95,B96,R96}).
Among those, the most commonly studied is the
{\em signal-to-interference \& noise ratio (SINR)} model.
In the SINR model, the energy of a signal fades with the distance 
to the power of the path-loss parameter $\alpha$. 
If the signal strength received by a device divided by the interfering
strength of other simultaneous transmissions (plus the fixed \emph{background
noise} \(\Noise\)) is above some \emph{reception threshold} $\beta$, then the
receiver successfully receives the message, otherwise it does not.
Formally, denote by $\dist{p,q}$ the Euclidean distance between $p$ and $q$,
and assume that each station \(\Station_i\) transmits with power \(\Power_i\).
(A \emph{uniform power network} is one where all stations transmit with the
same power.)
At an arbitrary point $p$, the transmission of station \(\Station_i\) is
correctly received if
\[
\frac{\Power_i \cdot \dist{p, s_i}^{-\alpha}}
{N + \sum_{j \neq i} \Power_j \cdot \dist{p, s_j}^{-\alpha}}
~ \geq ~ \beta ~ .
\]
Hence for a collection \( S = \{ \Station_0, \ldots ,\Station_{n - 1} \} \) of
simultaneously transmitting stations in the plane, it is possible to identify
with each station \(\Station_i\) a \emph{reception zone} \(\ReceptionZone_i\)
consisting of the points where the transmission of \(\Station_i\) is received
correctly.
It is believed that the path-loss parameter \( 2 \leq \alpha \leq 4 \), where
\( \alpha = 2 \) is the common ``textbook'' choice, and that the reception
threshold \( \beta \approx 6 \) (\(\beta\) is always assumed to be greater
than \(1\)).

To illustrate how reception depends on the locations and activities of other
stations, consider (the numerically generated) \Figure{}~\ref{fig:example}.
(Throughout, figures are deferred to the end of the Appendix.)
\Figure{}~\ref{fig:example}(A) depicts uniform stations \( \Station_1,
\Station_2, \Station_3 \) and their reception zones.
Point $p$ (represented as a solid black square) falls inside
\(\ReceptionZone_2\).
\Figure{}~\ref{fig:example}(B) depicts the same stations except station
\(\Station_1\) has moved, so that now $p$ does not receive any transmission.
\Figure{}~\ref{fig:example}(C) depicts the stations in the same positions as
\Figure{}~\ref{fig:example}(B), but now \(\Station_3\) is silent, and as a
result, the other two stations have larger reception zones, and $p$ receives
the message of \(\Station_1\).

\newcommand{\FigureExample}{
\begin{figure}[htbp]
\centering
\begin{tabular}{ccc}
\begin{minipage}{1.8in}
\includegraphics[width=1.8in]{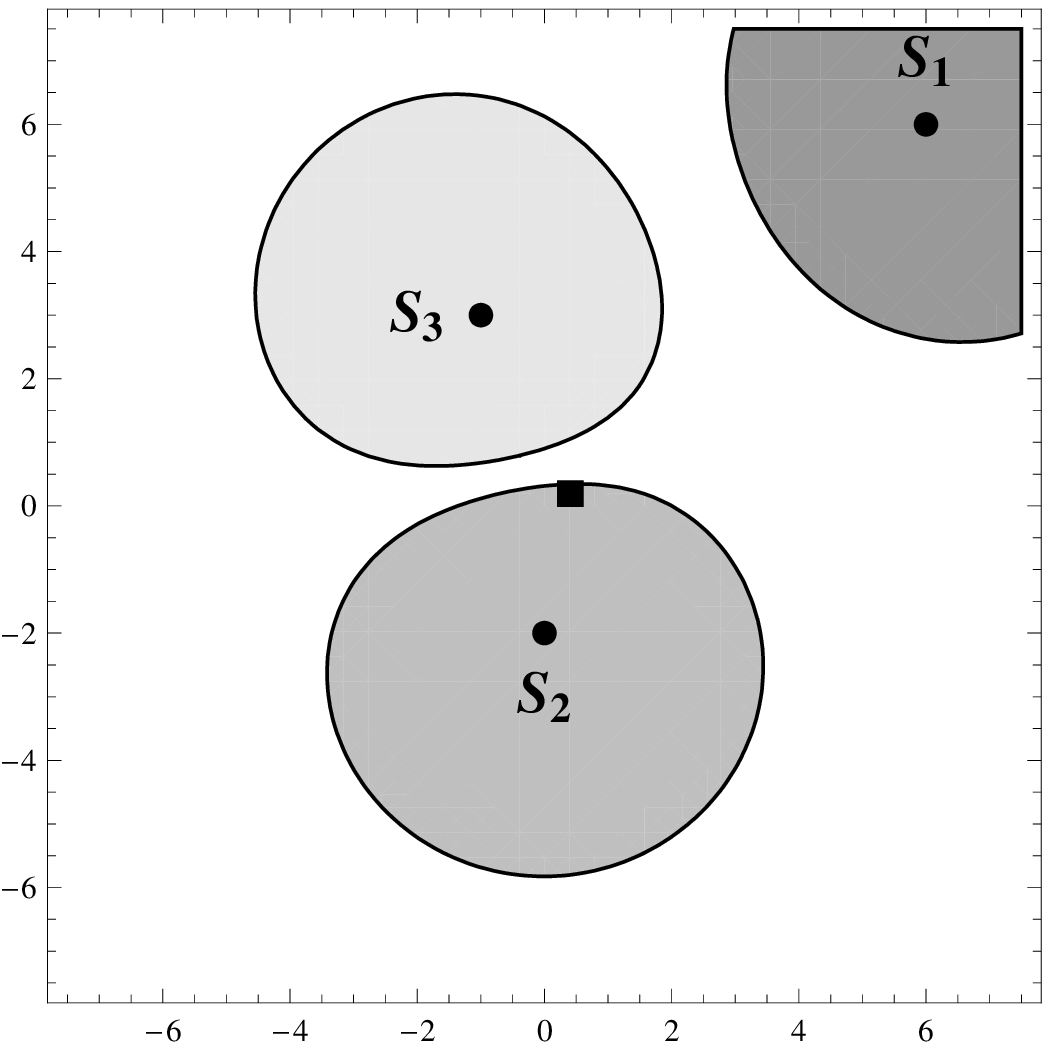}
\end{minipage}&
\begin{minipage}{1.8in}
\includegraphics[width=1.8in]{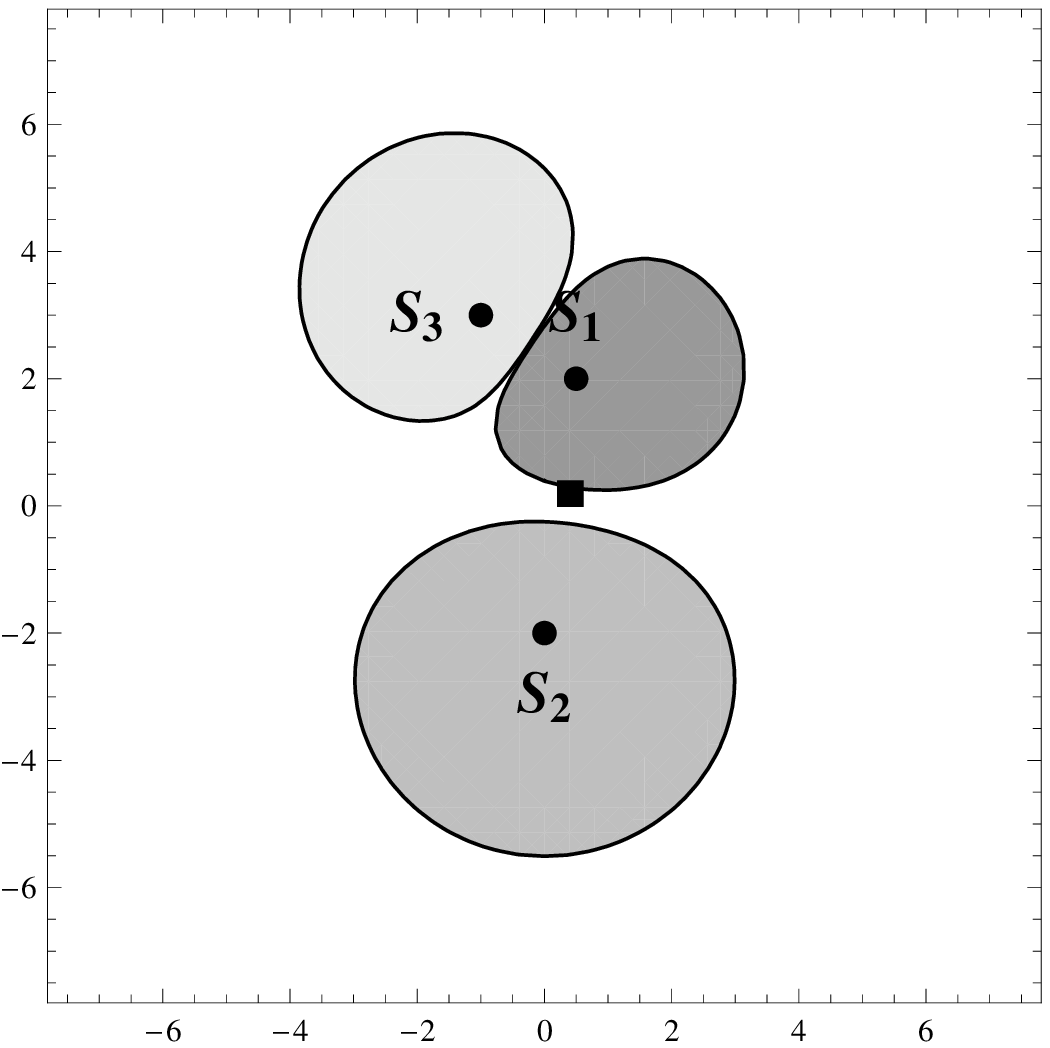}
\end{minipage}&
\begin{minipage}{1.8in}
\includegraphics[width=1.8in]{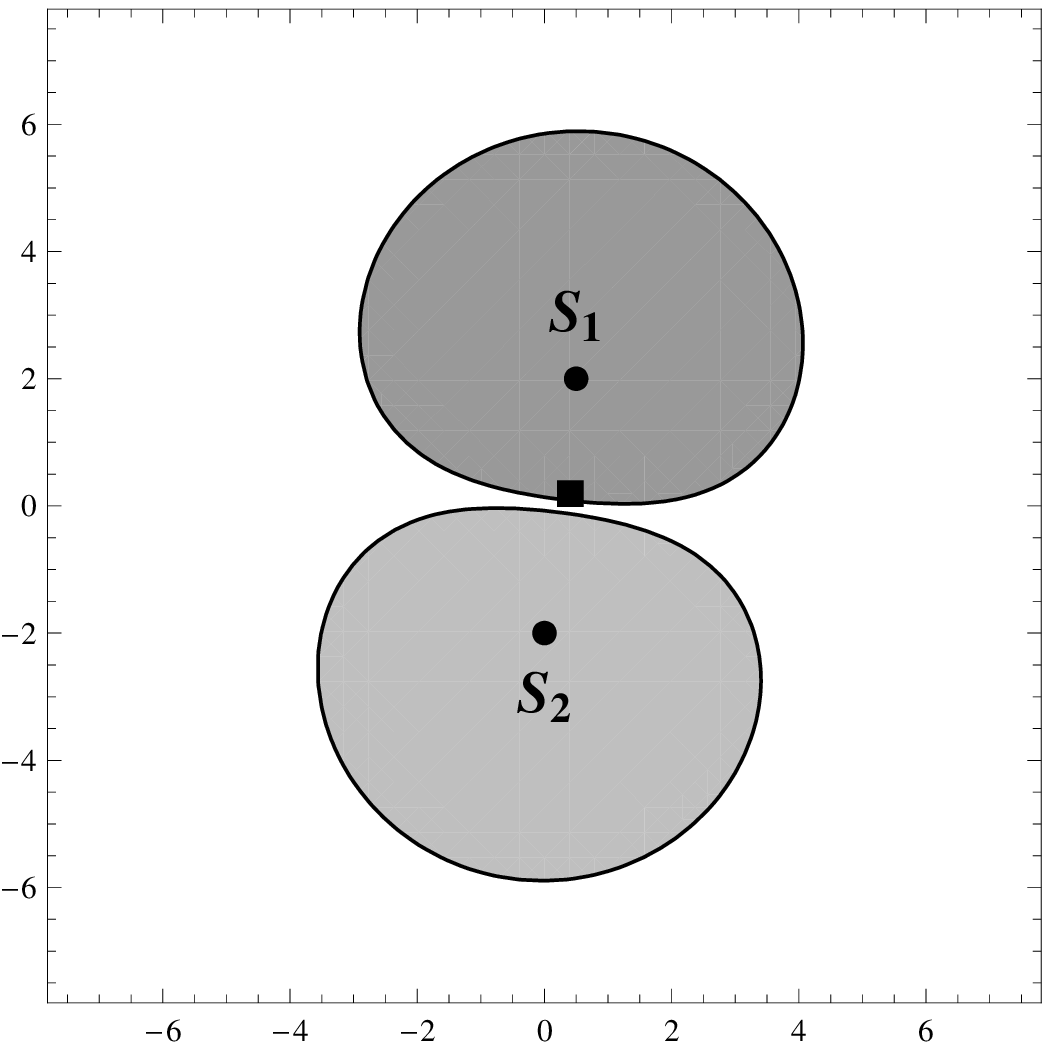}
\end{minipage} \\
\small (A) & \small (B) & \small (C)
\end{tabular}
\caption{An example of SINR diagram with three transmitters 
$s_1, s_2, s_3$ and one receiver denoted by the solid black square. 
(A) The receiver can hear $s_2$.
(B) Station $s_1$ moves and now the receiver cannot hear any transmission. 
(C) If, at the same locations as in (B), $s_3$ is silent, then the receiver 
can hear $s_1$.}
\label{fig:example}
\end{figure}
} 
\LongVersion 
\FigureExample{}
\LongVersionEnd 

\Comment{
Figure \ref{fig:non-convex-SINR} illustrates a more involved configuration
which may arise in the non-uniform case.
(Here, station $s_1$ transmits with a higher power level 
than $s_2$ and $s_3$.)

\newcommand{\FigureNonConvexSinr}{
\begin{figure}[htbp]
\centering
\begin{tabular}{cc}
\begin{minipage}{2.5in}
\includegraphics[width=2.5in]{figs/Power1}
\end{minipage}
\end{tabular}
\caption{Reception zones in the SINR model for a non-uniform power network.
Here, station $s_1$ transmits with a higher power level
than the other two stations.
The white crescent-like internal region
represents a ``no-reception'' area.}
\label{fig:non-convex-SINR}
\end{figure}
} 
\LongVersion 
\FigureNonConvexSinr{}
\LongVersionEnd 
} 

\Figure{}~\ref{fig:example} illustrates
\Comment{
Figures \ref{fig:example} and \ref{fig:non-convex-SINR} illustrate 
} 
a concept central to this paper, namely, the {\em SINR diagram}. 
An SINR diagram is a ``reception map'' characterizing the reception zones of
the stations, namely, partitioning the plane into $n$ reception zones
\(\ReceptionZone_i\), \( 0 \leq i \leq n - 1 \), and a zone
\(\ReceptionZone_{\emptyset}\) where no station can be heard.
In many scenarios the diagram changes dynamically with time, as the stations
may choose to transmit or keep silent, adjust their transmission power level,
or even change their location from time to time.

It is our belief that SINR diagrams are fundamental to understanding the
dynamics of wireless networks, and will play a key role in the development of
suitable algorithms for such networks, analogous perhaps to the role played
by Voronoi diagrams in the study of proximity queries and related issues in
computational geometry.
Yet, to the best of our knowledge, SINR diagrams have not been studied
systematically so far, from either geometric, combinatorial, 
or algorithmic standpoints.
In particular, in the SINR model it is not clear what shapes the reception
zones may take, and it is not easy to construct an SINR diagram even in a
static setting.

Taking a broader perspective, a closely related concern motivating 
this paper is that while a fair amount of research 
exists on the SINR model and other variants of the physical model, 
little has been done in such models in the {\em algorithmic} arena.
(Some recent exceptions are
\cite{GoOsWa07,GuKu00,KuWaZo03,Mo07,MWW06,MoWaZo06,RSWZ05}.)
The main reason for this is that SINR models are complex and hard 
to work with.
In these models it is even hard to decide some of the most
elementary questions on a given setting, and it is definitely more difficult
to develop communication or design protocols, 
prove their correctness and analyze their efficiency.

Subsequently, most studies of higher-layer concepts in wireless multi-hop 
\LongVersion 
networking, including issues such as transmission scheduling, 
frequency allocation, topology control, connectivity maintenance,
routing, and related design and communication tasks,
\LongVersionEnd 
\ShortVersion 
networking 
\ShortVersionEnd 
rely on simplified {\em graph-based} models rather than on the SINR model.
\LongVersion 
Graph-based models represent the network by a graph $G=(S,E)$ such that 
a station $s$ will successfully receive a message transmitted by a station $s'$
if and only if $s$ and $s'$ are neighbors in $G$ and $s$ does not have 
a concurrently transmitting neighbor in $G$.
\LongVersionEnd 
In particular, the model of choice for many protocol designers is
the {\em unit disk graph (UDG)} model \cite{Clark1990unit}.
In this model, also known as the \emph{protocol} model \cite{GuKu00},
\ShortVersion 
two stations are considered to be \emph{neighbors} if their Euclidean distance
is at most \(1\).
This defines the \emph{UDG graph}.
A silent station \(\Station\) successfully receives the message of a
transmitting station \(\Station'\) if \(\Station'\) is a neighbor of
\(\Station\) and no other neighbor of \(\Station\) transmits concurrently.
\ShortVersionEnd 
\LongVersion 
the stations are represented as points in the Euclidean plane, 
and the transmission of a station can be received by every other
station within a unit ball around it.
The \emph{UDG graph} is thus a graph whose vertices correspond to the
stations, with an edge connecting any two vertices whose corresponding
stations are at distance at most one from each other.
\LongVersionEnd 

Graph-based models are attractive for higher-layer protocol design, 
as they conveniently abstract away interference-related complications. 
\Comment{
For example, returning to the issue of obtaining a ``reception map'',
observe that a {\em UDG diagram} can be constructed in linear time
in a straightforward manner. 
} 
Issues of topology control, scheduling
and allocation are also handled more directly, since notions such as 
adjacency and overlap are easier to define and test, in turn making it
simpler to employ also some useful derived concepts such as 
domination, independence, clusters, and so on.
\LongVersion 
(Note also that the SINR model in itself is rather simplistic, as it assumes 
perfectly isotropic antennas and ignores environmental obstructions. 
These issues can be integrated into the basic SINR model, 
at the cost of yielding relatively complicated "SINR+" models,
even harder to use by protocol designers. 
In contrast, graph-based models naturally incorporate both 
directional antennas and terrain obstructions.)
\LongVersionEnd 
On the down side, it should be realized that graph-based models,
and in particular the UDG model, ignore or do not accurately capture 
a number of important physical aspects of real wireless networks.
In particular, such models oversimplify the physical laws of interference;
in reality, several nodes slightly outside the reception range
of a receiver station $v$ (which consequently are not adjacent to $v$
in the UDG graph) might still generate enough cumulative interference 
to prevent $v$ from successfully receiving a message from a sender station 
adjacent to it in the UDG graph; see \Figure{}~\ref{fig:sinr_udg_3} for an
example.
Hence the UDG model might yield a ``false positive'' indication of reception.
\newcommand{\FigureSinrUdgThree}{
\begin{figure}[htbp]
\centering
\begin{tabular}{cc}
\begin{minipage}{1.8in}
\includegraphics[width=1.8in]{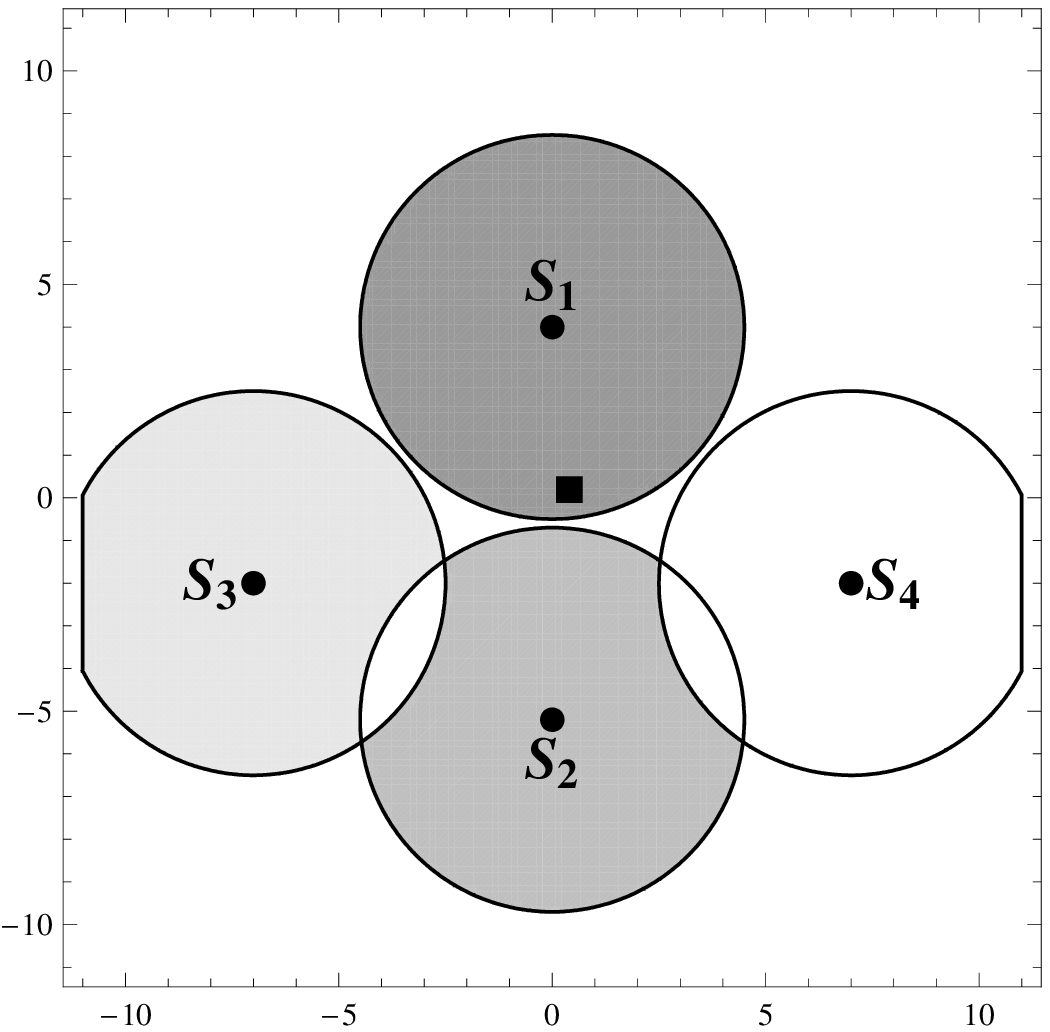}
\end{minipage}&
\begin{minipage}{1.8in}
\includegraphics[width=1.8in]{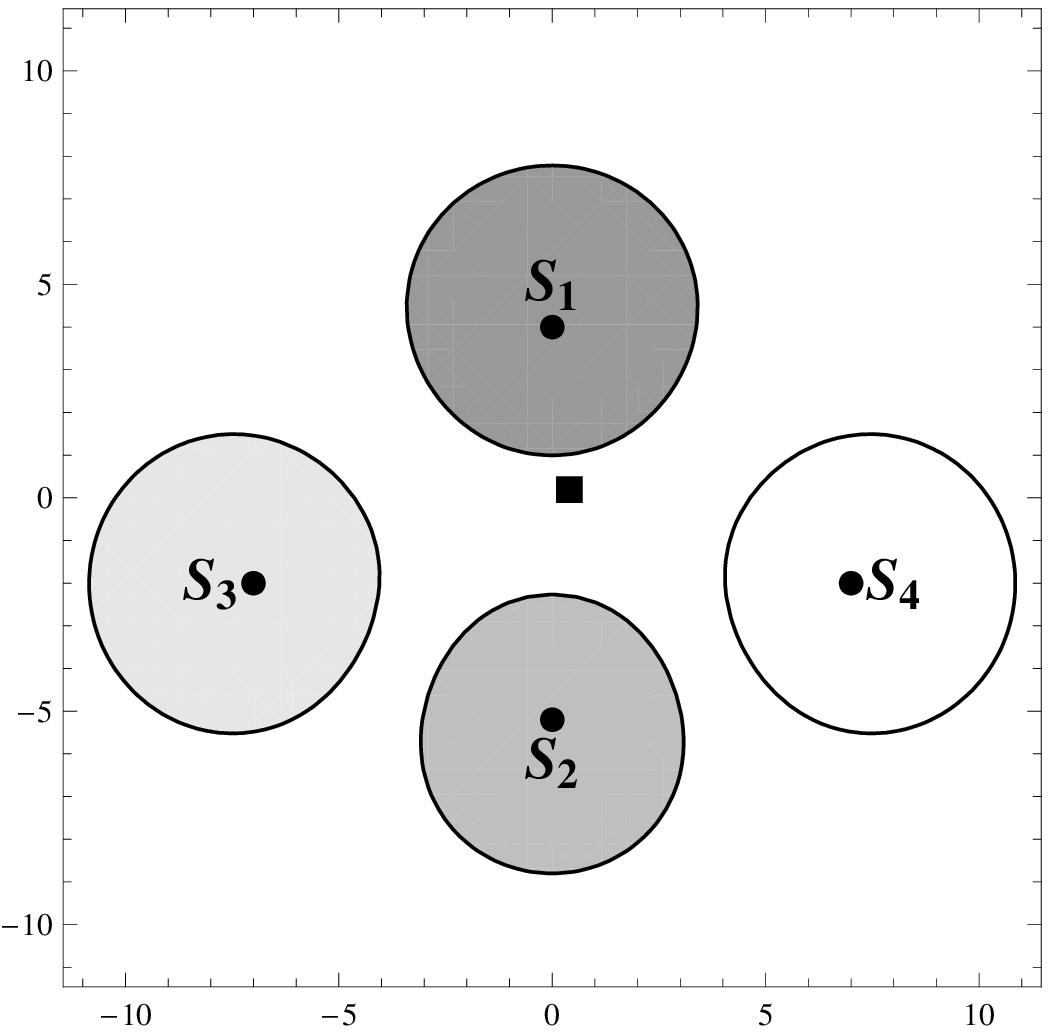}
\end{minipage} \\
\small (A) - UDG Diagram & \small (B) - SINR Diagram
\end{tabular}
\caption{Cumulative interference in the UDG and SINR models. 
(A) UDG diagram: $p$ can hear $s_1$.
(B) SINR diagram: the cumulative 
interference of stations $s_2, s_3, s_4$ prevents $p$ from hearing $s_1$.}
\label{fig:sinr_udg_3}
\end{figure}
} 
\LongVersion 
\FigureSinrUdgThree{}
\LongVersionEnd 
Conversely, a simultaneous transmission by two or more neighbors should not
always end in collision and loss of the message; in reality this depends
on other factors, such as the relative distances and the relative strength
of the transmissions.
\LongVersion 
We illustrate some of these scenarios in 
Figures \ref{fig:sinr_udg_1}, \ref{fig:sinr_udg_2}, where we compare 
the reception zones of the UDG and SINR models with four 
transmitting stations $s_1, s_2, s_3, s_4$ and one receiver $p$ 
(represented as a solid black square). 
In \Figure{}~\ref{fig:sinr_udg_1} only station $s_1$ transmits,
and all others remain silent, so the diagrams are the same 
and $p$ can hear $s_1$ in both models.
Figure \ref{fig:sinr_udg_2} illustrates the next three steps of gradually
adding $s_2, s_3$ and $s_4$ to the transmitting set.
When both $s_1, s_2$ 
transmit simultaneously, $p$ cannot hear any station in the UDG model, 
but it does hear $s_1$ in the SINR model (cases (A) and (B) respectively). 
\LongVersionEnd 
Hence in this case the UDG model yields a ``false negative'' indication.
\LongVersion 
When $s_3$ joins the transmitting stations, $p$ still cannot hear any station 
in the UDG model, but now it can hear station $s_3$ in the SINR model 
(cases (C) and (D)).
In step 4, when $s_4$ starts to transmit as well, 
the effect varies again across the two models (cases (E) and (F)).
\LongVersionEnd 

\newcommand{\FigureSinrUdgOne}{
\begin{figure}[htbp]
\centering
\begin{tabular}{cc}
\begin{minipage}{2.2in}
\includegraphics[width=2.2in]{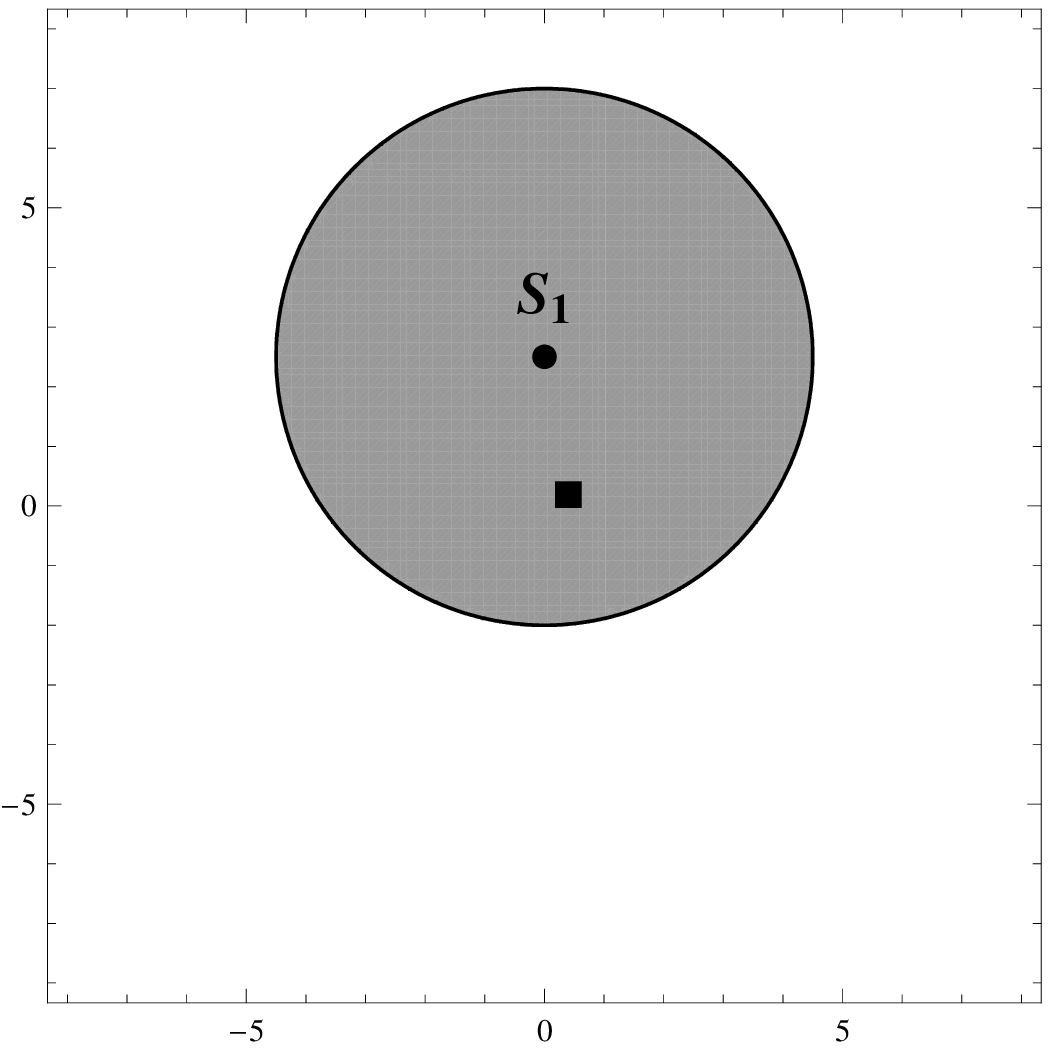}
\end{minipage}&
\begin{minipage}{2.2in}
\includegraphics[width=2.2in]{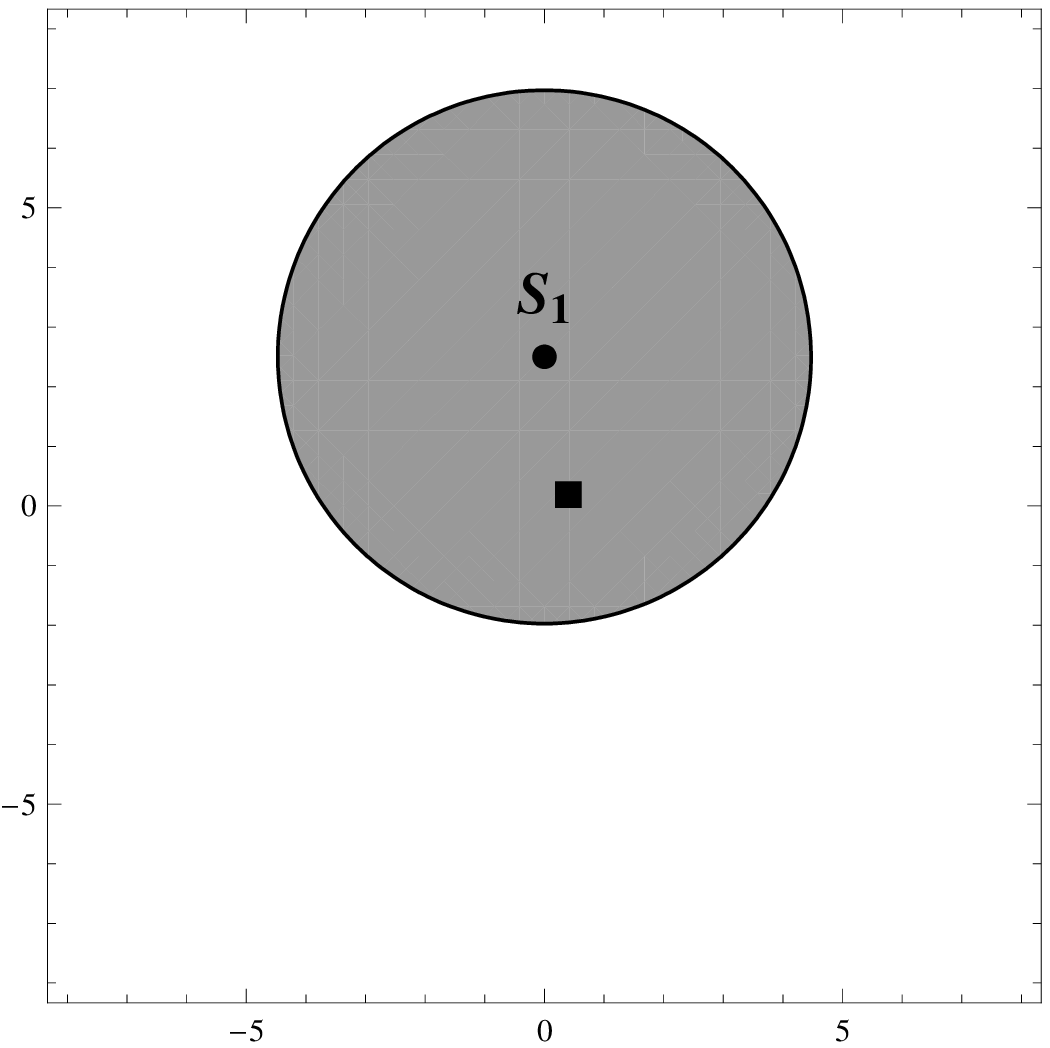}
\end{minipage}\\
\small (A) - UDG Diagram & \small (B) - SINR Diagram
\end{tabular}
\caption{Reception zones in the UDG and SINR models. 
In step 1 only $s_1$ transmits, so the reception zones are the same. 
}
\label{fig:sinr_udg_1}
\end{figure}
} 
\LongVersion 
\FigureSinrUdgOne{}
\LongVersionEnd 

\newcommand{\FigureSinrUdgTwo}{
\begin{figure}[htbp]
\centering
\begin{tabular}{cc}
\begin{minipage}{2.2in}
\includegraphics[width=2.2in]{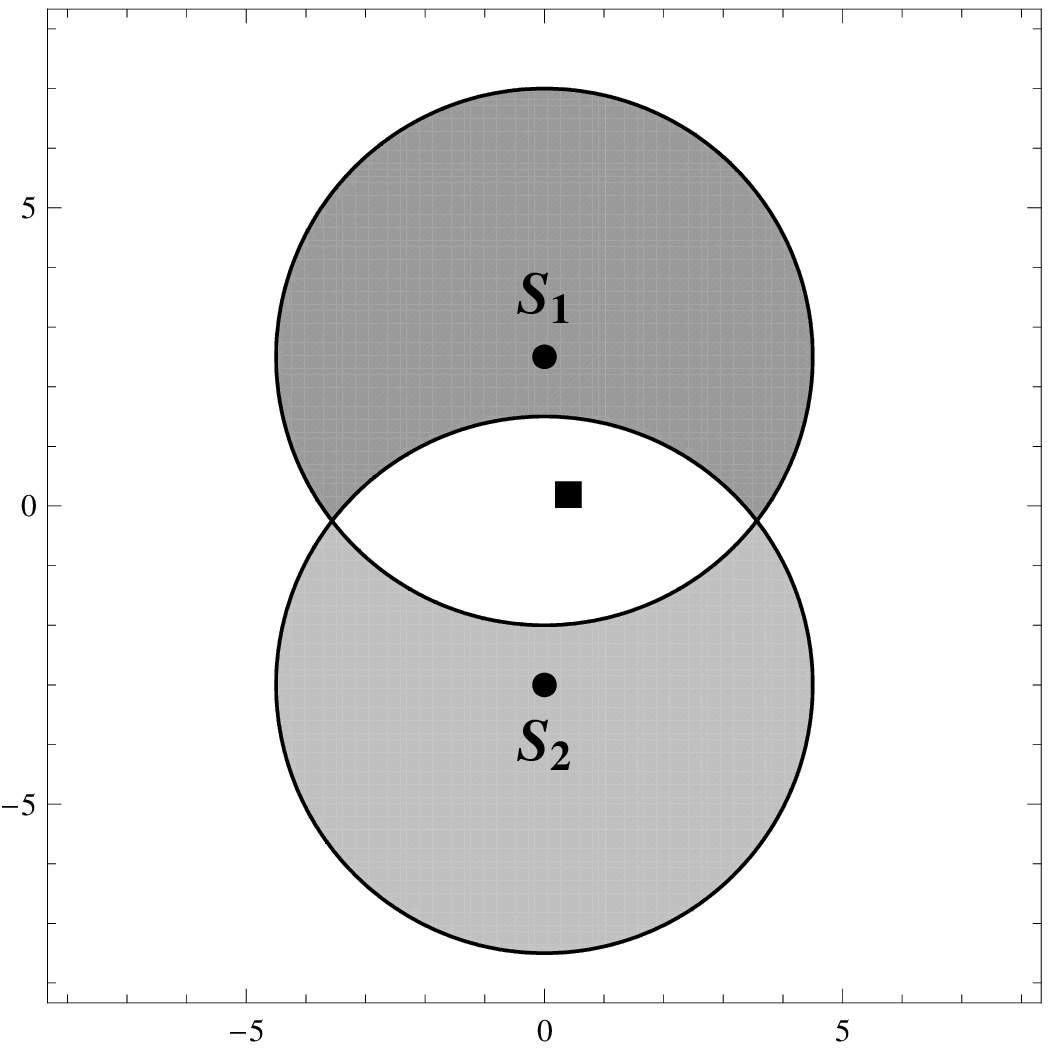}
\end{minipage}&
\begin{minipage}{2.2in}
\includegraphics[width=2.2in]{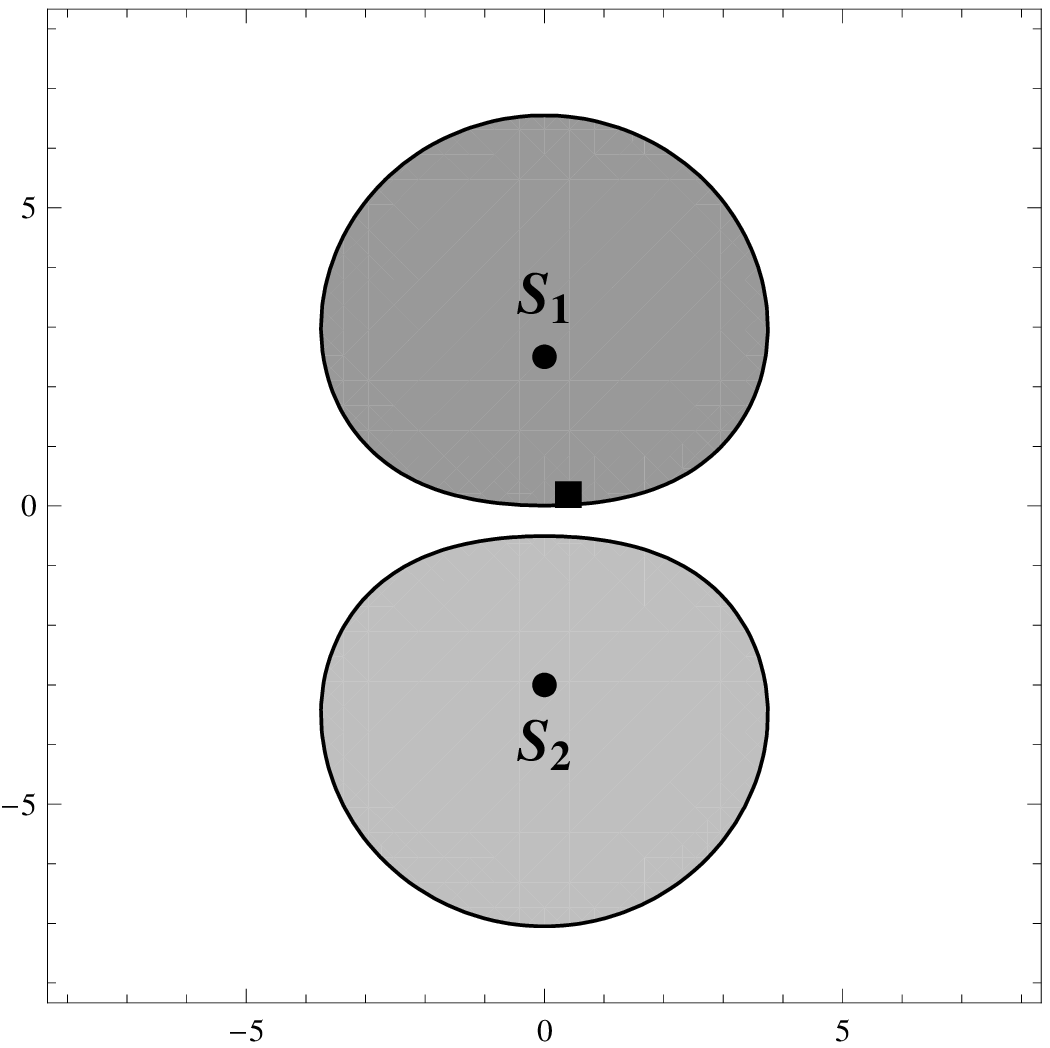}
\end{minipage}\\
\small (A) - UDG Diagram & \small (B) - SINR Diagram\\
\begin{minipage}{2.2in}
\includegraphics[width=2.2in]{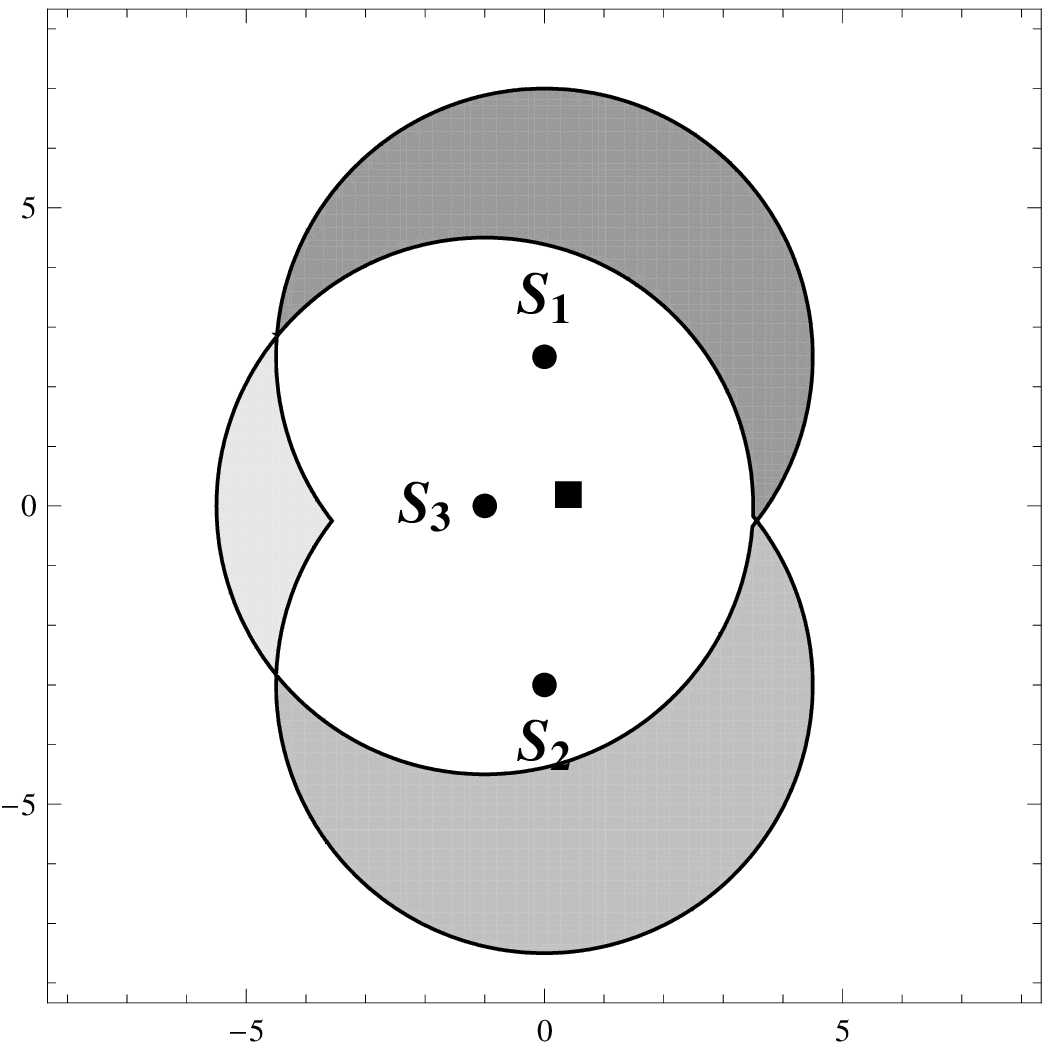}
\end{minipage}&
\begin{minipage}{2.2in}
\includegraphics[width=2.2in]{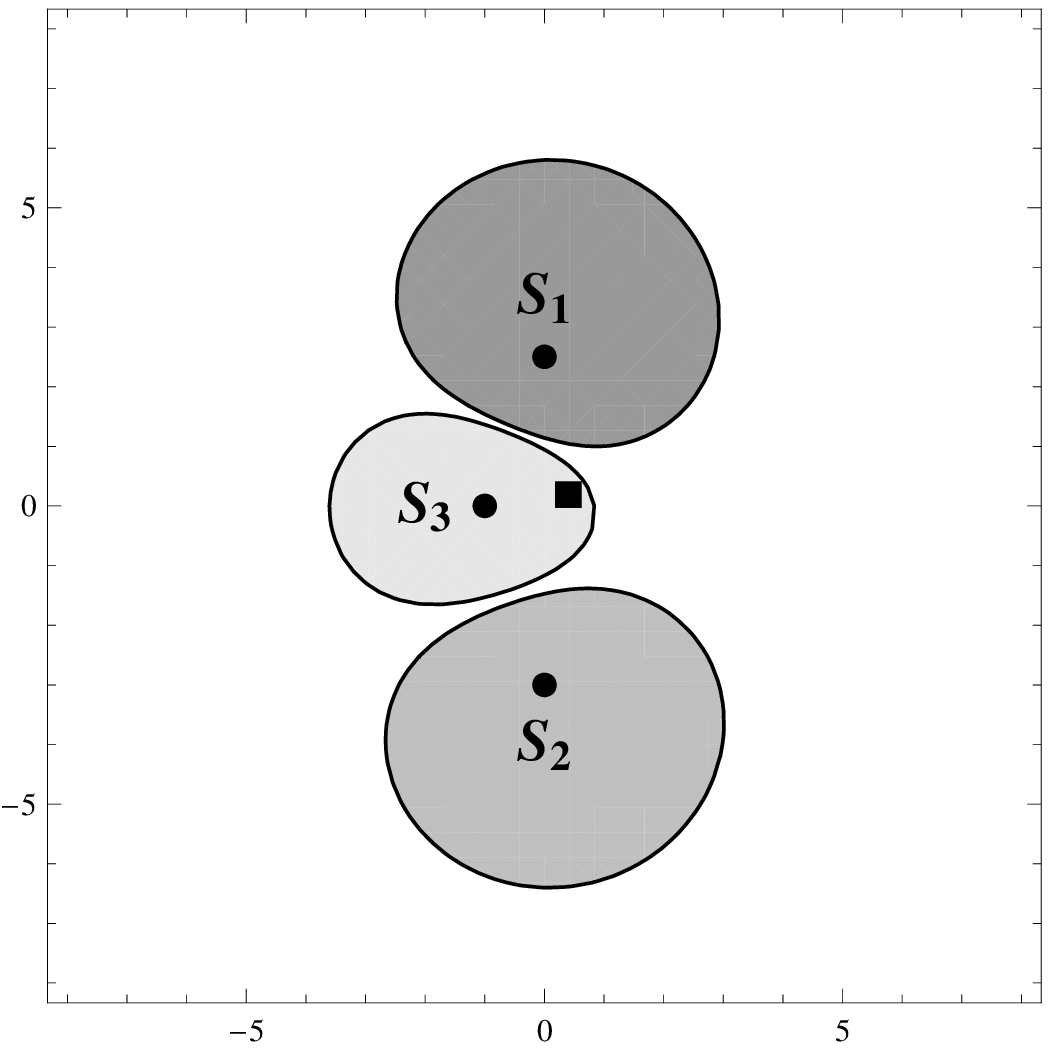}
\end{minipage} \\
\small (C) - UDG Diagram & \small (D) - SINR Diagram \\
\begin{minipage}{2.2in}
\includegraphics[width=2.2in]{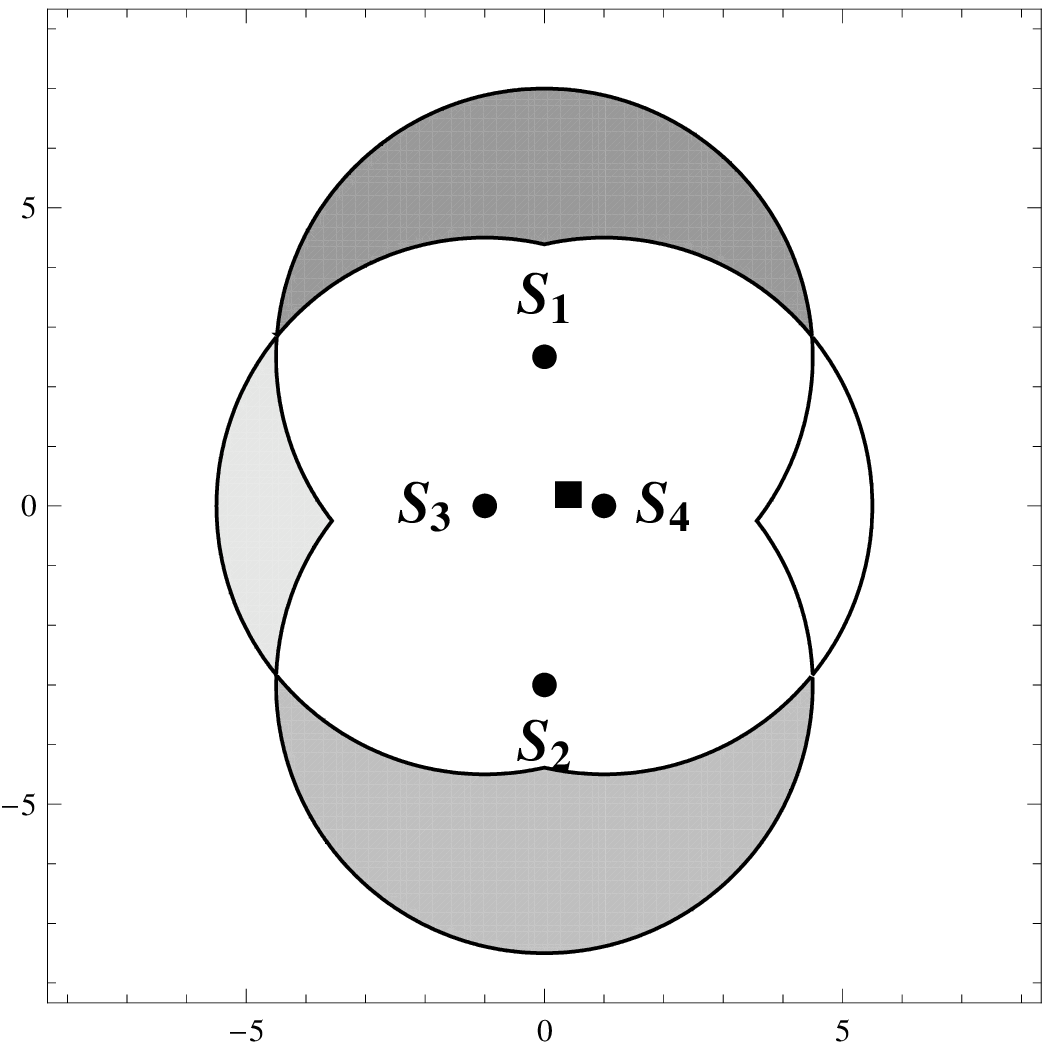}
\end{minipage}&
\begin{minipage}{2.2in}
\includegraphics[width=2.2in]{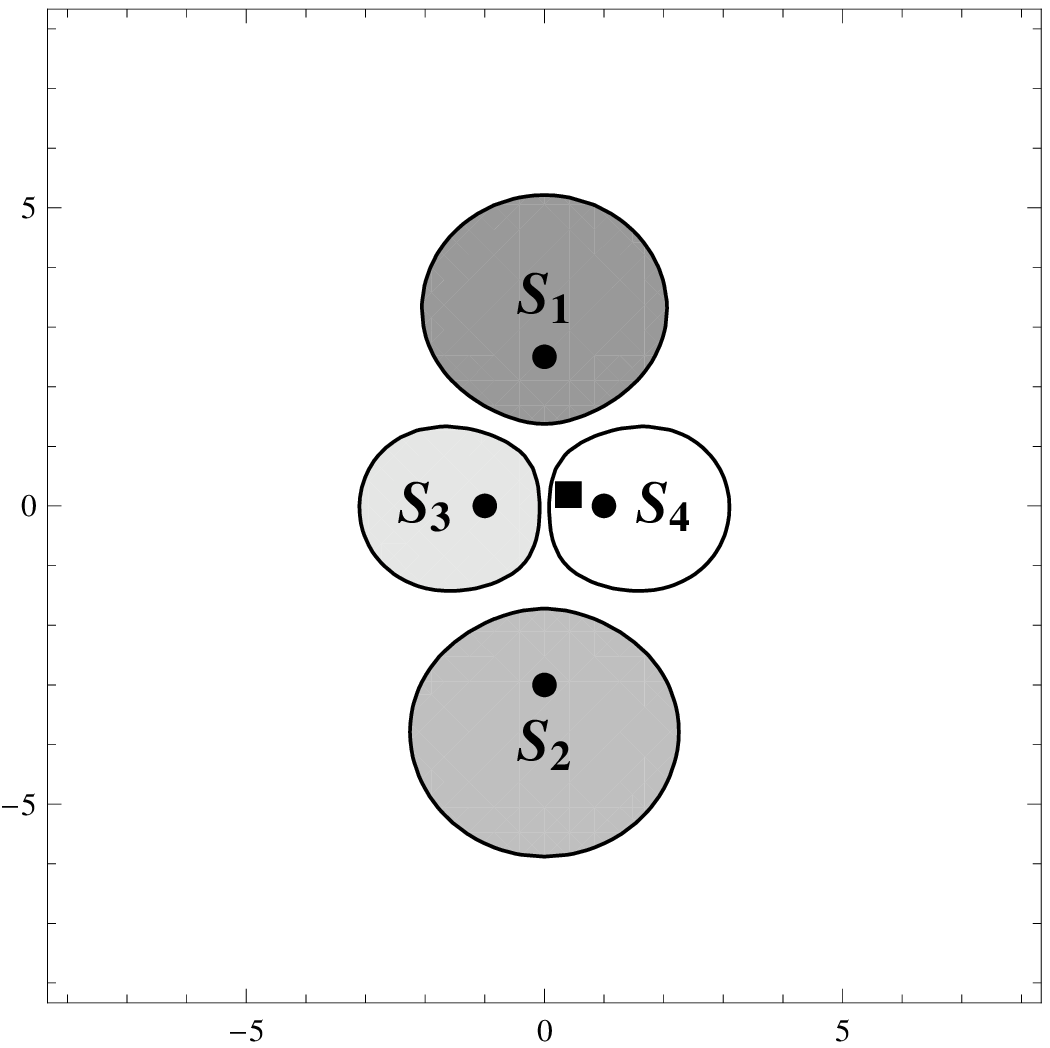}
\end{minipage}\\
\small (E) - UDG Diagram & \small (F) - SINR Diagram\\
\end{tabular}
\caption{Reception zones in the UDG and SINR models.
Steps 2-4 add stations
$s_2, s_3, s_4$, one at a time. 
(A)-(B): adding $s_2$.
(C)-(D): adding $s_3$
(E)-(F): adding $s_4$}
\label{fig:sinr_udg_2}
\end{figure}
} 
\LongVersion 
\FigureSinrUdgTwo{}
\LongVersionEnd 

\Comment{
An additional difference between the models that can be illustrated
by \Figure{}~\ref{fig:sinr_udg_2} concerns frequency allocation.
Suppose we wish to cover the maximum possible area by the four stations,
namely, the union of the individual reception zones 
(reachable by each station when transmitting alone), 
using the minimum number of frequencies.
This can be done in the UDG model using three frequencies, 
letting $s_1$ and $s_2$ use the same frequency and $s_3$ and $s_4$ use 
two additional frequencies.
In contrast, in the SINR model one must use 
four distinct frequencies.
In general, in the presence of $n$ stations, 
it is known that $O(\log n)$ frequencies are sufficient for attaining maximum 
in the UDG model, while in the SINR model, $n$ frequencies are necessary.   
} 

In summary, while the existing body of literature on models and algorithms 
for wireless networks represents a significant base containing
a rich collection of tools and techniques, the state of affairs 
described above leaves us in the unfortunate situation 
where the more practical graph-based models (such as the UDG model) 
are not sufficiently accurate, and the more accurate SINR model is not
well-understood and therefore difficult for protocol designers.
Hence obtaining a better understanding of the SINR model, and consequently
bridging the gap between this physical model and the graph based models
may have potentially significant (theoretical and practical) implications.
This goal is the central motivation behind the current paper.

\ShortVersion 
\dnsparagraph{Related work.}
\ShortVersionEnd 
\LongVersion 
\subsection{Related work}
\LongVersionEnd 
\ShortVersion 
Some recent studies aim at achieving a better understanding of the SINR model.
In their seminal work~\cite{GuKu00}, Gupta and Kumar analyzed the capacity of
wireless networks in the physical and protocol models.
Moscibroda~\cite{Mo07} analyzed the worst-case capacity of wireless networks,
making no assumptions on the deployment of nodes in the plane, as opposed to
almost all the previous work on this problem.
\ShortVersionEnd 
\LongVersion 
Some recent studies aim at achieving a better understanding of the SINR model.
In particular, in their seminal work~\cite{GuKu00}, Gupta and Kumar analyzed
the capacity of wireless networks in the physical and protocol models.
Moscibroda~\cite{Mo07} analyzed the worst-case capacity of wireless networks,
making no assumptions on the deployment of nodes in the plane,
as opposed to almost all the previous work on wireless network capacity.
\LongVersionEnd 

Thought provoking experimental results presented in \cite{MWW06} show that 
even basic wireless stations can achieve communication patterns that 
are impossible in graph-based models. 
Moreover, the paper presents certain situations in which it is possible 
to apply routing / transport schemes that may break the theoretical 
throughput limits of any protocol which obeys the laws 
of a graph-based model. 
\Comment{
For example, Figure \ref{fig:non-convex-SINR} (which extends to two dimensions 
a 1-dimensional example of \cite{MWW06}) illustrate a possible setting 
in the SINR model, in which $p_1$ can hear $s_1$ and $p_2$ can hear $s_2$ 
simultaneously.
In contrast, in the UDG model there can be no assignment
of transmission powers that will achieve this effect.
} 

Another line of research, in which known results from the UDG model 
are analyzed under the SINR model, includes \cite{MoWaZo06}, 
which studies the problem of topology control in the SINR model, 
and \cite{GoOsWa07}, where impossibility results were proven in the SINR model 
for scheduling.

\Comment{
Transmission scheduling (so as to prevent collisions) is a central issue 
in wireless communication theory.
The scheduling complexity of basic network structures, namely 
strongly connected networks, is studied in \cite{MW06,MoWaZo06}. 
It is shown that adjusting the transmission power gives 
an exponential advantage over uniform or linear power assignment schemes. 
This gives an interesting complement to the more pessimistic bounds 
for the capacity in wireless networks \cite{GuKu00}. 
A measure called disturbance, capturing the intrinsic difficulty 
of finding a short schedule for a problem instance, is defined in \cite{MOW07},
along with an algorithm that achieves provably efficient performance in any 
network and request setting that exhibits a low disturbance. 
For the special case of many-to-one communication with data aggregation 
in relaying nodes, a scaling law describing the achievable rate 
in arbitrarily deployed sensor networks is derived in \cite{Mo07}. 
It is shown that for a large number 
of aggregation functions, a sustainable rate of $1/\log^2 n$ can be achieved.

In the context of routing, the problem of constructing end-to-end schedules 
for a given set of routing requests such that the delay is minimized
is studied in \cite{CKMPS07}.
That is, each node is assigned 
a distinct power level, the paths for all requests are determined, 
and all message transmissions are scheduled to guarantee 
successful reception in the SINR model. 
In this setting, \cite{CKMPS07} presents a polynomial-time algorithm 
with provable worst-case performance for the problem. 
} 

More elaborate graph-based models may employ two separate graphs, 
a connectivity graph $G_c=(S,E_c)$ and an interference graph $G_i=(S,E_i)$,
such that a station $s$ will successfully receive a message transmitted 
by a station $s'$ if and only if $s$ and $s'$ are neighbors in the 
connectivity graph $G_c$ and $s$ does not have a concurrently transmitting 
neighbor in the interference graph $G_i$.
Protocol designers often consider special cases of this more general model. 
For example, it is sometimes assumed that $G_i$ is $G_c$ 
augmented with all edges between 2-hop neighbors in $G_c$.
Similarly, a variant of the UDG model handling transmissions and interference
separately, named the {\em Quasi Unit Disk Graph (Q-UDG)} model, 
was introduced in \cite{KuWaZo03}.
In this model, two concentric circles are associated with each station, 
the smaller representing its reception zone and the larger representing
its area of interference.
An alternative interference model, also based on the UDG model, 
is proposed in \cite{RSWZ05}.

\ShortVersion 
\dnsparagraph{Our results.}
\ShortVersionEnd 
\LongVersion 
\subsection{Our results}
\label{section:OurResults}
\LongVersionEnd 
\ShortVersion 
A fundamental issue in wireless network modeling involves characterizing the
reception zones of the stations and constructing the reception diagram.
The current paper aims at gaining a better understanding of this issue in the
SINR model,
\ShortVersionEnd 
\LongVersion 
As mentioned earlier, a fundamental issue in wireless network modeling
involves characterizing the reception zones of the stations
and constructing the reception diagram.
The current paper aims at gaining a better understanding of this issue 
in the SINR model, 
\LongVersionEnd 
and as a consequence, deriving some algorithmic results. 
In particular, we consider the structure of 
\LongVersion 
reception zones in 
\LongVersionEnd 
SINR diagrams corresponding to uniform power networks with path-loss
parameter \( \alpha = 2 \) and examine two specific properties of interest,
namely, the {\em convexity} and {\em fatness}\footnote{
The notion of fatness has received a number of non-equivalent technical
definitions, all aiming at capturing the same intuition, namely, absence of
long, skinny or twisted parts.
In this paper we say that the reception zone of station \(\Station_i\) is
\emph{fat} if the ratio between the radii of the smallest ball centered at
\(\Station_i\) that completely contains the zone and the largest ball centered
at \(\Station_i\) that is completely contained by it is bounded by some
constant.
Refer to
\LongVersion 
Section~\ref{section:GeometricNotions}
\LongVersionEnd 
\ShortVersion 
Section~\ref{section:Preliminaries}
\ShortVersionEnd 
for a formal definition.
} of the reception zones.
Apart from their theoretical interest, these properties are also of
considerable practical significance, as obviously, having reception zones
that are non-convex, or whose shape is arbitrarily skewed, twisted or skinny, 
might complicate the development of protocols for various design 
and communication tasks. 

Our first result, which turns out to be surprisingly less trivial than we may
have expected, is cast in Theorem~\ref{gtheorem:Convexity}.
This theorem is proved in
\LongVersion 
Section~\ref{section:Convex}
\LongVersionEnd 
\ShortVersion 
Section~\ref{section:ConvexFatness}
\ShortVersionEnd 
by a complex analysis of the polynomials defining the reception zones, based
on combining several observations with Sturm's condition for counting real
roots.

\begin{GlobalTheorem} \label{gtheorem:Convexity}
The reception zones in an SINR diagram of a uniform power network with
path-loss parameter \( \alpha = 2 \) and reception threshold \( \beta > 1
\) are convex.
\end{GlobalTheorem}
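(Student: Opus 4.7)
My plan is to establish convexity of $\ReceptionZone_i$ via the standard reduction to one dimension: show that $\ReceptionZone_i \cap \ell$ is a (possibly empty) closed interval for every line $\ell$. Parametrize $\ell$ by $q(t) = q_0 + tv$ with a unit direction $v$, and set $a_j(t) = \Norm{q(t) - \Station_j}^2$; each $a_j$ is a \emph{monic} strictly positive quadratic in $t$ (the degenerate case in which $\ell$ passes through some transmitter can be handled by a perturbation/continuity argument). Since $\alpha = 2$ and every transmitter uses the same power $\Power$, clearing denominators in the $\SINR$ inequality turns $q(t) \in \ReceptionZone_i$ into the polynomial inequality $P(t) \geq 0$, where
\[
P(t) \;=\; \Power \prod_{j \neq i} a_j(t) \;-\; \beta \Noise\, a_i(t)\prod_{j \neq i} a_j(t) \;-\; \beta \Power\, a_i(t) \sum_{j \neq i} \prod_{k \neq i, j} a_k(t).
\]
A quick degree count shows that $P$ has negative leading coefficient (the middle term dominates when $\Noise > 0$, and the other two combine to give negative leading coefficient when $\Noise = 0$, using $\beta > 1$), so $P(t) \to -\infty$ as $|t| \to \infty$ and $\{t : P(t) \geq 0\}$ is bounded. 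The task reduces to showing that this set is an interval, i.e.\ that $P$ exhibits at most two real sign changes on $\mathbb{R}$.

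The next step exploits the hypothesis $\beta > 1$. If $p \in \ReceptionZone_i$, then the $\SINR$ inequality implies, for each \emph{individual} $j \neq i$, that $\Power/a_i \geq \beta \Power/a_j$ — just by keeping one term of the interference sum and dropping everything else — which gives $a_j \geq \beta a_i$. Geometrically, $\ReceptionZone_i$ lies inside the ``Apollonius region''
\[
\cA_i \;=\; \bigcap_{j \neq i} \bigl\{p : \Norm{p - \Station_j}^2 \geq \beta \Norm{p - \Station_i}^2\bigr\},
\]
an intersection of Apollonius disks, each of which is convex exactly because $\beta > 1$, and all of which contain $\Station_i$. Hence $\ReceptionZone_i \cap \ell \subseteq \cA_i \cap \ell = [t_1, t_2]$ for some bounded interval, and it suffices to bound the number of real sign changes of $P$ inside $[t_1, t_2]$.

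The technical core is to apply Sturm's theorem to the chain $P, P', R_1, R_2, \ldots$ and show that its sign-variation count at $t_1$ exceeds that at $t_2$ by at most two. Doing this will require several structural observations about the Sturm chain: first, each $a_j$ is a monic quadratic with no real roots, so the successive remainders inherit a clean factorization structure; second, the uniformity of the transmit powers symmetrizes many of the cross-terms of $P$ and cancels potential sign transitions; third, the inequalities $a_j \geq \beta a_i$ that are valid throughout $[t_1, t_2]$ pin down the signs of the remainders at the endpoints $t_1, t_2$. Combining these observations, the Sturm count at each endpoint should be forced into a pattern admitting at most two transitions.

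The main obstacle is precisely this last step: since $\deg P = 2n$ grows with the number of stations, a naive Sturm computation yields only the trivial bound $2n$. The crux of the proof must be an algebraic manipulation that collapses this count to $2$ uniformly in $n$ — I would anticipate either identifying a decomposition of $P$ of the form ``a strictly concave quadratic plus a controlled remainder'' (for which sign changes are transparent), or carrying out an induction on the number of interferers, showing that adding one more station preserves the ``at most two sign changes'' invariant when one stays inside the Apollonius region. A secondary, minor issue is the careful handling of degenerate configurations (lines passing through some $\Station_j$, or polynomials with multiple roots), which I would dispose of by a standard perturbation argument extending the generic conclusion to all configurations.
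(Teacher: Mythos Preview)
Your setup is sound and matches the paper's: reduce to lines, study the sign changes of the resulting polynomial, and aim for at most two. You also correctly flag that a direct Sturm computation on the degree-$2n$ polynomial yields only the trivial bound, and you speculate about an induction on the number of interferers. That speculation points in the right direction, but the proposal stops short of the ideas that actually close the argument, so as written there is a genuine gap at the ``technical core''.

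In the paper, Sturm's theorem is applied only to the \emph{three-station}, noise-free, $\beta=1$ case, where the polynomial has degree $4$; after normalizing coordinates it takes the shape $((z+\bar r)^2+1)^2 - (\gamma z^2+\delta)$ with $\gamma>0$, $\delta\le 0$, and one verifies $\Var(-\infty)-\Var(\infty)\le 2$ by inspecting the first three terms of the Sturm chain. The general $n$-station case is then handled by an induction that runs \emph{downward}: given $p_1,p_2\in\ReceptionZone_0$ in an $(n{+}1)$-station network, one replaces two interferers $\Station_1,\Station_2$ by a single virtual station $\Station^{*}$ chosen so that $\Energy(\Station^{*},p_i)=\Energy(\{\Station_1,\Station_2\},p_i)$ for $i=1,2$ while $\Energy(\Station^{*},q)\ge\Energy(\{\Station_1,\Station_2\},q)$ for every $q\in\Segment{p_1}{p_2}$. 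The existence of such a $\Station^{*}$ (placed on the intersection of two suitable circles) is itself an application of the three-station convexity lemma, now with $\Station^{*}$ in the role of $\Station_0$ and threshold $1$. Background noise is absorbed by one more virtual station. None of these mechanisms appears in your plan: your proposed induction ``adding one more station preserves the invariant'' goes in the wrong direction and gives no handle on the new sign changes, whereas the paper's merging step \emph{removes} a station while only raising interference on the segment, which is exactly what lets the inductive hypothesis apply. Your Apollonius-region containment is correct but is not the lever that collapses the Sturm count.
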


Note that our convexity proof still holds when \( \beta = 1 \).
In contrast, when \( \beta < 1 \), the reception zones of a uniform power
network are not necessarily convex.
This phenomenon is illustrated in (the numerically generated)
\Figure{}~\ref{figure:NonConvex}.
\newcommand{\FigureNonConvex}{
\begin{figure}
\centering
\LongVersion 
\includegraphics[width=0.4\textwidth]{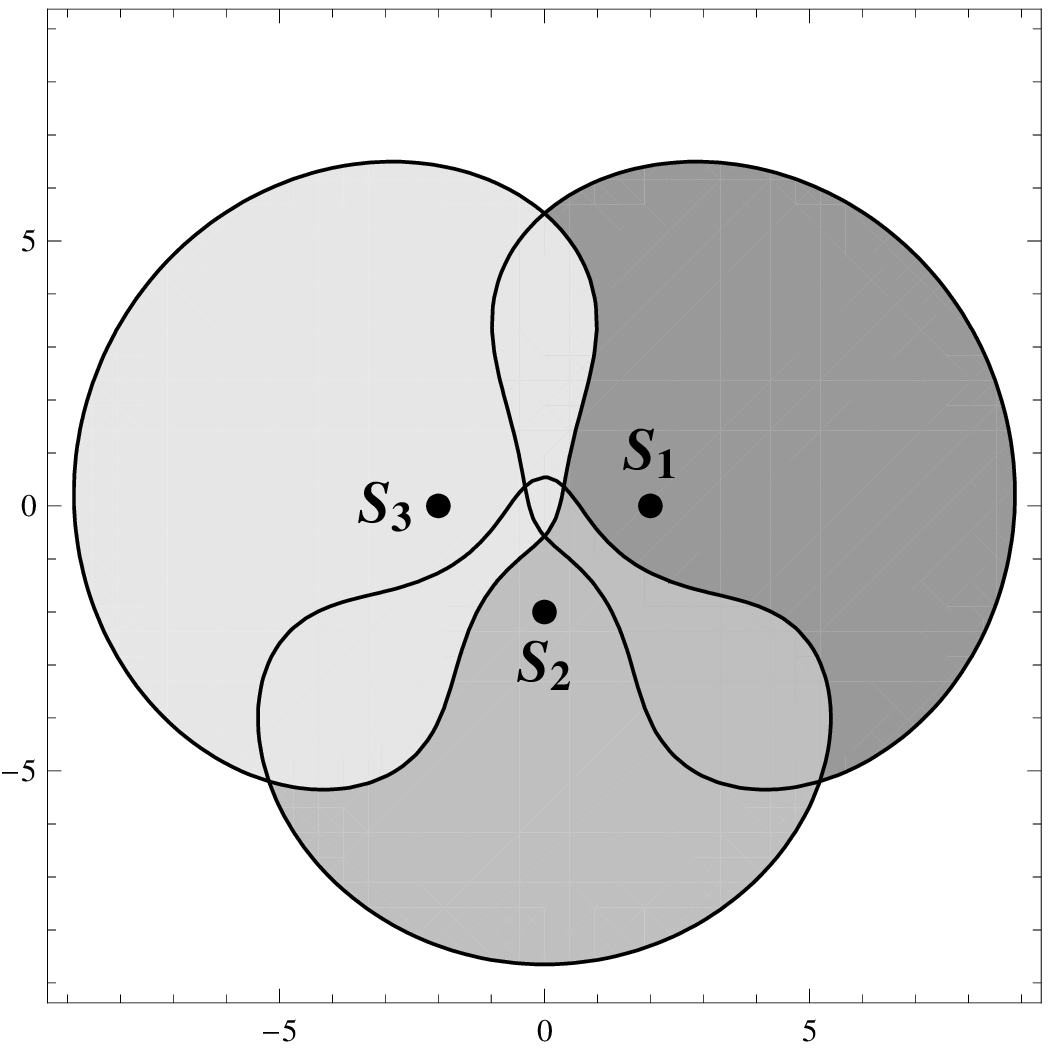}
\LongVersionEnd 
\ShortVersion 
\includegraphics[width=0.22\textwidth]{figs/non-convex}
\ShortVersionEnd 
\caption{\label{figure:NonConvex}
A uniform power network with path-loss parameter \( \alpha = 2 \), reception
threshold \( \beta = 0.3 \), and background noise \( \Noise = 0.05 \).
The reception zones are clearly non-convex.
}
\end{figure}
} 
\LongVersion 
\FigureNonConvex{}
\LongVersionEnd 
\ShortVersion 
We then establish an additional attractive property of the reception zones
(also proved in Section~\ref{section:ConvexFatness}), which in a certain sense
lends support to the model of {\em Quasi Unit Disk Graphs} suggested by Kuhn
et al. in~\cite{KuWaZo03}.
\ShortVersionEnd 
\LongVersion 
We then establish an additional attractive property of the reception zones.
\LongVersionEnd 

\begin{GlobalTheorem} \label{gtheorem:Fatness}
The reception zones in an SINR diagram of a uniform power network with
path-loss parameter \( \alpha = 2 \) and reception threshold \( \beta > 1 \)
are fat.
\end{GlobalTheorem}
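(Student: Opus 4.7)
The plan is to establish fatness by deriving a direction-independent upper bound on $\tilde{\LargeRadius}_i$ and a direction-independent lower bound on $\tilde{\SmallRadius}_i$, both expressed in terms of $\MinDist_i := \min_{j \neq i}\Norm{s_i - s_j}$, the distance from $s_i$ to its nearest interfering station, and the network parameters $\beta$, $\mu := N/\psi$, and $n$.

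First I would establish the upper bound on $\tilde{\LargeRadius}_i$. Let $p_0$ be a boundary point of $\ReceptionZone_i$ attaining distance $\tilde{\LargeRadius}_i$ from $s_i$, and let $s_{j^*}$ be the station at distance $\MinDist_i$. Since $\SINR_i(p_0)=\beta$, the signal strength must exceed $\beta$ times the contribution of $s_{j^*}$ alone, yielding $\psi/\tilde{\LargeRadius}_i^{\,2} \ge \beta\,\psi/\Norm{p_0 - s_{j^*}}^2$. Triangle inequality gives $\Norm{p_0 - s_{j^*}} \le \tilde{\LargeRadius}_i + \MinDist_i$, so $\sqrt{\beta}\,\tilde{\LargeRadius}_i \le \tilde{\LargeRadius}_i + \MinDist_i$, hence
\[
\tilde{\LargeRadius}_i \;\le\; \frac{\MinDist_i}{\sqrt{\beta}-1}.
\]
This is where the hypothesis $\beta > 1$ is critically used.

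Next I would prove the lower bound on $\tilde{\SmallRadius}_i$ by showing that, for any sufficiently small $t$, the entire ball $\Ball(s_i, t) \subseteq \ReceptionZone_i$. Fix any $p$ with $\Norm{p - s_i} = t \le \MinDist_i/2$. Triangle inequality gives $\Norm{p - s_j} \ge \Norm{s_i - s_j} - t \ge \Norm{s_i - s_j}/2$ for every $j \neq i$, so the scaled interference satisfies
\[
\sum_{j \neq i} \frac{1}{\Norm{p - s_j}^2} \;\le\; 4 \sum_{j \neq i} \frac{1}{\Norm{s_i - s_j}^2} \;\le\; \frac{4n}{\MinDist_i^{\,2}}.
\]
Substituting into the SINR formula gives $\SINR_i(p) \ge (1/t^2)/(\mu + 4n/\MinDist_i^{\,2})$, which is at least $\beta$ whenever $t \le (\beta\mu + 4\beta n/\MinDist_i^{\,2})^{-1/2}$. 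Therefore
\[
\tilde{\SmallRadius}_i \;\ge\; \min\!\left\{\tfrac{\MinDist_i}{2},\; (\beta\mu + 4\beta n/\MinDist_i^{\,2})^{-1/2}\right\}.
\]
Combining the two bounds, the ratio $\tilde{\LargeRadius}_i/\tilde{\SmallRadius}_i$ is bounded by a constant depending only on $\beta$, $\mu$, and $n$, proving fatness.

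The main obstacle is that each of these bounds is loose: the upper bound on $\tilde{\LargeRadius}_i$ uses only the single nearest interferer (and is too weak when many stations are clumped), while the lower bound on $\tilde{\SmallRadius}_i$ employs a worst-case union bound over all $n$ interferers (and is too weak when most stations are far away). Extracting an $n$-independent fatness constant—if indeed one exists—would require a more delicate direction-sensitive argument that trades off these two estimates against each other, possibly leveraging the already-established convexity from Theorem~\ref{gtheorem:Convexity} to transport information between the closest and farthest boundary points. For the mere qualitative claim of fatness, however, the two crude inequalities above are enough.
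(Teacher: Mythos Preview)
Your argument does not establish the theorem. In this paper, ``fat'' means that the fatness parameter $\FatnessParameter(\Station_i, \ReceptionZone_i) = \LargeRadius/\SmallRadius$ is bounded by an \emph{absolute} constant---one depending only on $\beta$, not on the number of stations $n$, the noise $\Noise$, or the geometry of the instance. Your two estimates combine to give at best $\FatnessParameter = O(\sqrt{n})$ (and in fact the ratio you obtain also grows with $\Noise\,\MinDist_i^{2}$, so the claimed dependence ``only on $\beta$, $\mu$, and $n$'' is not quite accurate either). The paper makes exactly this distinction explicit: it first derives bounds essentially equivalent to yours (Theorem~\ref{theorem:ExplicitBounds}), observes that they yield only $\FatnessParameter = O(\sqrt{n})$, and then states that this does \emph{not} yet establish Theorem~\ref{gtheorem:Fatness}---the theorem is established only by the sharper $n$-free bound $\FatnessParameter \le (\sqrt{\beta}+1)/(\sqrt{\beta}-1)$ of Theorem~\ref{theorem:BoundingBoundedBalls}. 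So your closing sentence has it backwards: the ``mere qualitative claim of fatness'' \emph{is} the $n$-independent bound, and your crude inequalities do not reach it.

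The paper's route to the $O(1)$ bound is a two-stage reduction rather than a sharpening of the direction-free estimates. First, for a \emph{positive colinear} network ($\Station_0$ at the origin, all interferers on the positive $x$-axis) it shows that $\SmallRadius = \RightMost$ and $\LargeRadius = -\LeftMost$, and then collapses all interferers (and the noise) to a single station at the nearest interferer's location with powers chosen so that the interference at $(\RightMost,0)$ is preserved; this reduces to a two-station one-dimensional instance, where a direct quadratic calculation gives $\LargeRadius/\SmallRadius \le (\sqrt{\beta}+1)/(\sqrt{\beta}-1)$. Second, for an arbitrary \UPN{}, it rotates every interferer about the farthest boundary point $q = (-\LargeRadius, 0)$ onto the positive $x$-axis, producing a positive colinear network with the same $\LargeRadius$ and no larger $\SmallRadius$. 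You correctly anticipated that a ``direction-sensitive argument'' coupling the nearest and farthest boundary points would be needed; this rotation about $q$ is precisely that coupling.
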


\LongVersion 
Theorem~\ref{gtheorem:Fatness} is proved in Section~\ref{section:Fatness}.
In a certain sense, this result lends support to the model of {\em Quasi Unit
Disk Graphs} suggested by Kuhn et al. in~\cite{KuWaZo03}.
\LongVersionEnd 

Armed with this characterization of the reception zones, we turn to a basic
algorithmic task closely related to SINR diagrams, namely, answering {\em
point location queries}.
We address the following natural question: given a point in the
plane, which reception zone contains this point (if any)?
For UDG, this problem can be dealt with using known techniques
(cf. \cite{AE99,AHL90}).
For arbitrary (non-unit) disk graphs, the problem is already harder, 
as the direct reduction to the technique of \cite{AHL90} no longer works.
In the SINR model the problem becomes even harder. 
A naive solution will require computing the signal to interference \& noise 
ratio for each station, yielding time $O(n^2)$.
A more efficient ($O(n)$ time) 
querying algorithm can be based, for example, on the observation that there is
a unique candidate \( \Station_i \in S \) whose transmission may be received
at $p$, namely, the one whose Voronoi cell contains $p$ in the Voronoi diagram
defined for $S$.
However, it is not known if a sublinear query time can be obtained.
This problem can in fact be thought of as part of a more general one, namely, 
point location over a general set of objects defined by polynomials
and satisfying some ``niceness'' properties.
Previous work on the problem dealt with Tarski cells, namely, objects whose
boundaries are defined by a constant number of polynomials of constant degree
\cite{CEGS91,AE99}.
In contrast, SINR reception zones are defined by polynomials of degree
proportional to \(n\).

Consider the SINR diagram of a uniform power network with path-loss parameter
\( \alpha = 2 \) and reception threshold \( \beta > 1 \) and fix some
performance parameter \( 0 < \epsilon < 1 \).
The following theorem is proved in
Section~\ref{section:ApproximatePointLocationQueries}
(refer to \Figure{}~\ref{figure:ApproxQuery} for illustration).

\begin{GlobalTheorem} \label{gtheorem:ApproximatePointLocationQueries}
A data structure \DataStructure{} of size \( O (n \epsilon^{-1}) \) is
constructed in \( O (n^{3} \epsilon^{-1}) \) preprocessing time.
This data structure essentially partitions the Euclidean plane into disjoint
zones \( \Reals^{2} = \bigcup_{i = 0}^{n - 1} \ReceptionZone_{i}^{+} \cup
\ReceptionZone^{-} \cup \bigcup_{i = 0}^{n - 1} \ReceptionZone_{i}^{?} \) such
that for every \( 0 \leq i \leq n - 1 \): \\
(1) \( \ReceptionZone_{i}^{+} \subseteq \ReceptionZone_i \); \\
(2) \( \ReceptionZone^{-} \cap \ReceptionZone_i = \emptyset \); and \\
(3) \(\ReceptionZone_{i}^{?}\) is bounded and its area is at most an
\(\epsilon\)-fraction of the area of \(\ReceptionZone_i\). \\
Given a query point \( p \in \Reals^{2} \), \DataStructure{} identifies
the zone in \( \{ \ReceptionZone_{i}^{+} \}_{i = 0}^{n - 1} \cup \{
\ReceptionZone^{-} \} \cup \{ \ReceptionZone_{i}^{?} \}_{i = 0}^{n - 1} \) to
which \(p\) belongs, in time \( O (\log n) \).
\end{GlobalTheorem}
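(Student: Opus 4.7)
The plan is, for each station $\Station_i$, to build a pair of convex polygons $\ReceptionZone_i^+ \subseteq \ReceptionZone_i \subseteq \widehat{P}_i$ of combinatorial complexity $O(1/\epsilon)$ each, and then to assemble these into a planar subdivision of total complexity $O(n/\epsilon)$ on which a standard planar point location structure can answer queries in $O(\log n)$ time.

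To construct the polygons I would first invoke Theorem~\ref{gtheorem:Fatness} to obtain, for each $\Station_i$, radii $\SmallRadiusBound_i \le \LargeRadiusBound_i$ with bounded ratio $\LargeRadiusBound_i / \SmallRadiusBound_i = O(1)$ such that $\Ball(\Station_i,\SmallRadiusBound_i) \subseteq \ReceptionZone_i \subseteq \Ball(\Station_i,\LargeRadiusBound_i)$. Then I would sample $k = \Theta(1/\epsilon)$ equally spaced directions around $\Station_i$ and, along each ray, locate the unique crossing with $\Boundary\ReceptionZone_i$ (unique by Theorem~\ref{gtheorem:Convexity}). This crossing is the unique root in $[\SmallRadiusBound_i,\LargeRadiusBound_i]$ of the univariate polynomial of degree $O(n)$ obtained by substituting the ray parametrization into the SINR inequality; it can be isolated to sufficient precision by bisection with polynomial evaluation in $O(n^2)$ time per ray, for $O(n^2/\epsilon)$ time per station and $O(n^3/\epsilon)$ overall. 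I would take $\ReceptionZone_i^+$ to be the convex hull of the $k$ boundary samples and $\widehat{P}_i$ to be the intersection of the $k$ supporting half-planes cut by the tangent lines of $\Boundary\ReceptionZone_i$ at those samples; the containments $\ReceptionZone_i^+ \subseteq \ReceptionZone_i \subseteq \widehat{P}_i$ then follow directly from the convexity guaranteed by Theorem~\ref{gtheorem:Convexity}.

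The main obstacle will be the area bound $\Area(\widehat{P}_i \setminus \ReceptionZone_i^+) \le \epsilon \cdot \Area(\ReceptionZone_i)$, and this is precisely the point at which both prior theorems are genuinely used. The difference decomposes into $k$ angular ``crescents'' bounded by two consecutive rays, a chord of $\ReceptionZone_i^+$, and two tangent segments of $\widehat{P}_i$; an elementary trigonometric estimate shows that each crescent has area $O(\LargeRadiusBound_i^2 / k^2)$, giving a total gap area of $O(\LargeRadiusBound_i^2 / k)$. Fatness then supplies $\Area(\ReceptionZone_i) \ge \pi \SmallRadiusBound_i^2 = \Omega(\LargeRadiusBound_i^2)$, so that $k = \Theta(1/\epsilon)$ converts the bound into the desired relative one. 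Without convexity, a ray from $\Station_i$ could cross $\Boundary\ReceptionZone_i$ more than once and both the inscribed and the tangent constructions would lose their containment guarantees; without fatness, the ratio $\LargeRadiusBound_i/\SmallRadiusBound_i$ could be arbitrarily large and no $O(\mathrm{poly}(1/\epsilon))$ number of samples would suffice.

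Finally, I would set $\ReceptionZone^- := \Reals^2 \setminus \bigcup_i \widehat{P}_i$ and $\ReceptionZone_i^? := \widehat{P}_i \setminus \ReceptionZone_i^+$, breaking any overlap between two outer polygons by charging it arbitrarily to one of the participating uncertain regions (the per-zone area guarantee survives such charging, since overlaps already lie in uncertain regions by construction, and the inner polygons are automatically pairwise disjoint because the $\ReceptionZone_i$ are). The resulting planar subdivision has $O(n/\epsilon)$ vertices, edges, and faces; preprocessing it with a classical planar point location structure (for instance Kirkpatrick's hierarchical refinement, or persistent search trees over a trapezoidation) yields $O(n/\epsilon)$ storage and $O(\log n)$ query time, delivering the labelled region containing any query point.
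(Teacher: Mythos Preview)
Your approach differs substantially from the paper's. The paper does not build inscribed and circumscribed polygons; instead, for each station it overlays a grid of spacing $\gamma = \Theta(\epsilon\,\SmallRadiusBound^2/\LargeRadiusBound)$ and classifies cells by walking the boundary via \emph{segment tests} (Sturm's condition applied to $\HearPoly_i$ restricted to a grid edge, which \emph{counts} roots exactly without computing them). The uncertain zone $\ReceptionZone_i^{?}$ is the union of the $O(\Perimeter(\ReceptionZone_i)/\gamma)=O((\LargeRadiusBound/\SmallRadiusBound)^2\epsilon^{-1})$ boundary cells, and the area bound is immediate: $\Area(\ReceptionZone_i^{?})=O(\Perimeter\cdot\gamma)=O(\LargeRadiusBound\gamma)\le \epsilon\pi\SmallRadiusBound^2\le\epsilon\,\Area(\ReceptionZone_i)$. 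Queries are answered by first locating the nearest station via a Voronoi diagram in $O(\log n)$ time and then doing an $O(1)$ lookup in that station's grid structure; no global planar subdivision is ever built. The paper's route buys a one-line area analysis (perimeter times cell width) and avoids computing boundary points or tangent directions altogether.

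Your sketch is plausible in outline but glosses over a real difficulty. The outer containment $\ReceptionZone_i\subseteq\widehat{P}_i$ requires the supporting half-planes to be \emph{exact} tangents of $\ReceptionZone_i$, yet your boundary samples come from bisection and are only approximate; a line through an approximate boundary point in an approximate normal direction need not support the convex body, so $\widehat{P}_i\supseteq\ReceptionZone_i$ is not guaranteed. The paper sidesteps this entirely by never locating boundary points, only counting edge crossings exactly via Sturm sequences. Your crescent estimate also leans on fatness more heavily than ``elementary trigonometry'' suggests: to get $|p_jp_{j+1}|=O(\LargeRadiusBound/k)$ you already need $|r'(\phi)|\le\LargeRadiusBound^2/\SmallRadiusBound$ from the supporting-line distance constraint, and to get each crescent's apex angle bounded away from $\pi$ (so that $\tan(\Delta\psi_j/2)=O(1)$) you need the normal at $p_j$ to lie within $\arccos(\SmallRadiusBound/\LargeRadiusBound)$ of the radial direction, which is again fatness. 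Both are true, but they are the substance of the argument, not a triviality.
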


\newcommand{\FigureApproxQuery}{
\begin{figure}
\begin{center}
\LongVersion 
\includegraphics[width=0.5\textwidth]{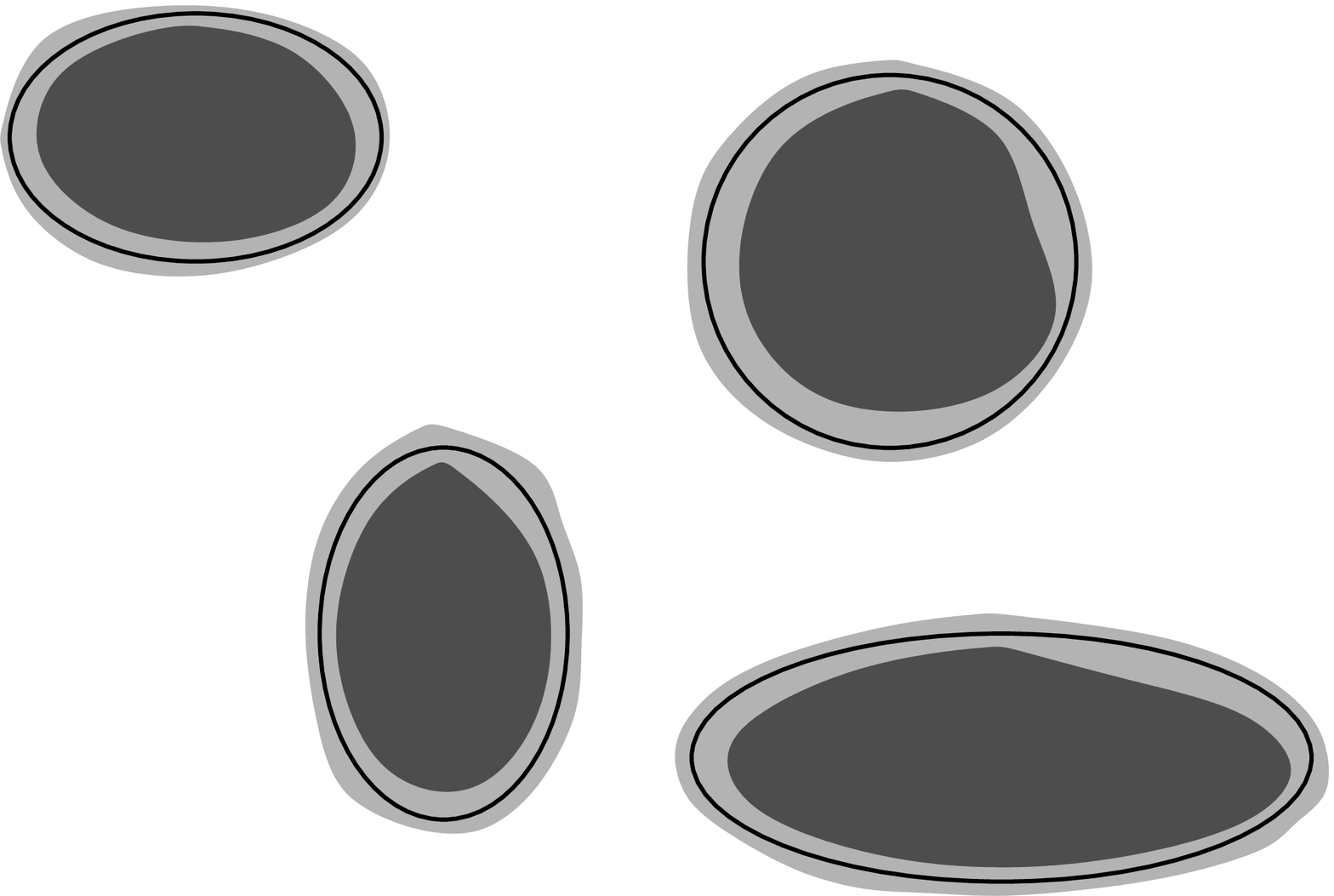}
\LongVersionEnd 
\ShortVersion 
\includegraphics[width=0.4\textwidth]{figs/approx-query}
\ShortVersionEnd 
\end{center}
\caption{\label{figure:ApproxQuery}
The reception zones \(\ReceptionZone_i\) (enclosed by the bold lines) and the
partition of the plane to disjoint zones \(\ReceptionZone_{i}^{+}\) (dark
gray), \(\ReceptionZone_{i}^{?}\) (light gray), and \(\ReceptionZone^{-}\)
(the remaining white).
}
\end{figure}
}
\LongVersion 
\FigureApproxQuery{}
\LongVersionEnd 

\ShortVersion 
\dnsparagraph{Open problems.}
\ShortVersionEnd 
\LongVersion 
\subsection{Open Problems}
\label{section:OpenProblems}
\LongVersionEnd 
Various extensions of our original setting may be considered.
For instance, it may be of interest to study SINR diagrams in $d>2$
dimensions, or for path-loss parameter $\alpha>2$.

Our results concern wireless networks with uniform power transmissions.
General wireless networks are harder to deal with.
For instance, the point location problem becomes considerably more difficult
when different stations are allowed to use different transmission energy,
since in this case, the appropriate graph-based model is no longer a unit-disk
graph but a ({\em directed}) general disk graph, based on disks of arbitrary
radii.
An even more interesting case is the {\em variable power} setting, where the
stations can adjust their transmission energy levels from time to time.

The problems discussed above become harder in a dynamic setting, and in
particular, if we assume the stations are mobile, and extending our approach
to the dynamic and mobile settings are the natural next steps.

\section{Preliminaries}
\label{section:Preliminaries}

\ShortVersion 
\dnsparagraph{Geometric notions.}
\ShortVersionEnd 
\LongVersion 
\subsection{Geometric notions}
\label{section:GeometricNotions}
\LongVersionEnd 
\LongVersion 
We consider the Euclidean plane \(\Reals^2\).
The \emph{distance} from point \(p\) to point \(q\) is denoted by \( \dist{p,
q} = \dist{q, p} = \| q - p \| \).
A \emph{ball} of radius \(r\) centered at point \(p\) is the set of all points
at distance at most \(r\) from \(p\), denoted by \( \Ball(p, r) = \{ q \in
\Reals^2 \mid \dist{p, q} \leq r \} \).
We say that point \( p \in \Reals^2 \) is \emph{internal} to the point set
\(P\) if there exists some \( \epsilon > 0 \) such that \( \Ball(p, \epsilon)
\subseteq P \).
\LongVersionEnd 
\ShortVersion 
In the Euclidean plane \(\Reals^2\), the \emph{distance} between points \(p,
q\) is denoted by \( \dist{p, q} = \dist{q, p} = \| q - p \| \).
A \emph{ball} of radius \(r\) centered at point \(p\) is the set of all points
at distance at most \(r\) from \(p\), denoted by \( \Ball(p, r) = \{ q \in
\Reals^2 \mid \dist{p, q} \leq r \} \).
Point \( p \in \Reals^2 \) is \emph{internal} to the point set
\(P\) if there exists some \( \epsilon > 0 \) such that \( \Ball(p, \epsilon)
\subseteq P \).
(For some fundamental notions in point set topology see
Appendix~\ref{appendix:TopologyOverview}.)
\ShortVersionEnd 

\newcommand{\TopologyOverview}{
Consider some point set \(P\).
\(P\) is said to be an \emph{open set} if all points \( p \in P \) are
internal points.
\(P\) is said to be a \emph{closed set} if the complement of \(P\) is an open
set.
If there exists some real \(r\) such that \( \dist{p, q} \leq r \) for every
two points \( p, q \in P \), then \(P\) is said to be \emph{bounded}.
A \emph{compact} set is a set which is both closed and bounded.
The \emph{closure} of \(P\) is the smallest closed set containing \(P\).
The \emph{boundary} of \(P\), denoted by \(\Boundary P\), is the intersection
of the closure of \(P\) and the closure of its complement.
A \emph{connected set} is a point set that cannot be partitioned to two
non-empty subsets such that each of the subsets has no point in common with
the closure of the other.
} 
\LongVersion 
\TopologyOverview{}
\LongVersionEnd 
\LongVersion 
We refer to the closure of an open bounded connected set as a \emph{thick}
set.
By definition, every thick set is compact.
\LongVersionEnd 

A point set \(P\) is said to be \emph{convex} if the segment
\(\Segment{p}{q}\) is contained in \(P\) for every two points \( p, q \in P
\).
The point set \(P\) is said to be \emph{star-shaped} \cite{BCKO08} with
respect to point \( p \in P \) if the segment \(\Segment{p}{q}\) is contained
in \(P\) for every point \( q \in P \).
\ShortVersion 
Note that 
\ShortVersionEnd 
\LongVersion 
Clearly, convexity is stronger than the star-shape property in the sense that
\LongVersionEnd 
a convex point set \(P\) is star-shaped with respect to any point \( p \in P
\);
the converse is not necessarily true.
\ShortVersion 
We refer to the closure of an open bounded connected set as a \emph{thick}
set.
By definition, every thick set is compact.
We have the following.
\ShortVersionEnd 
\LongVersion 
For thick sets we have the following necessary and sufficient condition for
convexity.
\LongVersionEnd 

\begin{lemma} \label{lemma:ThickSetsConvexity}
A thick set \(P\) is convex if and only if every line intersects \( \Boundary
P \) at most twice.
\end{lemma}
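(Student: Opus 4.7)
The plan is to prove the two directions separately.

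For the forward direction, I would observe that $\ell \cap P$ is the intersection of two convex sets, hence convex, and thus, being a bounded subset of $\ell$, a (possibly empty or degenerate) closed line segment $[a, b]$. The endpoints $a$ and $b$ lie in $\partial P$. Under the mild additional assumption that $\partial P$ contains no straight-line segment---which will hold in our applications, where reception zones are bounded by nontrivial algebraic curves---every other point of $[a, b]$ lies in the interior of $P$, so $\ell \cap \partial P \subseteq \{a, b\}$.

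For the backward direction I would argue by contrapositive: assume $P$ is not convex and exhibit a line meeting $\partial P$ at least four times. Since $P$ is the closure of an open bounded connected set $U$, the interior $P^{\circ}$ is dense in $P$. Pick any $p', q' \in P$ whose joining segment leaves $P$, and fix a point $r$ on that segment outside the closed set $P$; since the complement of $P$ is open, a small perturbation of $p'$ and $q'$ into $P^{\circ}$ yields $p, q \in P^{\circ}$ whose segment still passes near $r$ and hence still leaves $P$. Let $\ell$ be the line through $p, q$ and set $S = \ell \cap P$. Because $p, q \in P^{\circ}$, each has an open interval neighborhood on $\ell$ contained in $P$, so the connected components of $S$ containing $p$ and $q$ are each nondegenerate closed intervals in $\ell$; because $\overline{pq}$ leaves $P$, these two components are distinct. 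Each component contributes its two distinct endpoints to $\ell \cap \partial P$ (since beyond each endpoint the line enters the open complement of $P$), giving at least four points of $\ell \cap \partial P$ and the desired contradiction.

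The main obstacle is this backward direction: non-convexity guarantees only \emph{some} pair of points witnessing the failure, not a pair of \emph{interior} points, yet the latter is what one needs so that the two components of $S$ are each nondegenerate closed intervals and therefore contribute two, not one, endpoints apiece. Thickness of $P$ is exactly what allows this upgrade, through the density of $P^{\circ}$ in $P$; without it one could only conclude that $\ell \cap \partial P$ has at least three points.
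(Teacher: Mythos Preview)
The paper states this lemma without proof, so there is no original argument to compare against; your proof must stand on its own.

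Your backward direction is correct and is the only direction the paper actually uses (in Section~\ref{section:ThreeStations} and Section~\ref{section:NoNoiseConvex}). The key steps --- perturbing the witnesses of non-convexity into the interior using the density of $P^{\circ}$ in $P$ (which follows from thickness), and then observing that each of the two resulting components of $\ell \cap P$ is a nondegenerate closed bounded interval contributing two distinct boundary points --- are sound. You are also right that thickness is doing real work here: without it one could manufacture examples where the witnessing line meets $\partial P$ only three times, which would still suffice, but your argument cleanly yields four.

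For the forward direction you correctly flag that the lemma as literally stated is false: a closed square is thick and convex, yet the line containing one of its sides meets $\partial P$ in a full segment. Your added hypothesis that $\partial P$ contain no straight-line segment repairs this, and you rightly note that it holds in the paper's applications, where $\partial \mathcal{H}_0$ is a component of a real algebraic curve not containing any line. Since the paper only invokes the backward implication, this defect in the stated biconditional is harmless for the paper's purposes, but it is good that you caught it.
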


We frequently use the term \emph{zone} to describe a point set with some
``niceness'' properties.
Unless stated otherwise, a zone refers to the union of an open connected set
and some subset of its boundary.
(A thick set is a special case of a zone.)
It may also refer to a single point or to the finite union of zones.
Given some bounded zone \(Z\), we denote the \emph{area} and
\emph{perimeter} of \(Z\) (assuming that they are well defined) by
\(\Area(Z)\) and \(\Perimeter(Z)\), respectively.
Let \(Z\) be a non-empty bounded zone and let \(p\) be some internal point
of \(Z\).
Denote
\[
\SmallRadius(p, Z) = \sup \{ r > 0 \mid Z \supseteq \Ball(p, r) \} ~ ,
\qquad
\LargeRadius(p, Z) = \inf \{ r > 0 \mid Z \subseteq \Ball(p, r) \} ~ ,
\]
and define the \emph{fatness parameter} of \(Z\) with respect to \(p\) to be
the ratio of \(\LargeRadius(p, Z)\) and \(\SmallRadius(p, Z)\), denoted by
\( \FatnessParameter(p, Z) = \LargeRadius(p, Z) / \SmallRadius(p, Z) \).
(See \Figure{}~\ref{figure:Fatness}.)
\ShortVersion 
The zone \(Z\) is \emph{fat} w.r.t. \(p\) if \(\FatnessParameter(p, Z) \leq c
\) for some constant $c>0$.
\ShortVersionEnd 
\LongVersion 
The zone \(Z\) is said to be \emph{fat} with respect to \(p\) if
\(\FatnessParameter(p, Z)\) is bounded by some constant.
\LongVersionEnd 

\newcommand{\FigureFatness}{
\begin{figure}
\begin{center}
\LongVersion 
\includegraphics[width=0.35\textwidth]{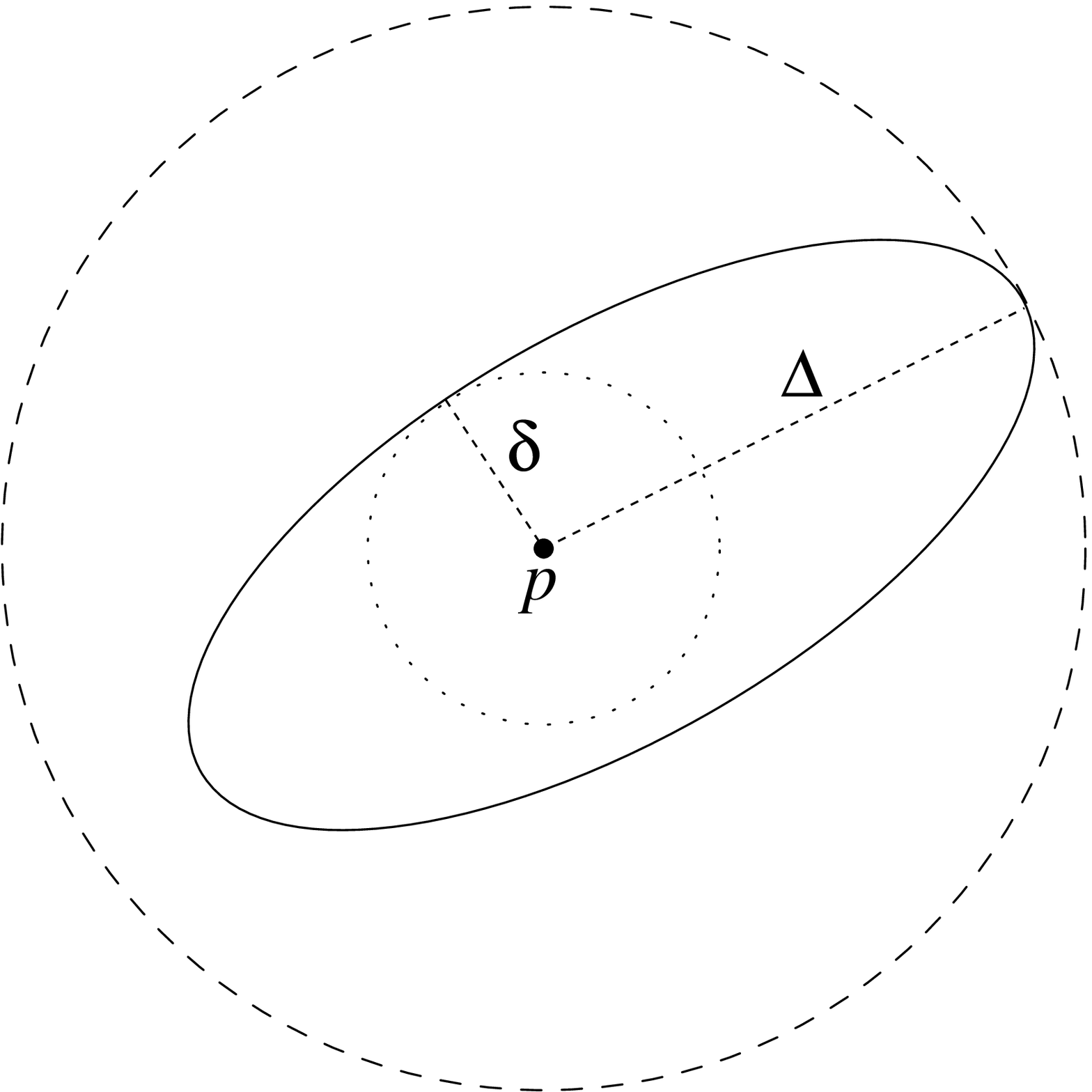}
\LongVersionEnd 
\ShortVersion 
\includegraphics[width=0.25\textwidth]{figs/fatness}
\ShortVersionEnd 
\end{center}
\caption{\label{figure:Fatness}
The zone \(Z\) (enclosed by the solid line) with the ball defining
\(\SmallRadius(p, Z)\) (dotted line) and the ball defining \(\LargeRadius(p,
Z)\) (dashed line).
}
\end{figure}
} 
\LongVersion 
\FigureFatness{}
\LongVersionEnd 

\ShortVersion 
The \emph{separation line} of two points \(p_1\) and \(p_2\) in the plane
is the set of points \( \{ q \mid \dist{p_1, q} = \dist{p_2, q} \} \).
\ShortVersionEnd 
\LongVersion 
Consider some two points \( p_1, p_2 \) in the plane.
The set of points \(q\) that satisfy \( \dist{p_1, q} = \dist{p_2, q} \) form
a line referred to as the \emph{separation line} of \(p_1\) and \(p_2\).
\LongVersionEnd 

\ShortVersion 
\begin{AvoidOverfullParagraph}
\ShortVersionEnd 
\ShortVersion 
\dnsparagraph{Wireless networks.}
\ShortVersionEnd 
\LongVersion 
\subsection{Wireless networks}
\label{section:WirelessNetworks}
\LongVersionEnd 
We consider a wireless network \( \cA = \langle S, \Power, \Noise,
\beta \rangle \), where \( S = \{ \Station_0, \Station_1, \dots, \Station_{n -
1} \} \) is a set of transmitting \emph{radio stations} embedded in the
Euclidean plane, \(\Power\) is an assignment of a positive real
\emph{transmitting power} \(\Power_i\) to each station \(\Station_i\), \(
\Noise \geq 0 \) is the \emph{background noise}, and \( \beta \geq 1 \) is a
constant that serves as the \emph{reception threshold} (will be explained
soon).
For the sake of notational simplicity, \(\Station_i\) also refers to the point
\((a_i, b_i)\) in the plane where the station \(\Station_i\) resides.
The network is assumed to contain at least two stations, i.e., \( n \geq 2 \).
We say that \(\cA\) is a
\LongVersion 
\emph{uniform power network}
\LongVersionEnd 
\ShortVersion 
\emph{uniform power network (UPN)}
\ShortVersionEnd 
if \( \psi = \bar{1} \), namely, if \( \Power_i = 1 \) for every \(i\).
\ShortVersion 
\end{AvoidOverfullParagraph}
\ShortVersionEnd 

The \emph{energy} of station \(\Station_i\) at point \( p \neq \Station_i \)
is defined to be \( \Energy_{\cA}(\Station_i, p) = \Power_i \cdot
\dist{\Station_i, p}^{-2} \).
The \emph{energy} of a station set \( T \subseteq S \) at point \(p\), where
\( p \neq \Station_i \) for every \( i \in T \), is defined to be \(
\Energy_{\cA}(T, p) = \sum_{i \in T} \Energy_{\cA}(\Station_i, p) \).
Fix some station \(\Station_i\) and consider some point \( p \notin S \).
We define the \emph{interference} to \(\Station_i\) at point \(p\) 
to be the energies at \(p\) of all stations other than \(\Station_i\), denoted
\( \Interference_{\cA}(\Station_i, p) = \Energy_{\cA}(S - \{\Station_i\}, p)
\).
The \emph{signal to interference \& noise ratio (SINR)} of \(\Station_i\)
at point \(p\) is defined as
\begin{equation}
\label{eq:Def-SINR}
\SINR_{\cA}(\Station_i, p)
~ = ~ \frac{\Energy_{\cA}(\Station_i, p)}{\Interference_{\cA}(\Station_i, p) +
\Noise}
~ = ~ \frac{\Power_i \cdot \dist{\Station_i, p}^{-2}}{\sum_{j \neq i} \Power_j
\cdot \dist{\Station_j, p}^{-2} + \Noise} ~ .
\end{equation}
Observe that \(\SINR_{\cA}(\Station_i, p)\) is always positive since the
transmitting powers and the distances of the stations from \(p\) are always
positive and the background noise is non-negative.
When the network \(\cA\) is clear from the context, we may omit it and write
simply \(\Energy(\Station_i, p)\), \(\Interference(\Station_i, p)\), and
\(\SINR(\Station_i, p)\).

The fundamental rule of the SINR model is that the transmission of station
\(\Station_i\) is received correctly at point \( p \notin S \) if and only if
its SINR at \(p\) is not smaller than the reception threshold of the network,
i.e., \( \SINR(\Station_i, p) \geq \beta \).
If this is the case, then we say that \(\Station_i\) is \emph{heard} at \(p\).
We refer to the set of points that hear station \(\Station_i\) as the
\emph{reception zone} of \(\Station_i\), defined 
\ShortVersion 
(somewhat tediously)
\ShortVersionEnd 
as
\[
\ReceptionZone_i
= \{ p \in \Reals^2 - S \mid \SINR(\Station_i, p) \geq \beta \} \cup
\{\Station_i\} ~ .
\]
\ShortVersion 
(This definition is necessary as \(\SINR(\Station_i, \cdot)\)
is not defined at any point in \(S\) and in particular, at \(\Station_i\)
itself.)
\ShortVersionEnd 
\LongVersion 
This admittedly tedious definition is necessary as \(\SINR(\Station_i, \cdot)\)
is not defined at any point in \(S\) and in particular, at \(\Station_i\)
itself.
\LongVersionEnd 

Consider station \(\Station_0\) and an arbitrary point \( p = (x, y) \in
\Reals^2 \).
By rearranging the expression in (\ref{eq:Def-SINR}), we correlate the
fundamental rule of the SINR model to the \(2\)-variate polynomial
\(\HearPoly(x, y)\)
\ShortVersion 
s.t. \(\Station_0\) is heard at \(p\) iff
\ShortVersionEnd 
\LongVersion 
so that \(\Station_0\) is heard at \(p\) if and only if
\LongVersionEnd 
\begin{align*}
\HearPoly(x, y)
= ~ & \beta \left[ \sum_{i > 0} \Power_i \cdot \prod_{j \neq i} \left( (a_j -
x)^2 + (b_j - y)^2 \right) + \Noise \cdot \prod_{i} \left( (a_i - x)^2 + (b_i
- y)^2 \right) \right] \\
& - \Power_0 \cdot \prod_{i > 0} \left( (a_i - x)^2 + (b_i - y)^2 \right) 
~ \leq ~ 0 ~ .
\end{align*}
Note that this condition holds even for points \( p \in S \).
Consequently, we can rewrite \( \ReceptionZone_0 = \{ (x, y) \in \Reals^2 \mid
\HearPoly(x, y) \leq 0 \} \), where the boundary points of
\(\ReceptionZone_0\) are exactly the roots of \(\HearPoly(x, y)\).
In general, the polynomial \(\HearPoly(x, y)\) has degree \( 2 n \); the
degree is \( 2 n - 2 \) if the background noise \( \Noise = 0 \).
This polynomial plays a key role in the analysis carried out in
Section~\ref{section:ThreeStations}.

A \UPN{} \( \cA = \langle S, \bar{1}, \Noise, \beta \rangle \)
is said to be \emph{trivial} if \( |S| = 2 \), \( \Noise = 0 \), and \( \beta
= 1 \).
Note that for \( i = 0, 1 \), the reception zone \(\ReceptionZone_i\) of
station \(\Station_i\) in a trivial \UPN{} is the half-plane
consisting of all points whose distance to \(\Station_i\) is not greater than
their distance to \( \Station_{1 - i} \).
In particular, \(\ReceptionZone_i\) is unbounded.
For non-trivial networks, we have the following observation that relies on the
fact that \(\SINR(\Station_i, \cdot)\) is a continuous function in \( \Reals^2
- S \).

\begin{observation} \label{observation:ReceptionSetBasicProperties}
Let \( \cA = \langle S, \bar{1}, \Noise, \beta \rangle \) be a non-trivial
\UPN{}.
Then the reception zone \(\ReceptionZone_i\) is compact for every \(
\Station_i \in S \).
Moreover, every point in \(\ReceptionZone_i\) is closer to \(\Station_i\) than
it is to any other station in \(S\) (i.e., \(\ReceptionZone_i\) is strictly
contained in the Voronoi cell of \(\Station_i\) in the Voronoi diagram of
\(S\)).
\end{observation}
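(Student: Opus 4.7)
The proof naturally splits into two parts: (a) showing compactness, and (b) showing strict containment in the Voronoi cell. Both ultimately reduce to the rearranged reception condition
\[
\Norm{p - \Station_i}^{-2} ~ \geq ~ \beta \sum_{j \neq i} \Norm{p - \Station_j}^{-2} + \beta \Noise ~ ,
\]
combined with the non-triviality hypothesis and continuity of \(\SINR(\Station_i, \cdot)\) on \( \Reals^2 - S \).

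For part (a), I would treat closedness and boundedness separately. Closedness is essentially automatic on \( \Reals^2 - S \) by continuity; it remains to check the points of \(S\). At \(\Station_i\) itself one notes that as \( p \to \Station_i \) the self-energy \(\Energy(\Station_i, p)\) blows up while interference and noise stay bounded, so \(\SINR \to \infty\) and a punctured neighborhood of \(\Station_i\) lies in \(\ReceptionZone_i\); thus adding \(\Station_i\) preserves closedness. At any other \(\Station_j\), the interference term \(\Norm{p - \Station_j}^{-2}\) blows up, so \( \SINR \to 0 < \beta \), ruling out \(\Station_j\) from the closure.

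For boundedness, I would analyze the behavior of \(\SINR(\Station_i, p)\) as \( \Norm{p} \to \infty \) through three cases. If \( \Noise > 0 \), the numerator tends to \(0\) while the denominator stays \(\geq \Noise\), so \(\SINR \to 0\). If \( \Noise = 0 \) and \( n \geq 3 \), then \( \Norm{p - \Station_k} \sim \Norm{p} \) for every \(k\), so \(\SINR \to 1/(n-1) \leq 1/2 < \beta\). The remaining case is \( \Noise = 0 \) and \( n = 2 \); non-triviality then forces \( \beta > 1 \), and the reception condition reduces to \( \Norm{p - \Station_j}^{2} \geq \beta \Norm{p - \Station_i}^{2} \), a classical Apollonius disk that is bounded precisely because \( \beta > 1 \). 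This last case is the main obstacle, as it is exactly where the ``non-trivial'' hypothesis does genuine work.

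For part (b), fix \( p \in \ReceptionZone_i \) with \( p \neq \Station_i \). Any station \( \Station_j \neq \Station_i \) satisfies \( p \neq \Station_j \), since \(\SINR\) is undefined there. Using \( \beta \geq 1 \), the rearranged reception condition gives
\[
\Norm{p - \Station_i}^{-2}
~ \geq ~ \beta \Norm{p - \Station_j}^{-2} + \beta \Noise + \beta \sum_{k \neq i, j} \Norm{p - \Station_k}^{-2} ~ ,
\]
and it suffices to show the right-hand side strictly exceeds \(\Norm{p - \Station_j}^{-2}\). For \( n \geq 3 \) this is immediate because the extra sum contains at least one strictly positive term. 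For \( n = 2 \), non-triviality yields either \( \beta > 1 \) (strict inequality from the leading term) or \( \Noise > 0 \) (strict inequality from the additive \(\beta\Noise\) term). Hence \(\Norm{p - \Station_i} < \Norm{p - \Station_j}\) for every \( j \neq i \), placing \(p\) strictly inside the Voronoi cell of \(\Station_i\); the case \( p = \Station_i \) is trivial.
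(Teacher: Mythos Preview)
Your proof is correct and complete. The paper states this result as an observation without proof, remarking only that it ``relies on the fact that \(\SINR(\Station_i,\cdot)\) is a continuous function in \(\Reals^2 - S\)''; your argument supplies precisely the details one would expect behind that remark, with the same case split on \(\Noise\), \(n\), and \(\beta\) that the non-triviality hypothesis invites.
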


\LongVersion 
Next, we state a simple but important lemma that will be useful in our later
arguments.
\LongVersionEnd 

\begin{lemma} \label{lemma:Transformation}
Let \( f : \Reals^2 \rightarrow \Reals^2 \) be a mapping consisting of
rotation, translation, and scaling by a factor of \( \sigma > 0 \).
Consider some network \( \cA = \langle S, \Power, \Noise, \beta \rangle \) and
let \( f(\cA) = \langle f(S), \Power, \Noise / \sigma^2, \beta \rangle \),
where \( f(S) = \{ f(\Station_i) \mid \Station_i \in S \} \).
Then for every station \(\Station_i\) and for all points \( p \notin S \), we
have \( \SINR_{\cA}(\Station_i, p) ~ = ~ \SINR_{f(\cA)}(f(\Station_i), f(p))
\).
\end{lemma}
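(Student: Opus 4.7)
The plan is to verify that $f$ scales every pairwise Euclidean distance by the common factor $\sigma$, and then observe that this scaling propagates through the SINR formula so that the noise rescaling built into the definition of $f(\cA)$ exactly compensates for it.

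First, I would establish the distance identity $\dist{f(p), f(q)} = \sigma \cdot \dist{p, q}$ for all $p, q \in \Reals^2$. Rotation and translation are Euclidean isometries and hence contribute a factor of $1$, whereas the scaling step multiplies every distance by $\sigma$; since $f$ is the composition of the three, it is a similarity transformation with ratio $\sigma$.

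Next, I would substitute into the energy and interference expressions. Since transmitting powers are unchanged by $f$, for every station $\Station_j \in S$ and every point $p \notin S$ we get $\Energy_{f(\cA)}(f(\Station_j), f(p)) = \Power_j \cdot \dist{f(\Station_j), f(p)}^{-2} = \sigma^{-2} \Energy_{\cA}(\Station_j, p)$, using that the path-loss exponent equals $2$. Summing over $j \neq i$ yields $\Interference_{f(\cA)}(f(\Station_i), f(p)) = \sigma^{-2} \Interference_{\cA}(\Station_i, p)$.

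Finally, plugging into the SINR definition gives
\[
\SINR_{f(\cA)}(f(\Station_i), f(p)) = \frac{\sigma^{-2} \Energy_{\cA}(\Station_i, p)}{\sigma^{-2} \Interference_{\cA}(\Station_i, p) + \Noise/\sigma^{2}} = \SINR_{\cA}(\Station_i, p),
\]
where the common factor $\sigma^{-2}$ cancels from numerator and denominator. There is no serious obstacle; the content of the lemma is precisely the observation that dividing the noise by $\sigma^{2}$ is exactly the rescaling needed to match the $\sigma^{-2}$ scaling incurred by every energy term under a similarity of ratio $\sigma$ when $\alpha = 2$.
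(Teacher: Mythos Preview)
Your proposal is correct and is exactly the natural verification one would write. The paper does not actually supply a proof of this lemma; it introduces it as ``a simple but important lemma'' and states it without argument, so your computation fills in precisely what the authors left implicit.
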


\LongVersion 
\section{Convexity of the reception zones}
\label{section:Convex}
\LongVersionEnd 
\ShortVersion 
\section{Convexity and fatness of the reception zones}
\label{section:ConvexFatness}
\ShortVersionEnd 
In this section we consider the SINR diagram of a uniform power network \( \cA
= \langle S, \bar{1}, \Noise, \beta \rangle \) and establish
\LongVersion 
Theorem~\ref{gtheorem:Convexity}.
\LongVersionEnd 
\ShortVersion 
Theorems \ref{gtheorem:Convexity} and \ref{gtheorem:Fatness}.
\ShortVersionEnd 
As all stations admit the same transmitting power, it is sufficient to focus
on \(\Station_0\) and to prove that the reception zone \(\ReceptionZone_0\) is
\LongVersion 
convex.
\LongVersionEnd 
\ShortVersion 
convex and fat.
The fatness property is established in Section~\ref{subsection:Fatness}.
\ShortVersionEnd 
\LongVersion 
We shall do so by considering
\LongVersionEnd 
\ShortVersion 
For the convexity proof, we consider
\ShortVersionEnd 
some arbitrary two points \( p_1, p_2 \in \Reals^2 \) and
\LongVersion 
arguing
\LongVersionEnd 
\ShortVersion 
argue
\ShortVersionEnd 
that if \(\Station_0\) is heard at \(p_i\) for \( i = 1, 2 \), then
\(\Station_0\) is heard at all points in the segment \( \Segment{p_1}{p_2}
\).
This argument is established in three steps.

First, as a warmup, we prove that \(\ReceptionZone_0\) is star-shaped
with respect to \(\Station_0\).
This proof, presented in Section~\ref{section:StarShape}, establishes our
argument for the case that \(p_1\) and \(p_2\) are colinear with
\(\Station_0\).
Next, we prove that in the absence of a background noise (i.e., \( \Noise = 0
\)), if \( p_i \in \ReceptionZone_0 \) for \( i = 1, 2 \), then \(
\Segment{p_1}{p_2} \subseteq \ReceptionZone_0 \).
This proof, presented in Section~\ref{section:NoNoiseConvex}, relies on the
analysis of a special case of a network consisting of only three stations
which is analyzed in Section~\ref{section:ThreeStations} and in a sense,
constitutes the main technical challenge of this paper.
Finally,
\LongVersion 
in Section~\ref{section:AddingNoise}
\LongVersionEnd 
we reduce the convexity proof of a \UPN{} with \(n\) stations and arbitrary
background noise, to that of a \UPN{} with \( n + 1 \) stations and no
background noise.
\ShortVersion 
This reduction is deferred to Appendix~\ref{appendix:AddingNoise}.
\ShortVersionEnd 
While the analyses in
\LongVersion 
Sections \ref{section:NoNoiseConvex} and \ref{section:AddingNoise}
\LongVersionEnd 
\ShortVersion 
Section~\ref{section:NoNoiseConvex} and Appendix~\ref{appendix:AddingNoise}
\ShortVersionEnd 
are consistent with some ``physical intuition'', the proof presented in
Section~\ref{section:ThreeStations} is based purely on algebraic arguments.

\subsection{Star-shape}
\label{section:StarShape}
\LongVersion 
In this section we consider a \UPN{} \( \cA = \langle S, \bar{1}, \Noise, \beta
\rangle \) and show that the reception zone \(\ReceptionZone_0\) is
star-shaped with respect to the station \(\Station_0\).
In fact, we prove a slightly stronger lemma.
\LongVersionEnd 
\ShortVersion 
In this section we consider a \UPN{} \( \cA = \langle S, \bar{1}, \Noise, \beta
\rangle \) and show that the reception zone \(\ReceptionZone_0\) is
star-shaped w.r.t. the station \(\Station_0\).
In fact, we establish the following slightly stronger lemma (proof is deferred
to Appendix~\ref{appendix:ProofLemmaStarConvex}).
\ShortVersionEnd 

\begin{lemma} \label{lemma:StarConvex}
Consider some point \( p \in \Reals^2 \).
If \( \SINR(\Station_0, p) \geq 1 \), then \( \SINR(\Station_0, q) >
\SINR(\Station_0, p) \) for all internal points \(q\) in the segment
\(\Segment{\Station_0}{p}\).
\end{lemma}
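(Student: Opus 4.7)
The plan is to parametrize the open segment by $q = (1-t)\Station_0 + tp$ for some $t \in (0,1)$, so that $\dist{\Station_0, q} = t \cdot \dist{\Station_0, p}$ and the signal energy at $q$ is exactly $t^{-2}$ times its value at $p$. It then suffices to show that the interference-plus-noise at $q$ grows by a factor \emph{strictly less} than $t^{-2}$.

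From $\SINR(\Station_0, p) \geq 1$ one extracts $\dist{\Station_0, p}^{-2} \geq \dist{\Station_j, p}^{-2}$, i.e.\ $\dist{\Station_j, p} \geq \dist{\Station_0, p}$, for every $j \neq 0$; in particular no other station sits on $\Segment{\Station_0}{p}$, so $q \notin S\setminus\{\Station_0\}$ and $\SINR(\Station_0, q)$ is well defined. Combining the reverse triangle inequality $\dist{\Station_j, q} \geq \dist{\Station_j, p} - \dist{p, q}$ with $\dist{p, q} = (1-t)\dist{\Station_0, p} \leq (1-t)\dist{\Station_j, p}$ yields $\dist{\Station_j, q} \geq t \cdot \dist{\Station_j, p}$, hence $\dist{\Station_j, q}^{-2} \leq t^{-2}\dist{\Station_j, p}^{-2}$. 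Summing over $j \neq 0$ and noting $\Noise \leq t^{-2}\Noise$ (since $t < 1$) gives the desired bound in its non-strict form.

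The hard part is upgrading to strict inequality. If $\dist{\Station_j, p} > \dist{\Station_0, p}$ for some $j$, strictness is immediate since $1 - t > 0$. For the borderline case $\dist{\Station_j, p} = \dist{\Station_0, p}$, equality in the reverse triangle inequality would force $q$ to lie on $\Segment{\Station_j}{p}$; but $q$ is internal to $\Segment{\Station_0}{p}$ and hence distinct from $p$, which places $\Station_j$ and $\Station_0$ on a common ray emanating from $p$ through $q$, and the equidistance from $p$ then forces $\Station_j = \Station_0$ in violation of $j \neq 0$. Since $n \geq 2$, at least one interference term is present and strictly bounded, which delivers $\SINR(\Station_0, q) > \SINR(\Station_0, p)$ as required.
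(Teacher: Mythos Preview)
Your proof is correct and takes a genuinely different route from the paper's. The paper first disposes of the boundary case where some $\Station_i$ has $\Energy(\Station_i,p)=\Energy(\Station_0,p)$, then in the generic case uses Lemma~\ref{lemma:Transformation} to place $\Station_0$ at the origin and $p=(-1,0)$, rotates every other station around $p$ onto the positive $x$-axis (which preserves interference at $p$ and can only increase it along the segment), and finally differentiates the one-dimensional reciprocal-SINR function to show monotonicity. Your argument bypasses all of this: you work directly with the reverse triangle inequality $\dist{\Station_j,q}\ge \dist{\Station_j,p}-\dist{p,q}$, combined with $\dist{p,q}=(1-t)\dist{\Station_0,p}\le(1-t)\dist{\Station_j,p}$, to get $\dist{\Station_j,q}\ge t\,\dist{\Station_j,p}$ term by term, and then carefully track when each step is strict. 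This is more elementary---no coordinate normalization, no rotation trick, no calculus---and it makes transparent exactly where the hypothesis $\SINR(\Station_0,p)\ge 1$ enters (to guarantee $\dist{\Station_j,p}\ge\dist{\Station_0,p}$). The paper's reduction, on the other hand, yields a stronger intermediate statement (the worst case for star-shapedness is when all interferers sit on the positive axis), which is of independent use later. One small remark: your final line ``$\Station_j=\Station_0$ in violation of $j\neq 0$'' conflates indices with locations; if two distinct stations share $\Station_0$'s location the lemma as literally stated is actually false, but the paper's own proof has the same implicit assumption and handles that degenerate case separately wherever the lemma is applied.
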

\newcommand{\ProofLemmaStarConvex}{
We consider two disjoint cases.
First, suppose that there exists some station \(\Station_i\), \( i > 0
\), such that \( \Energy(\Station_i, p) = \Energy(\Station_0, p) \).
The assumption that \( \SINR(\Station_0, p) \geq 1 \)
necessitates, by (\ref{eq:Def-SINR}), that \( \Noise = 0 \),  \( n = 2 \) 
(which means that \( i = 1 \)), and \( \SINR(\Station_0, p) = 1 \).
Therefore \( \dist{\Station_0, p} = \dist{\Station_1, p} \) and for all
internal points \(q\) in the segment \(\Segment{\Station_0}{p}\), we have \(
\dist{\Station_0, q} < \dist{\Station_1, q} \).
Thus \( \SINR(\Station_0, q) > 1 \) and the assertion holds.

Now, suppose that \( \Energy(\Station_i, p) < \Energy(\Station_0, p) \) for
every \( i > 0 \), which means that \( \dist{\Station_i, p} >
\dist{\Station_0, p} \) for every \( i > 0 \).
By Lemma~\ref{lemma:Transformation}, we may assume without loss of generality
that \( \Station_0 = (0, 0) \) and \( p = (-1, 0) \).
Consider some station \(\Station_i\), \( i > 0 \).
Note that if \(\Station_i\) is not located on the positive half of the
horizontal axis, then we can relocate it to a new location \(\Station'_i\) on
the positive half of the horizontal axis by rotating it around \(p\) so that \(
\dist{\Station'_i, p} = \dist{\Station_i, p} \) and \( \dist{\Station'_i, q}
\leq \dist{\Station_i, q} \) for all points \( q \in \Segment{\Station_0}{p}\)
(see \Figure{}~\ref{figure:StarShape}).
We can repeat this process with every station \(\Station_i\), \( i > 0 \),
until all stations are located on the positive half of the horizontal
axis without decreasing the interference at any point \( q \in
\Segment{\Station_0}{p} \).
Therefore it is sufficient to establish the assertion under the assumption that
\( \Station_i = (a_i, 0) \), where \( a_i > 0 \), for every \( i > 0 \).

Let \( q = (-x, 0) \) for some \( x \in (0, 1] \).
We can express the SINR function of \(\Station_0\) at \(q\) as 
\LongVersion 
\[
\SINR(\Station_0, q)
~ = ~ \frac{x^{-2}}{\sum_{i > 0} (a_i + x)^{-2} + \Noise} ~ .
\]
\LongVersionEnd 
\ShortVersion 
\( \SINR(\Station_0, q) = x^{-2} / 
\left(\sum_{i > 0} (a_i + x)^{-2} + \Noise \right).\)
\ShortVersionEnd 
In this context, it will be more convenient to consider the reciprocal of the
SINR function,
\LongVersion 
\[
f(x)
= \sum_{i > 0} \left( \frac{x}{a_i + x} \right)^2 + x^2 \cdot \Noise ~ ,
\]
\LongVersionEnd 
\ShortVersion 
\( f(x) = \sum_{i > 0} x^2 / (a_i + x)^2 + x^2 \cdot \Noise ,\)
\ShortVersionEnd 
so that we have to prove that \( f(x) < f(1) \) for all \( x \in (0, 1) \).
The assertion follows since \( \frac{d f(x)}{d x} = 2 x \cdot \sum_{i > 0}
\frac{a_i}{(a_i + x)^3} + 2 x \cdot \Noise \) is positive when \( x \in (0, 1]
\).
} 
\LongVersion 
\begin{proof}
\ProofLemmaStarConvex{}
\end{proof}
\LongVersionEnd 

\newcommand{\FigureStarShape}{
\begin{figure}
\begin{center}
\includegraphics[width=0.5\textwidth]{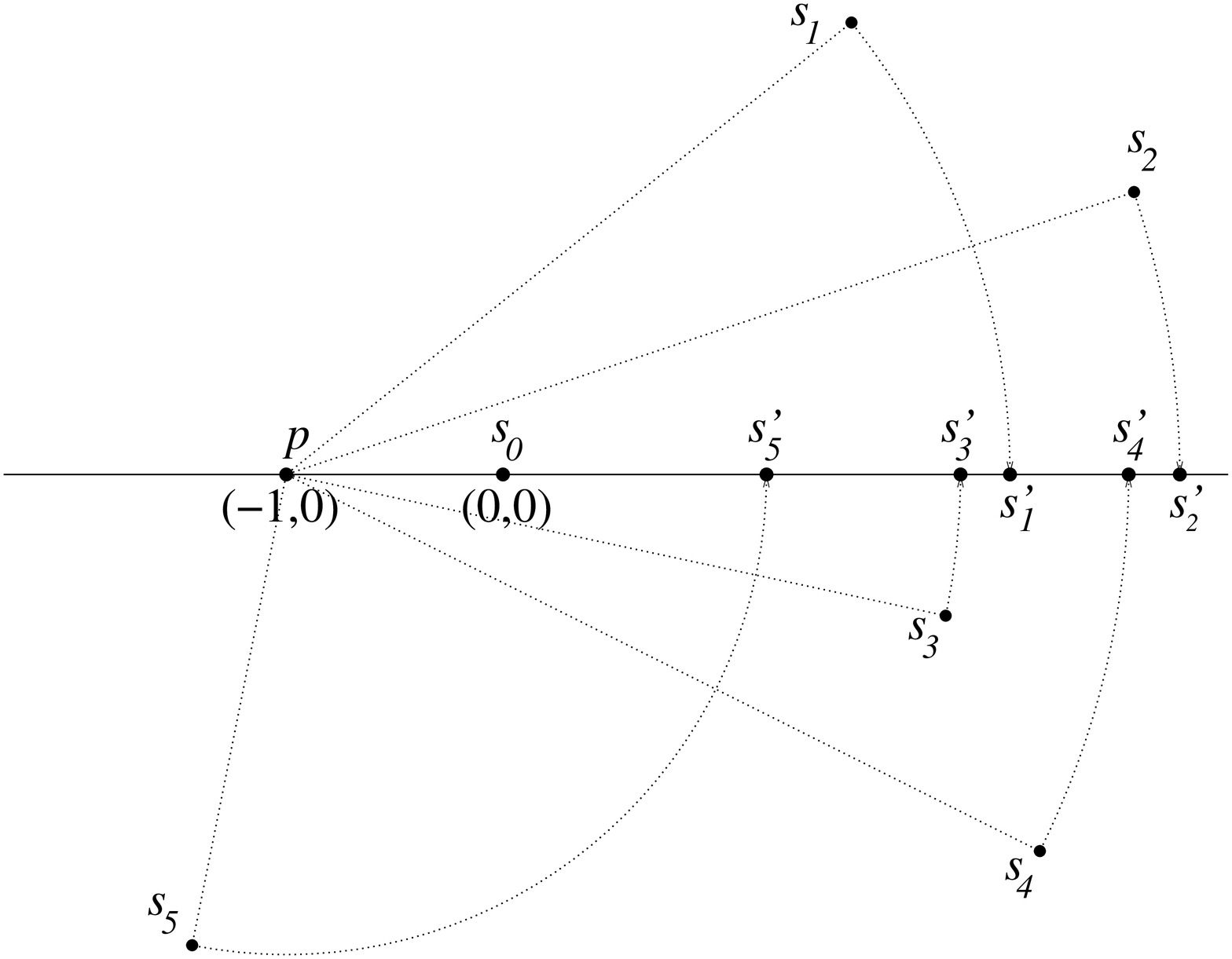}
\end{center}
\caption{\label{figure:StarShape}
Relocating stations \(\Station_i\), \( i > 0 \).
}
\end{figure}
} 
\LongVersion 
\FigureStarShape{}
\LongVersionEnd 

\ShortVersion 
Consider a non-trivial \UPN{} \( \cA = \langle S, \bar{1}, \Noise, \beta
\rangle \) and suppose that \( \Station_0 \neq \Station_j \) for every \( j >
0 \), i.e., the location of \(\Station_0\) is not shared by other stations.
Lemma~\ref{lemma:StarConvex} implies that the point set \( \ReceptionZone'_0 =
\{ p \in \Reals^2 - S \mid \SINR_{\cA}(\Station_0, p) > \beta \} \cup \{
\Station_0 \} \) is star-shaped w.r.t. \(\Station_0\), and in 
particular, connected.
Moreover, since \(\SINR\) is a continuous function in \( \Reals^2 - S \), it
follows that \(\ReceptionZone'_0\) is an open set.
As \(\ReceptionZone_0\) is the closure of \(\ReceptionZone'_0\), we have the
following.
\ShortVersionEnd 
\LongVersion 
Consider a non-trivial \UPN{} \( \cA = \langle S, \bar{1}, \Noise, \beta
\rangle \) and suppose that \( \Station_0 \neq \Station_j \) for every \( j >
0 \), that is, the location of \(\Station_0\) is not shared by any other
station.
Lemma~\ref{lemma:StarConvex} implies that the point set \( \ReceptionZone'_0 =
\{ p \in \Reals^2 - S \mid \SINR_{\cA}(\Station_0, p) > \beta \} \cup \{
\Station_0 \} \) is star-shaped with respect to \(\Station_0\), and in
particular, connected.
Moreover, since \(\SINR\) is a continuous function in \( \Reals^2 - S \), it
follows that \(\ReceptionZone'_0\) is an open set.
As \(\ReceptionZone_0\) is the closure of \(\ReceptionZone'_0\), we have the
following corollary.
\LongVersionEnd 

\begin{corollary} \label{corollary:ThickSet}
\ShortVersion 
In a nontrivial network, if \(\Station_0\)'s location is not shared by other
stations, then \(\ReceptionZone_0\) is a thick set.
\ShortVersionEnd 
\LongVersion 
In a non-trivial network, if the location of \(\Station_0\) is not shared by
any other station, then \(\ReceptionZone_0\) is a thick set.
\LongVersionEnd 
\end{corollary}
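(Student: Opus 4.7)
The plan is to verify that $\ReceptionZone_0$ matches the definition of a thick set, namely, that it is the closure of an open, bounded, connected point set. The natural candidate is the set $\ReceptionZone'_0$ introduced just before the corollary, and the argument splits into (i) showing that $\ReceptionZone'_0$ is open, bounded, and connected, and (ii) showing that its closure coincides with $\ReceptionZone_0$.

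For openness, the set $\{p \in \Reals^2 - S \mid \SINR(\Station_0, p) > \beta\}$ is the preimage of the open ray $(\beta, \infty)$ under the continuous function $\SINR(\Station_0, \cdot)$ on the open set $\Reals^2 - S$, hence open there. The only delicate point is the inclusion of $\Station_0$ itself, where SINR is not defined. Here I would use the hypothesis that $\Station_0$ is not colocated with any other station: the interference $\Interference(\Station_0, \cdot) = \sum_{j > 0} \Power_j \cdot \dist{\Station_j, \cdot}^{-2}$ extends to a continuous (hence locally bounded) function in a sufficiently small ball around $\Station_0$, while $\Energy(\Station_0, p) = \dist{\Station_0, p}^{-2}$ tends to $\infty$ as $p \to \Station_0$. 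Consequently, $\SINR(\Station_0, p) \to \infty$, and so some open ball around $\Station_0$ lies entirely in $\ReceptionZone'_0$. Boundedness of $\ReceptionZone'_0$ is immediate from the inclusion $\ReceptionZone'_0 \subseteq \ReceptionZone_0$ and the compactness of $\ReceptionZone_0$ guaranteed by Observation~\ref{observation:ReceptionSetBasicProperties}. Connectedness (in fact, star-shape with respect to $\Station_0$) is handed to us by Lemma~\ref{lemma:StarConvex}: for any $p \in \ReceptionZone'_0$ with $\SINR(\Station_0, p) > \beta \geq 1$, every internal point $q$ of $\Segment{\Station_0}{p}$ satisfies $\SINR(\Station_0, q) > \SINR(\Station_0, p) > \beta$, so the whole segment lies in $\ReceptionZone'_0$.

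For the closure equality, the inclusion $\overline{\ReceptionZone'_0} \subseteq \ReceptionZone_0$ follows from $\ReceptionZone'_0 \subseteq \ReceptionZone_0$ together with the fact that $\ReceptionZone_0$ is closed (compact, by Observation~\ref{observation:ReceptionSetBasicProperties}). For the reverse inclusion, I would take any $p \in \ReceptionZone_0$. If $p = \Station_0$ or $\SINR(\Station_0, p) > \beta$, then $p \in \ReceptionZone'_0 \subseteq \overline{\ReceptionZone'_0}$. Otherwise $p \in \Reals^2 - S$ and $\SINR(\Station_0, p) = \beta \geq 1$; applying Lemma~\ref{lemma:StarConvex} again, every internal point of $\Segment{\Station_0}{p}$ has SINR strictly greater than $\beta$ and thus lies in $\ReceptionZone'_0$, giving a sequence of such points converging to $p$, so $p \in \overline{\ReceptionZone'_0}$.

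The only real obstacle is the mild bookkeeping around the singular point $\Station_0$, where SINR is undefined: one has to verify that openness and the limiting argument still pass through, and this is precisely where the ``not colocated'' hypothesis is used to keep the interference from blowing up. Every other step is a direct invocation of continuity of SINR, Observation~\ref{observation:ReceptionSetBasicProperties}, and Lemma~\ref{lemma:StarConvex}.
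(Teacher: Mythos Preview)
Your proposal is correct and follows essentially the same approach as the paper: introduce $\ReceptionZone'_0$, use Lemma~\ref{lemma:StarConvex} for star-shape (hence connectedness), continuity of $\SINR$ for openness, and identify $\ReceptionZone_0$ as its closure. Your write-up is in fact more careful than the paper's, which asserts the openness at $\Station_0$ and the closure equality without spelling out the details you provide.
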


\subsection{Three stations with no background noise}
\label{section:ThreeStations}
In this section we analyze the special case of the \UPN{} \( \cA_3 = \langle S,
\bar{1}, \Noise, \beta \rangle \), where \( S = \{\Station_0, \Station_1,
\Station_2\} \), \( \Noise = 0 \), and \( \beta = 1 \).
Our goal is to establish the following lemma, which constitutes the main
technical challenge in the course of proving Theorem~\ref{gtheorem:Convexity}.

\begin{lemma} \label{lemma:ThreeStations}
The reception zone \(\ReceptionZone_0\) of station \(\Station_0\) in \(\cA_3\)
is convex.
\end{lemma}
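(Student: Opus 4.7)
The plan is to exhibit \(\ReceptionZone_0\) as the nonnegative superlevel set of a concave function on the (convex, open) Voronoi cell \(V\) of \(\Station_0\); convexity then falls out of standard convex-analysis facts, with no root-counting needed. The first step is an algebraic reformulation. With \(\beta = 1\), \(\Noise = 0\) and uniform powers the reception polynomial is \(\HearPoly(x,y) = r_0^2(r_1^2 + r_2^2) - r_1^2 r_2^2\), where \(r_i^2 = (a_i - x)^2 + (b_i - y)^2\). I would introduce \(\ell_i := r_i^2 - r_0^2\) for \(i = 1,2\); the quadratic parts cancel, so each \(\ell_i\) is \emph{affine} in \((x,y)\), with \(\ell_i > 0\) precisely on the \(\Station_0\)-side of the perpendicular bisector of \(\Segment{\Station_0}{\Station_i}\). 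A one-line expansion then yields the key identity
\[
\HearPoly ~=~ r_0^4 - \ell_1 \ell_2,
\]
so the reception condition \(\HearPoly \leq 0\) is equivalent to \(\ell_1 \ell_2 \geq r_0^4\).

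Next I would restrict the domain. By Observation~\ref{observation:ReceptionSetBasicProperties}, \(\ReceptionZone_0\) lies strictly inside the Voronoi cell of \(\Station_0\), i.e., inside the convex open set \(V := \{\ell_1 > 0,\ \ell_2 > 0\}\). On \(V\) one may take positive square roots, so the reception condition becomes \(\sqrt{\ell_1 \ell_2} \geq r_0^2\). Setting \(F(p) := \sqrt{\ell_1(p)\,\ell_2(p)} - r_0(p)^2\), we have \(\ReceptionZone_0 = \{p \in V : F(p) \geq 0\}\).

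The heart of the argument is the concavity of \(F\) on \(V\). The function \(-r_0^2\) is concave since \(r_0^2\) is a quadratic with Hessian \(2I\). For the other term, since each \(\ell_i\) is affine, \(\ell_i(\lambda p + (1-\lambda) q) = \lambda \ell_i(p) + (1-\lambda)\ell_i(q)\) for all \(\lambda \in [0,1]\); writing \(a_i = \ell_i(p)\) and \(b_i = \ell_i(q)\), the Cauchy--Schwarz inequality applied to the vectors \((\sqrt{\lambda a_1},\sqrt{(1-\lambda)b_1})\) and \((\sqrt{\lambda a_2},\sqrt{(1-\lambda)b_2})\) gives
\[
\sqrt{(\lambda a_1 + (1-\lambda) b_1)(\lambda a_2 + (1-\lambda) b_2)}
~\geq~ \lambda\sqrt{a_1 a_2} + (1-\lambda)\sqrt{b_1 b_2},
\]
which is precisely the concavity of \(\sqrt{\ell_1 \ell_2}\) along \(\Segment{p}{q}\). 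Summing, \(F\) is concave on \(V\); its nonnegative superlevel set is convex, and intersecting with the convex set \(V\) preserves convexity, so \(\ReceptionZone_0\) is convex.

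I expect the main obstacle to be the verification of the identity \(\HearPoly = r_0^4 - \ell_1\ell_2\) together with the concavity of \(\sqrt{\ell_1\ell_2}\) on \(V\); both are short but easy to botch. The strict Voronoi inclusion from Observation~\ref{observation:ReceptionSetBasicProperties} is essential, since it allows restricting to the open set \(V\) (where \(\sqrt{\cdot}\) is smooth and positive). Compared with the Sturm-based root-counting route suggested in the introduction, this approach bypasses polynomial machinery entirely and reads off convexity directly from the factorization.
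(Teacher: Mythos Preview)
Your argument is correct and substantially cleaner than the paper's. The identity \(\HearPoly = r_0^4 - \ell_1\ell_2\) checks out (expand \((r_1^2-r_0^2)(r_2^2-r_0^2)\) and cancel), the strict Voronoi inclusion holds because \(r_0^{-2}\geq r_1^{-2}+r_2^{-2}\) forces \(r_0^{-2}>r_i^{-2}\) for each \(i\), and the Cauchy--Schwarz step is exactly the concavity of the geometric mean of two affine positive functions. One trivial omission: when some \(\Station_j\) coincides with \(\Station_0\), the set \(V\) is empty, so you should dispose of that case first (then \(\ReceptionZone_0=\{\Station_0\}\)); the paper does this explicitly.

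The paper takes a completely different route: it reduces convexity to the claim that every line meets \(\partial\ReceptionZone_0\) at most twice, projects \(\HearPoly\) onto a generic line to obtain a quartic, performs several symmetry reductions to assume \(a_1,a_2>0\) and \(b_1,b_2\geq 1\), factors the quartic as \((x^2+1)^2 - J(x)\) with \(J\) quadratic, shifts to center \(J\), and finally invokes Sturm's theorem on the resulting Sturm sequence to bound the number of distinct real roots by two. Your approach bypasses all of this machinery by recognizing \(\ReceptionZone_0\) as a superlevel set of a concave function; it is shorter, more conceptual, and avoids the case analysis on signs and the Sturm computation. The paper's method, in exchange, gives slightly more quantitative information about root locations (e.g., Corollary~\ref{corollary:NoLargeRoots}), though this is not used elsewhere. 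Both proofs are tailored to the specific parameters \(n=3\), \(\beta=1\), \(\Noise=0\); your factorization \(r_0^4-\ell_1\ell_2\) does not extend directly to more stations, so the inductive reduction in Section~\ref{section:NoNoiseConvex} is still needed on top of either argument.
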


Lemma~\ref{lemma:ThreeStations} clearly holds if \( \Station_j = \Station_0 \)
for some \( j \in \{1, 2\} \),  as this implies that \( \ReceptionZone_0 =
\{\Station_0\} \).
So, in what follows we assume that no other station shares the location of
\(\Station_0\).
By Corollary~\ref{corollary:ThickSet} we know that \(\ReceptionZone_0\) is a
thick set.
Lemma~\ref{lemma:ThickSetsConvexity} can now be employed to establish
Lemma~\ref{lemma:ThreeStations}.
To do that, it is required to show that every line intersects \(\Boundary
\ReceptionZone_0 \) at most twice.

Consider an arbitrary line \(L\) in \(\Reals^2\).
We claim that \(L\) and \(\Boundary \ReceptionZone_0\) have no more than
two intersection points.
\ShortVersion 
If \( \Station_0 \in L \), then the claim holds due to
Lemma~\ref{lemma:StarConvex}.
Hence in the remainder of this section, assume that \(\Station_0 \notin L\).
Recall (see Section~\ref{section:Preliminaries}) that point \( (x, y) \in
\Reals^2 \) is on the boundary of \(\ReceptionZone_0\) iff it is a root of the
polynomial
\ShortVersionEnd 
\LongVersion 
First, note that if \( \Station_0 \in L \), then the claim holds due to
Lemma~\ref{lemma:StarConvex}.
Hence in the remainder of this section we assume that \(\Station_0 \notin L\).
Recall that point \( (x, y) \in \Reals^2 \) is on the boundary of
\(\ReceptionZone_0\) if and only if it is a root of the polynomial
\LongVersionEnd 
\begin{align}
\HearPoly(x, y)
= ~ & \left( (a_0 - x)^2 + (b_0 - y)^2 \right) \left( (a_1 - x)^2 + (b_1 - y)^2
+ (a_2 - x)^2 + (b_2 - y)^2 \right)
\label{equation:GeneralExpressionPolynomial} \\
& - \left( (a_1 - x)^2 + (b_1 - y)^2 \right) \left( (a_2 - x)^2 + (b_2 - y)^2
\right) \nonumber
\end{align}
\LongVersion 
(see Section~\ref{section:WirelessNetworks}),
\LongVersionEnd 
so it is sufficient to prove that the projection of \(\HearPoly(x, y)\) on the
line \(L\) has at most two distinct real roots.

Employing Lemma~\ref{lemma:Transformation}, we may assume that \(\Station_0\)
is located at the origin and that \(L\) is the line \( y = 1 \).
By substituting \( y = 1 \) into (\ref{equation:GeneralExpressionPolynomial})
and rearranging the resulting expression, we get
\begin{align*}
\HearPoly(x)
= ~ & \left( x^2 + 1 \right) \left( (a_1 - x)^2 + (b_1 - 1)^2 + (a_2 - x)^2 +
(b_2 - 1)^2 \right) \\
& - \left( (a_1 - x)^2 + (b_1 - 1)^2 \right) \left( (a_2 - x)^2 + (b_2 - 1)^2
\right) \\
= ~ & x^4 + (2 - 4 a_1 a_2) x^2 + \left( 2 a_2 a_1^2 + 2 a_2^2 a_1 + 2 (1 -
b_2)^2 a_1 - 2 a_1 + 2 a_2 (1 - b_1)^2 - 2 a_2 \right) x \\
& + a_1^2 - a_1^2 a_2^2 + a_2^2 - a_2^2 (1 - b_1)^2 + (1 - b_1)^2 - a_1^2 (1 -
b_2)^2 \\
& - (1 - b_1)^2 (1 - b_2)^2 + (1 - b_2)^2 ~ ,
\end{align*}
so that \((x, 1)\) is on the boundary of \(\ReceptionZone_0\) if and only if
\(x\) is a root of \(\HearPoly(x)\).

Our goal in the remainder of this section is to show that \(\HearPoly(x)\)
has at most two distinct real roots, and towards this goal we will first
invest some effort in simplifying this polynomial.
As a first step we show that we can restrict our attention to the case where
both \(\Station_1\) and \(\Station_2\) are in the first quarter above the line
\( y = 1 \), that is, \( a_j > 0 \) for \( j = 1, 2 \) and \( b_j \geq 1 \)
for \( j = 1, 2 \).
The latter restriction is trivial due to the symmetry of interference along
the line \( y = 1 \), which implies that if \( b_j < 1 \) for some \( j \in
\{1, 2\} \), then relocating \(\Station_j\) in \( (a_j, 1 + |1 - b_j|) \) does
not affect the interference at \(q\) for all points \(q\) on the line \( y = 1
\), and in particular, does not affect the number of simple real roots of
\(\HearPoly(x)\).
For the former restriction we
\LongVersion 
prove the following proposition.
\LongVersionEnd 
\ShortVersion 
establish the following proposition whose proof is deferred to
Appendix~\ref{appendix:ProofPropositionOppositeSign}.
\ShortVersionEnd 

\begin{proposition} \label{proposition:OppositeSign}
If \( \Sign(a_1) \cdot \Sign(a_2) \neq 1 \), then \(\HearPoly(x)\) has at
most two distinct real roots.
\end{proposition}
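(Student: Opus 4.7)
The plan is to exploit the sign of the coefficient of $x^2$ in the explicit expression for $\HearPoly(x)$ derived just above the proposition. Writing
\[
\HearPoly(x) = x^4 + A\, x^2 + B\, x + C,
\qquad A = 2 - 4 a_1 a_2,
\]
observe that the hypothesis $\Sign(a_1)\cdot\Sign(a_2) \neq 1$ is equivalent to $a_1 a_2 \leq 0$ (either the two signs are opposite, or at least one of $a_1,a_2$ vanishes). In all such cases $A = 2 - 4 a_1 a_2 \geq 2 > 0$, so the $x^2$ coefficient is strictly positive, regardless of the values of $B$ and $C$. This is the only property of the polynomial I intend to use.

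Next I would analyze the shape of $\HearPoly(x)$ by differentiating twice:
\[
\HearPoly''(x) \;=\; 12\, x^2 + 2 A \;>\; 0 \qquad \text{for all } x \in \Reals,
\]
since $A > 0$. Thus $\HearPoly'(x) = 4x^3 + 2Ax + B$ is strictly increasing on $\Reals$, and consequently has exactly one real root $x_0$. Therefore $\HearPoly$ is strictly decreasing on $(-\infty, x_0)$ and strictly increasing on $(x_0, \infty)$, with $x_0$ a global minimum.

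A strictly unimodal quartic of this form can cross zero at most once on each of the two monotone branches, so $\HearPoly(x) = 0$ has at most two distinct real roots, which is the desired conclusion. There is no real obstacle here: once the polynomial has been written out as in~(\ref{equation:GeneralExpressionPolynomial}) with $y=1$ substituted, the proof reduces to the elementary observation that a monic quartic with a non-negative quadratic coefficient and no cubic term is convex-shaped in the sense above. The only (very minor) subtlety is to notice that both boundary cases of the hypothesis, namely $a_1 a_2 < 0$ and $a_1 a_2 = 0$, give $A \geq 2 > 0$, so one proof handles them uniformly.
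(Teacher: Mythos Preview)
Your proof is correct and follows essentially the same approach as the paper: both write $\HearPoly(x)=x^4+Ax^2+Bx+C$ with $A=2-4a_1a_2$, observe that the hypothesis forces $A>0$, and then argue that $\HearPoly'(x)=4x^3+2Ax+B$ has a unique real root so that $\HearPoly$ is unimodal. The only cosmetic difference is that the paper verifies the cubic has a single real root via its discriminant $\Delta=-128A^3-432B^2<0$, whereas you reach the same conclusion more directly by noting $\HearPoly''(x)=12x^2+2A>0$.
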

\newcommand{\ProofPropositionOppositeSign}{
We write \( \HearPoly(x) = x^4 + A x^2 + B x + C \) for coefficients \(A, B,
C\) depending on \( a_1, b_1, a_2, b_2 \), where \( A = 2 - 4 a_1 a_2 \).
Let \( \HearPoly'(x) = 4 x^3 + 2 A x + B \) be the derivative of
\(\HearPoly(x)\).
The polynomial \(\HearPoly'(x)\) is a cubic polynomial, thus, it has at
least one real root.
If it has exactly one real root, then \(\HearPoly(x)\) has exactly one
extreme point and at most two distinct real roots.
A cubic polynomial \( c_3 x^3 + c_2 x^2 + c_1 x + c_0 \) with real
coefficients has one real root when its discriminant
\[
\Delta
~ = ~ c_{1}^{2} c_{2}^{2} - 4 c_0 c_{2}^{3} - 4 c_{1}^{3} c_3 + 18 c_0 c_1 c_2
c_3 - 27 c_{0}^{2} c_{3}^{2}
\]
is negative.
In the case of \(\HearPoly'(x)\) we have \( \Delta = - 128 A^3 - 432 B^2 \).
The assertion follows from the observation that if \( \Sign(a_1) \neq
\Sign(a_2) \) or if \( a_1 = a_2 = 0 \), then \( A > 0 \) and \(\Delta\) is
negative.
} 
\LongVersion 
\begin{proof}
\ProofPropositionOppositeSign{}
\end{proof}
\LongVersionEnd 

By Proposition~\ref{proposition:OppositeSign}, we may hereafter assume that 
\( \Sign(a_1) = \Sign(a_2) \neq 0 \).
The case where \( a_j < 0 \) for \( j = 1, 2 \) is redundant, since relocating
station \(\Station_j\) in \((-a_j, b_j)\) turns \(\HearPoly(x)\) into
\(\HearPoly(-x)\), and in particular, does not affect the number of distinct
real roots of the polynomial.
Therefore, in what follows we assume that \( a_j > 0 \) and \( b_j \geq 1 \)
for \( j = 1, 2 \).
\LongVersion 
\par
\LongVersionEnd 
Our next step is to rewrite \(\HearPoly(x)\) as
\begin{equation}
\label{eq:Px-by-Qx}
\HearPoly(x) = (x^2 + 1)^2 - J(x) ~ ,
\end{equation}
where
\begin{align*}
J(x)
= & 4 a_2 a_1 x^2 - \left( 2 a_2 a_1^2 + 2 a_2^2 a_1 + 2 b_2^2 a_1 - 4 b_2 a_1
+ 2 a_2 b_1^2 - 4 a_2 b_1 \right) x \\
& + a_2^2 a_1^2 + b_2^2 a_1^2 - 2 b_2 a_1^2 + a_2^2 b_1^2 + b_1^2 b_2^2 -
2 b_1 b_2^2 - 2 a_2^2 b_1 - 2 b_1^2 b_2 + 4 b_1 b_2
\end{align*}
is a polynomial of degree \(2\).
Under the assumption that \(a_1\) and \(a_2\) are positive, \(J(x)\) has the
following (not necessarily distinct) real roots:
\[
r_1 = \frac{a_1^2 + (b_1 - 2) b_1}{2 a_1} \qquad \text{and} \qquad
r_2 = \frac{a_2^2 + (b_2 - 2) b_2}{2 a_2} ~ .
\]
Perhaps surprisingly, the root \(r_j\) corresponds to the x-coordinate of the
intersection point of the line \( y = 1 \) and the separation line \(L_j\) 
of \(\Station_0\) and \(\Station_j\) for \( j = 1, 2 \) (see
\Figure{}~\ref{figure:Roots}).
To validate this observation, note that the point \((x, y)\) is on \(L_j\) 
if and only if
\LongVersion 
\[
(x - a_j)^2 + (y - b_j)^2 = x^2 + y^2 ~,
\]
\LongVersionEnd 
\ShortVersion 
\( (x - a_j)^2 + (y - b_j)^2 = x^2 + y^2 , \)
\ShortVersionEnd 
or equivalently, if and only if
\LongVersion 
\[
a_{j}^{2} + b_{j}^{2} = 2 (a_j x + b_j y) ~ .
\]
\LongVersionEnd 
\ShortVersion 
\( a_{j}^{2} + b_{j}^{2} = 2 (a_j x + b_j y) . \)
\ShortVersionEnd 
Fixing \( y = 1 \), we get that \( x = \frac{a_j^2 + (b_j - 2) b_j}{2 a_j} =
r_j \).

\newcommand{\FigureRoots}{
\begin{figure}
\begin{center}
\includegraphics[width=0.5\textwidth]{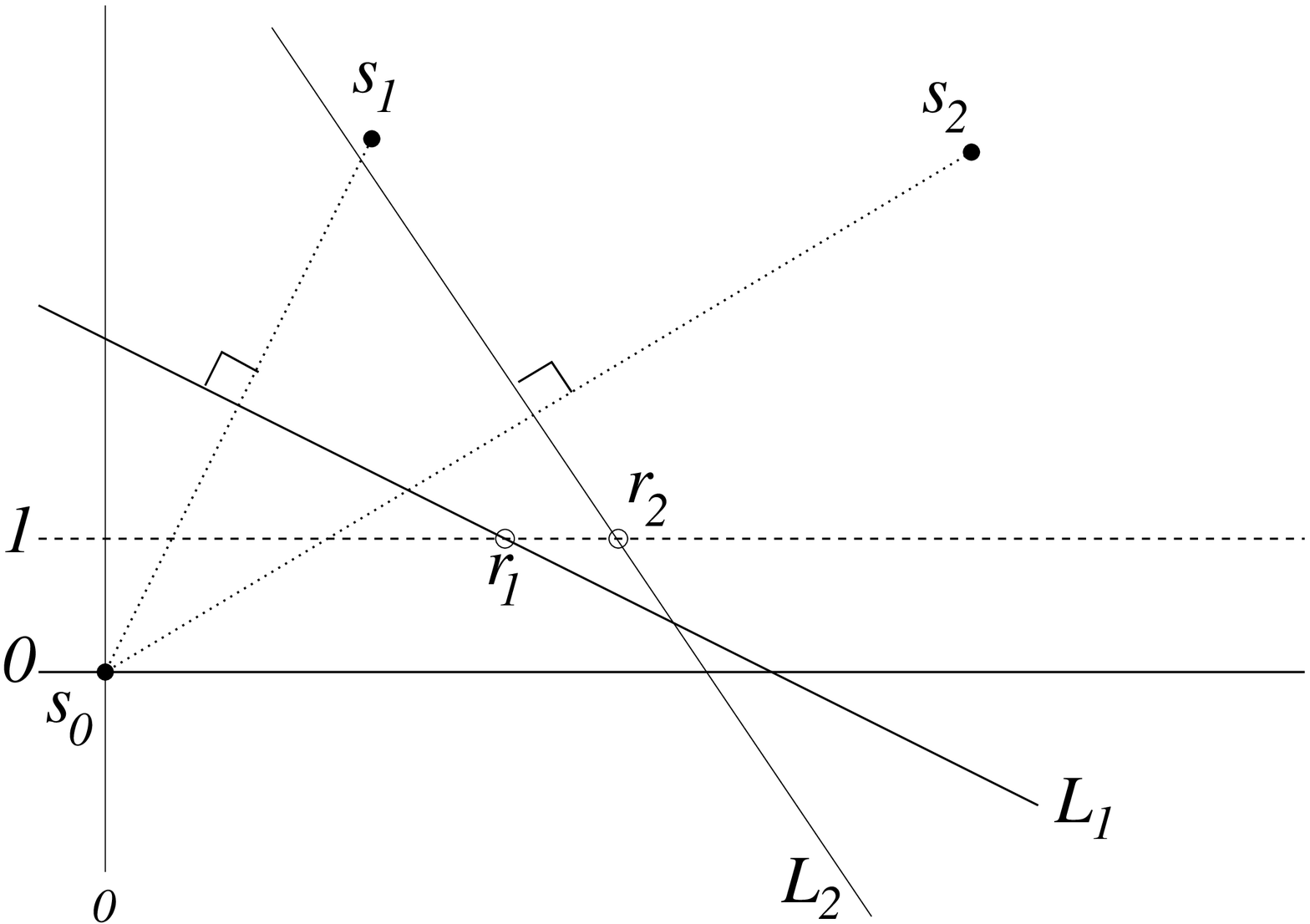}
\end{center}
\caption{\label{figure:Roots}
The point \((r_j, 1)\) is on the separation line \(L_j\) of \(\Station_0\) and
\(\Station_j\).
}
\end{figure}
} 
\LongVersion 
\FigureRoots{}
\LongVersionEnd 

Moreover, since \(x\) has a negative coefficient when \(L_j\) is expressed 
as \( y = -\frac{a_j}{b_j} x + a_{j}^{2} + b_{j}^{2} \), we conclude
that the point \( (x', 1) \) is as close to \(\Station_j\) at least as it is
to \(\Station_0\) for all \( x' \geq r_j \).
The following corollary follows since the real roots of \(\HearPoly(x)\)
correspond to points on the boundary of \(\ReceptionZone_0\), and since
\(\Station_0\) is heard at all such points.

\begin{corollary} \label{corollary:NoLargeRoots}
The real \(x'\) is not a root of the polynomial \(\HearPoly(x)\) for any \(
x' \geq \min \{ r_1, r_2\} \).
\end{corollary}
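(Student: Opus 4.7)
The plan is to derive the corollary as a direct consequence of Observation~\ref{observation:ReceptionSetBasicProperties}, using the geometric interpretation of $r_1$ and $r_2$ as intersections of the separation lines $L_1, L_2$ with the horizontal line $y=1$. Without loss of generality assume $r_1 \leq r_2$, so $\min\{r_1, r_2\} = r_1$, and fix some real $x' \geq r_1$. Recall from the discussion immediately preceding the corollary that $L_1$ is expressed as $y = -\frac{a_1}{b_1} x + \frac{a_1^2 + b_1^2}{2 b_1}$, which has negative slope because $a_1, b_1 > 0$; combined with the fact that $(r_1, 1) \in L_1$, this shows that every point $(x', 1)$ with $x' \geq r_1$ lies on the closed half-plane bounded by $L_1$ that contains $\Station_1$. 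Equivalently, $\dist{(x',1), \Station_1} \leq \dist{(x',1), \Station_0}$, so $(x', 1)$ is not strictly closer to $\Station_0$ than to $\Station_1$.

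Next I would invoke Observation~\ref{observation:ReceptionSetBasicProperties}, which applies because the network $\cA_3$ has $|S| = 3 \geq 3$ stations and is therefore non-trivial. The observation asserts that $\ReceptionZone_0$ is strictly contained in the Voronoi cell of $\Station_0$, meaning every point of $\ReceptionZone_0$ is strictly closer to $\Station_0$ than to any other station in $S$. Since $(x',1)$ fails this strict inequality with respect to $\Station_1$, we conclude $(x', 1) \notin \ReceptionZone_0$.

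Finally, I would close the loop by recalling from Section~\ref{section:WirelessNetworks} that $\ReceptionZone_0 = \{(x, y) \in \Reals^2 \mid \HearPoly(x, y) \leq 0\}$, so every root of the projected polynomial $\HearPoly(x)$ (obtained by setting $y = 1$) corresponds to a point $(x', 1)$ with $\HearPoly(x', 1) = 0 \leq 0$, i.e., a point that lies in $\ReceptionZone_0$. If $x' \geq r_1$ were such a root, then $(x', 1)$ would simultaneously belong to $\ReceptionZone_0$ and fail the strict-Voronoi condition, contradicting the observation.

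There is no significant obstacle here; the only care needed is in the routine verification that the geometry of the separation line correctly places the half-plane containing $\Station_1$ on the side $x \geq r_1$ of the line $y=1$, and in noting that even the equality case $x' = r_1$ is ruled out because the observation demands strict containment in the Voronoi cell, not just containment.
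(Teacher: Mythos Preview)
Your proof is correct and follows essentially the same approach as the paper. The paper's argument is given in the paragraph immediately preceding the corollary: it observes that for $x' \geq r_j$ the point $(x',1)$ is at least as close to $\Station_j$ as to $\Station_0$, and then notes that roots of $\HearPoly(x)$ correspond to boundary points of $\ReceptionZone_0$ where $\Station_0$ is heard---which is exactly the Voronoi containment content of Observation~\ref{observation:ReceptionSetBasicProperties} that you invoke explicitly.
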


Incidentally, let us comment without proof that the point \(\min\{r_1, r_2\}\)
is the intersection point of the line \( y = 1 \) and the boundary of the
Voronoi cell of \(\Station_0\) in the Voronoi diagram of \(S\).

Fix \( \hx = (r_1 + r_2) / 2 \) and define the shifted variable \( z = x - \hx
\).
Since \(\hx\) is the center of the parabola \(J(x)\), it follows that when
expressing \(J(x)\) in terms of the shifted variable \(z\), we get the form \(
J(z) = \gamma z^2 + \delta \), where \( \gamma > 0 \) since the leading
coefficient of \(J(x)\) is positive, and \( \delta \leq 0 \) since \(J(x)\)
has at least one real root.
We can now express \(\HearPoly(x)\), as represented in (\ref{eq:Px-by-Qx}),
in terms of the shifted variable \(z\), introducing the polynomial
\[
\widehat{\HearPoly}(z) ~ = ~ \left( (z + \hx)^2 + 1 \right)^2 - \gamma z^2 -
\delta ~ ,
\]
which is obviously much simpler.
Clearly, \(\HearPoly(x)\) and \(\widehat{\HearPoly}(z)\) have the same
number of distinct real roots.

\LongVersion 
\begin{AvoidOverfullParagraph}
\LongVersionEnd 
\ShortVersion 
In what follows, we employ \emph{Sturm's Theorem}
\ShortVersionEnd 
\LongVersion 
In what follows, we employ \emph{Sturm's Theorem}, to be explained next,
\LongVersionEnd 
in order to bound the number of distinct real roots of
\(\widehat{\HearPoly}(z)\).
Consider some degree \(n\) polynomial \( P(x) = \alpha_n x^n + \cdots +
\alpha_1 x + \alpha_0 \) over the reals.
The \emph{Sturm sequence} of \(P(x)\) is a sequence of polynomials denoted by
\( P_0(x), P_1(x), \dots, P_m(x) \), where \( P_0(x) = P(x) \) and \( P_1(x) =
P'(x) \), and \( P_i(x) = -\Remainder(P_{i - 2}(x) / P_{i - 1}(x))\) for \( i >
1 \).
This recursive definition terminates at step \(m\) such that \(
-\Remainder(P_{m - 1}(x) / P_{m}(x)) = 0 \).
Since the degree of \(P_i(x)\) is at most \( n - i \), we conclude that \( m
\leq n \).
Define \(\Var_{P}(t)\) to be the number of sign changes in the sequence \(
P_{0}(t), P_{1}(t), \dots, P_{m}(t) \).
We are now ready to state the following theorem attributed to Jacques Sturm,
1829 (cf. \cite{BPR03}).
\LongVersion 
\end{AvoidOverfullParagraph}
\LongVersionEnd 

\begin{theorem}[Sturm's condition] \label{theorem:Sturm}
Consider two reals \(a, b\), where \( a < b \) and neither of them is a root
of \(P(x)\).
Then the number of distinct real roots of \(P(x)\) in the interval \( (a, b)
\) is \( \Var_{P}(a) - \Var_{P}(b) \).
\end{theorem}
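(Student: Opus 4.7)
The plan is to prove Sturm's condition by analyzing how the function $\Var_P(t)$ changes as $t$ sweeps from $a$ to $b$, showing that it decreases by exactly one every time $t$ crosses a distinct real root of $P$ in $(a,b)$ and is otherwise constant. First I would establish the key structural property of the Sturm sequence: no two \emph{consecutive} polynomials $P_i, P_{i+1}$ can have a common real zero. Indeed, if $P_i(c) = P_{i+1}(c) = 0$, then the recursion $P_{i-1} = Q_i \cdot P_i - P_{i+1}$ (where $Q_i$ is the polynomial quotient) forces $P_{i-1}(c) = 0$, and propagating backward gives $P_0(c) = P_1(c) = 0$. This would mean $c$ is a root of both $P$ and $P'$, i.e., a multiple root. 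To avoid this entanglement, I would first reduce to the case where $P$ is squarefree by dividing out $\gcd(P, P')$, which is (up to sign) the last nonzero element $P_m$ of the Sturm sequence; this reduction preserves the set of distinct real roots, and in the squarefree case $P_m$ is a nonzero constant that never changes sign.

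Next I would observe that $\Var_P$ is a step function on $\Reals \setminus \{\text{roots of the } P_i\}$, so it suffices to analyze the jump at each root. Consider $t$ crossing a point $c$ where some $P_i(c) = 0$ with $0 < i < m$. By the previous step, $P_{i-1}(c) \neq 0$ and $P_{i+1}(c) \neq 0$, and the recursion gives $P_{i-1}(c) = -P_{i+1}(c)$, so they have \emph{opposite} signs in a neighborhood of $c$. Therefore the triple $(P_{i-1}(t), P_i(t), P_{i+1}(t))$ always contributes exactly one sign change as $t$ passes through $c$, regardless of the sign (or vanishing) of $P_i(t)$. Hence $\Var_P$ is unaffected by such crossings.

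Then I would handle the crucial case of $t$ crossing a root $c$ of $P = P_0$. Since $P$ is squarefree, $c$ is a simple root, so $P'(c) = P_1(c) \neq 0$. For $t$ slightly less than $c$, $P(t)$ has sign opposite to $P'(c)$, so the pair $(P_0(t), P_1(t))$ exhibits one sign change; for $t$ slightly greater than $c$, $P(t)$ and $P'(t)$ share a sign, contributing no sign change. Thus $\Var_P$ decreases by exactly $1$ at each distinct real root of $P$ in the open interval $(a,b)$, and the signs of all later terms in the sequence are unperturbed since those polynomials are nonzero at $c$ (by the no-common-zero property applied to $P_0, P_1$). Summing over all crossings in $(a,b)$ yields $\Var_P(a) - \Var_P(b)$ as the count of distinct real roots.

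The main obstacle is the squarefree reduction: one must verify that replacing $P$ by $P/\gcd(P,P')$ does not alter $\Var_P(a) - \Var_P(b)$, which requires checking that the quotient polynomial's Sturm sequence is obtained from the original by dividing each $P_i$ by $P_m$, so that signs (and hence sign-change counts) at points $a, b$ that are not roots of $P$ are preserved. Once this subtlety is handled, the sign-change accounting above completes the proof.
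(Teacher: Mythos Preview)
The paper does not prove this theorem at all: it is stated as a classical result attributed to Jacques Sturm (1829) and supported only by a citation to \cite{BPR03}. There is therefore no paper-proof to compare against.

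Your proposal is the standard textbook proof of Sturm's theorem and is correct in outline. The three key ingredients you identify --- the no-common-zero property of consecutive Sturm polynomials, the invariance of $\Var_P$ at roots of intermediate $P_i$ via the identity $P_{i-1}(c) = -P_{i+1}(c)$, and the unit drop of $\Var_P$ at each simple root of $P_0$ --- are exactly what drives the argument. Your handling of the squarefree reduction is also right: since $P_m = \pm\gcd(P,P')$, its roots are multiple roots of $P$ and hence roots of $P$, so the hypothesis that $a,b$ are not roots of $P$ guarantees $P_m(a), P_m(b) \neq 0$; dividing the entire sequence by $P_m$ then multiplies every term at $a$ (respectively at $b$) by a common nonzero sign, leaving $\Var_P(a)$ and $\Var_P(b)$ unchanged. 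One small point worth making explicit is that when $t$ crosses a root $c$ of $P_0$, several intermediate $P_i$ might also vanish at $c$; but your step~3 already handles those indices independently, so the net effect is still a drop of exactly one, coming solely from the pair $(P_0,P_1)$.
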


We will soon show that\footnote{
We write \(\Var_{\widehat{\HearPoly}}(\infty)\) and
\(\Var_{\widehat{\HearPoly}}(-\infty)\) as a convenient shorthand for \(
\lim_{z \rightarrow \infty} \Var_{\widehat{\HearPoly}}(z) \) and \( \lim_{z
\rightarrow -\infty} \Var_{\widehat{\HearPoly}}(z) \), respectively.
} \( \Var_{\widehat{\HearPoly}}(-\infty) \leq 3 \) and \(
\Var_{\widehat{\HearPoly}}(\infty) \geq 1 \), hence \(
\Var_{\widehat{\HearPoly}}(-\infty) - \Var_{\widehat{\HearPoly}}(\infty)
\leq 2 \).
Therefore, by Theorem~\ref{theorem:Sturm}, we conclude that
\(\widehat{\HearPoly}(z)\) has at most two distinct real roots.
It is sufficient for our purposes to consider the first three polynomials in
the Sturm sequence of \(\widehat{\HearPoly}(z)\):
\begin{align*}
\widehat{\HearPoly}_{0}(z) = & z^4 + 4 \hx z^3 + \left( 6 \hx^2 - \gamma + 2
\right) z^2 + \left( 4 \hx^3 + 4 \hx \right) z + \hx^4 + 2 \hx^2 - \delta + 1
\\
\widehat{\HearPoly}_{1}(z) = & 4 z^3 + 12 \hx z^2 + 2 \left( 6 \hx^2 -
\gamma + 2 \right) z + 4 \hx^3 + 4 \hx \\
\widehat{\HearPoly}_{2}(z) = & (\gamma / 2 - 1) z^2 - \hx (2 + \gamma / 2) z
- \hx^2 - 1 + \delta ~ .
\end{align*}

\begin{proposition} \label{proposition:SignChangesPlus}
The polynomial \(\widehat{\HearPoly}(z)\) satisfies \(
\Var_{\widehat{\HearPoly}}(\infty) \geq 1 \).
\end{proposition}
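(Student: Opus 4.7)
The plan is to evaluate the first three polynomials $\widehat{\HearPoly}_0, \widehat{\HearPoly}_1, \widehat{\HearPoly}_2$ at $z = 0$ rather than at $+\infty$, observe that their signs are forced to exhibit a sign change by the constraints $\delta \leq 0$ and $\hx^2 \geq 0$, and then transfer the conclusion back to $+\infty$ via Corollary~\ref{corollary:NoLargeRoots} and Theorem~\ref{theorem:Sturm}.

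First, I would substitute $z = 0$ into the three polynomials displayed just above the proposition. This gives $\widehat{\HearPoly}_0(0) = (\hx^2 + 1)^2 - \delta$, which is strictly positive since $\delta \leq 0$; $\widehat{\HearPoly}_1(0) = 4 \hx (\hx^2 + 1)$, whose sign equals $\Sign(\hx)$; and $\widehat{\HearPoly}_2(0) = \delta - \hx^2 - 1$, which is strictly negative since $\delta \leq 0$ and $\hx^2 \geq 0$. Because the first and third values have opposite signs, the triple contains at least one sign change regardless of whether $\hx$ is positive, negative, or zero (using the standard convention of dropping zero entries). Since any sign change in a prefix of the Sturm sequence is also a sign change in the full sequence, this yields $\Var_{\widehat{\HearPoly}}(0) \geq 1$.

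To move from $z = 0$ to $z = +\infty$, I would invoke Corollary~\ref{corollary:NoLargeRoots}. Since $\hx = (r_1 + r_2)/2 \geq \min\{r_1, r_2\}$, the corollary, translated to the shifted variable $z = x - \hx$, asserts that $\widehat{\HearPoly}(z)$ has no real root with $z \geq 0$. Combined with $\widehat{\HearPoly}(0) > 0$ (so $0$ is not itself a root), Theorem~\ref{theorem:Sturm} applied to any interval $(0, M)$ with $M$ exceeding all real roots of $\widehat{\HearPoly}$ gives $\Var_{\widehat{\HearPoly}}(0) - \Var_{\widehat{\HearPoly}}(M) = 0$; letting $M \to \infty$ then yields $\Var_{\widehat{\HearPoly}}(\infty) = \Var_{\widehat{\HearPoly}}(0) \geq 1$, which is exactly the desired inequality.

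The main conceptual obstacle is choosing the evaluation point. A direct attempt at $+\infty$ would be awkward because the leading coefficient of $\widehat{\HearPoly}_2$ is $\gamma/2 - 1$, which is positive precisely when $\gamma > 2$; in that regime all three of $\widehat{\HearPoly}_0, \widehat{\HearPoly}_1, \widehat{\HearPoly}_2$ are positive at $+\infty$, so one would be forced to compute $\widehat{\HearPoly}_3$ (and perhaps $\widehat{\HearPoly}_4$) and argue about the sign of a rather involved expression in $\gamma, \hx, \delta$. Evaluating at $z = 0$ sidesteps this case split, because the signs of $\widehat{\HearPoly}_0(0)$ and $\widehat{\HearPoly}_2(0)$ are uniformly pinned down by $\delta \leq 0$ and $\hx^2 \geq 0$, independently of $\gamma$.
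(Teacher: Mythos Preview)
Your proof is correct and follows essentially the same approach as the paper: both evaluate the first three Sturm polynomials at $z=0$, observe that $\widehat{\HearPoly}_0(0)>0$ and $\widehat{\HearPoly}_2(0)<0$ force a sign change, and invoke Corollary~\ref{corollary:NoLargeRoots} with Sturm's theorem to conclude $\Var_{\widehat{\HearPoly}}(\infty)=\Var_{\widehat{\HearPoly}}(0)\geq 1$. You present the two steps in the opposite order and are a bit more explicit about the limiting argument, but the substance is identical.
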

\begin{proof}
We first argue that \(\widehat{\HearPoly}(z)\) does not have any root in the
interval \( [0, \infty) \).
This argument holds due to Corollary~\ref{corollary:NoLargeRoots} since by the
definition of \( z = x - \hx \), \( z \geq 0 \) implies \( x \geq \hx \geq
\min \{ r_1, r_2 \} \).
Therefore Theorem~\ref{theorem:Sturm} guarantees that \(
\Var_{\widehat{\HearPoly}}(\infty) = \Var_{\widehat{\HearPoly}}(0) \).
Now, the sign of \(\widehat{\HearPoly}_{0}(0)\) is positive while the sign
of \(\widehat{\HearPoly}_{2}(0)\) is negative, so there must be at least one
sign change when the Sturm sequence of \(\widehat{\HearPoly}(z)\) is
evaluated on \(0\), hence \( \Var_{\widehat{\HearPoly}}(\infty) =
\Var_{\widehat{\HearPoly}}(0) \geq 1 \).
\end{proof}

\begin{proposition} \label{proposition:SignChangesMinus}
The polynomial \(\widehat{\HearPoly}(z)\) satisfies \(
\Var_{\widehat{\HearPoly}}(-\infty) \leq 3 \).
\end{proposition}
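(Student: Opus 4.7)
The plan is to deduce this from a parity argument combined with Proposition~\ref{proposition:SignChangesPlus}. First, if the Sturm chain of $\widehat{\HearPoly}(z)$ has length at most four, then $\Var_{\widehat{\HearPoly}}(-\infty) \le 3$ trivially, since $\Var$ never exceeds one less than the chain length. So I may assume the chain has full length five, $\widehat{\HearPoly}_0, \widehat{\HearPoly}_1, \widehat{\HearPoly}_2, \widehat{\HearPoly}_3, \widehat{\HearPoly}_4$, with strictly decreasing degrees $4, 3, 2, 1, 0$.

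Next, I would exploit the standard fact that, for a polynomial $P$ of degree $d$ with leading coefficient $c$, $\Sign(P(+\infty)) = \Sign(c)$ while $\Sign(P(-\infty)) = (-1)^d \Sign(c)$. Applied to the chain, only the odd-degree entries $\widehat{\HearPoly}_1$ (degree $3$) and $\widehat{\HearPoly}_3$ (degree $1$) flip sign between $+\infty$ and $-\infty$. Therefore, if the sign pattern at $+\infty$ is $(+, +, s_2, s_3, s_4)$ -- where the first two entries are forced positive by the positive leading coefficients of $\widehat{\HearPoly}_0$ and $\widehat{\HearPoly}_1$, and $s_2, s_3, s_4 \in \{+, -\}$ -- then the pattern at $-\infty$ must be $(+, -, s_2, -s_3, s_4)$. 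A direct enumeration over the eight possibilities for $(s_2, s_3, s_4)$ then shows that $\Var_{\widehat{\HearPoly}}(-\infty) = 4$ occurs only when $(s_2, s_3, s_4) = (+, +, +)$, yielding the fully alternating pattern $(+, -, +, -, +)$ at $-\infty$; every other choice already gives $\Var_{\widehat{\HearPoly}}(-\infty) \le 3$. But this offending choice also yields the pattern $(+, +, +, +, +)$ at $+\infty$, hence $\Var_{\widehat{\HearPoly}}(+\infty) = 0$, contradicting Proposition~\ref{proposition:SignChangesPlus}. Therefore $\Var_{\widehat{\HearPoly}}(-\infty) \le 3$, as required.

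I do not anticipate a serious obstacle: the heavy lifting was done in Proposition~\ref{proposition:SignChangesPlus}, and all that remains is the parity observation plus a finite case analysis. The only subtlety is the degenerate situation in which the Sturm chain is shorter than five (for instance if $\gamma = 2$, so that $\widehat{\HearPoly}_2$ is linear rather than quadratic, or if a later division produces an identically zero remainder), and this is dispatched by the opening remark, since a shorter chain automatically caps $\Var$ at a smaller value.
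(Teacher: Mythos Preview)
Your proposal is correct and takes essentially the same approach as the paper: reduce to the full-length chain case, observe that only the odd-degree members $\widehat{\HearPoly}_1$ and $\widehat{\HearPoly}_3$ flip sign between $+\infty$ and $-\infty$, and conclude that $\Var_{\widehat{\HearPoly}}(-\infty)=4$ would force the all-positive pattern at $+\infty$, contradicting Proposition~\ref{proposition:SignChangesPlus}. The paper argues the last step by directly reading off the unique alternating pattern at $-\infty$ rather than enumerating the eight cases for $(s_2,s_3,s_4)$, but this is a cosmetic difference only.
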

\begin{proof}
First note that there are at most five polynomials in the Sturm sequence of
\(\widehat{\HearPoly}(z)\), hence \(\Var_{\widehat{\HearPoly}}(-\infty)\)
cannot be greater than \(4\).
Suppose, towards deriving contradiction, that \(
\Var_{\widehat{\HearPoly}}(-\infty) = 4 \).
This implies that there are exactly \(5\) polynomials in the Sturm sequence of
\(\widehat{\HearPoly}(z)\) and the degree of
\(\widehat{\HearPoly}_{i}(z)\) is \( 4 - i \) for every \( 0 \leq i \leq 4
\).
Clearly, both \(\Var_{\widehat{\HearPoly}}(-\infty)\) and
\(\Var_{\widehat{\HearPoly}}(\infty)\) depend solely on the signs of the
leading coefficients of the polynomials in the Sturm sequence of
\(\widehat{\HearPoly}(z)\).
Since \( \Sign(\widehat{\HearPoly}_{0}(-\infty)) = 1 \), we must have \(
\Sign(\widehat{\HearPoly}_{i}(-\infty)) = -1 \) for \( i = 1, 3 \) and \(
\Sign(\widehat{\HearPoly}_{i}(-\infty)) = 1 \) for \( i = 2, 4 \).
As \( \Sign(\widehat{\HearPoly}_{i}(\infty)) =
\Sign(\widehat{\HearPoly}_{i}(-\infty)) \) for \(i = 0, 2, 4 \) and \(
\Sign(\widehat{\HearPoly}_{i}(\infty)) =
-\Sign(\widehat{\HearPoly}_{i}(-\infty)) \) for \( i = 1, 3 \), we conclude
that \( \Sign(\widehat{\HearPoly}_{i}(\infty)) = 1 \) for every \(0 \leq i
\leq 4\).
Therefore \( \Var_{\widehat{\HearPoly}}(\infty) = 0 \), in contradiction to
Proposition~\ref{proposition:SignChangesPlus}.
\end{proof}

To conclude, combining Propositions \ref{proposition:SignChangesPlus} and
\ref{proposition:SignChangesMinus} with Theorem~\ref{theorem:Sturm}, we get
that \(\widehat{\HearPoly}(z)\) has at most two distinct real roots, and
thus \(\HearPoly(x)\) has at most two distinct real roots.
It follows that every line has at most two intersection points with
\( \Boundary \ReceptionZone_0 \), which completes the proof of
Lemma~\ref{lemma:ThreeStations}.

\subsection{Convexity with $n$ stations and no background noise}
\label{section:NoNoiseConvex}
\ShortVersion 
We now return
\ShortVersionEnd 
\LongVersion 
In this section we return 
\LongVersionEnd 
to a \UPN{} \( \cA = \langle S, \bar{1}, \Noise,
\beta \rangle \) with an arbitrary number of stations and with an arbitrary
reception threshold \( \beta \geq 1 \), but still, with no background noise
(i.e., \( \Noise = 0 \)), and establish the convexity of \( \ReceptionZone_0
\).

\begin{lemma} \label{lemma:NoBackgroundNoise}
The reception zone \(\ReceptionZone_0\) of station \(\Station_0\) in \(\cA\)
is convex.
\end{lemma}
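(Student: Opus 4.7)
The plan is to reduce Lemma~\ref{lemma:NoBackgroundNoise} to Lemma~\ref{lemma:ThreeStations} via the line-intersection framework established in Section~\ref{section:ThreeStations}. First, I dispatch the degenerate case where $\Station_0$ coincides with some $\Station_j$, $j > 0$: here $\ReceptionZone_0 = \{\Station_0\}$ is trivially convex. Otherwise, Corollary~\ref{corollary:ThickSet} guarantees that $\ReceptionZone_0$ is a thick set, so by Lemma~\ref{lemma:ThickSetsConvexity} it suffices to show that every line $L$ meets $\Boundary \ReceptionZone_0$ in at most two points. If $L$ passes through $\Station_0$, then Lemma~\ref{lemma:StarConvex} (the star-shape property) immediately gives that $L \cap \ReceptionZone_0$ is a (possibly empty) interval, since on each side of $\Station_0$ along $L$ the reception zone contains a half-interval starting at $\Station_0$; hence the boundary is met in at most two points.

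The main work is for a line $L$ not through $\Station_0$. I apply Lemma~\ref{lemma:Transformation} to normalize so that $\Station_0 = (0,0)$ and $L$ is the line $y = 1$. Along $L$, the boundary condition becomes a univariate polynomial equation $\HearPoly(x) = 0$ of degree at most $2(n-1)$, and the task reduces to showing that this polynomial has at most two distinct real roots in the relevant range. My plan mirrors the Sturm-sequence analysis of Section~\ref{section:ThreeStations}, suitably generalized: (i)~via reflections about $y=1$, assume every interferer satisfies $b_j \geq 1$, and using a proposition analogous to Proposition~\ref{proposition:OppositeSign} dispatch the case where the interferers are split on both sides of the $y$-axis; (ii)~decompose $\HearPoly(x) = (x^2+1)^{n-1} - J(x)$, where $J(x)$ encodes the interference contribution (with $\beta$ absorbed as a multiplicative factor); (iii)~generalize Corollary~\ref{corollary:NoLargeRoots}, using the fact that any point in $\ReceptionZone_0$ lies in the Voronoi cell of $\Station_0$, to rule out roots of $\HearPoly(x)$ beyond the nearest Voronoi boundary point on $L$; and (iv)~apply Theorem~\ref{theorem:Sturm} to bound $\Var_{\HearPoly}(-\infty) - \Var_{\HearPoly}(\infty)$ by $2$ on the restricted range.

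The main obstacle is step (iv). For $n = 3$ the Sturm sequence has only five polynomials with coefficients tractable enough for direct inspection of signs at $\pm\infty$. For general $n$ the Sturm sequence has up to $2n-1$ polynomials, and the sign pattern of their leading coefficients is no longer elementary. The plan to surmount this is to invoke Lemma~\ref{lemma:ThreeStations} iteratively as a black box: for every pair of interferers $\Station_j, \Station_k$, the three-station convexity constrains the local sign pattern of $\HearPoly(x)$ on $L$, and a pairwise combination argument (possibly via induction on $n$, adding one interferer at a time while preserving the interval property on $L$) bounds the total sign-change count by $2$. The factor $\beta \geq 1$ enters only as a multiplicative coefficient on $J(x)$ and does not affect the qualitative sign-change analysis, since larger $\beta$ only shrinks the reception zone while preserving the at-most-two-roots bound on each line.
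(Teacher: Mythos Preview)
Your proposal has a genuine gap at exactly the point you flag as ``the main obstacle.'' The Sturm-sequence analysis of Section~\ref{section:ThreeStations} does not scale to $n$ stations: the degree-$2(n-1)$ polynomial produces a Sturm sequence whose leading-coefficient signs are intractable, and your suggested decomposition $\HearPoly(x)=(x^2+1)^{n-1}-J(x)$ does not yield a $J$ of manageable degree (in the three-station case $J$ is quadratic only because $x^2+1-d_j^2$ happens to be linear in $x$; the analogous factorization for $n$ stations does not collapse). Your fallback---``invoke Lemma~\ref{lemma:ThreeStations} iteratively'' by adding one interferer at a time and preserving the interval property on a \emph{fixed} line $L$---does not work either: knowing that $\{x:\Energy(\Station_0,(x,1))\geq\beta\,\Interference_n((x,1))\}$ is an interval tells you nothing directly about $\{x:\Energy(\Station_0,(x,1))-\beta\,\Interference_n((x,1))\geq\beta\,\Energy(\Station_{n},(x,1))\}$, and the three-station lemma applied to $\{\Station_0,\Station_j,\Station_k\}$ concerns a different polynomial than the full $\HearPoly$.

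The paper's proof takes a completely different route and avoids analyzing $\HearPoly$ on $L$ altogether. It works pointwise: given $p_1,p_2\in\ReceptionZone_0$, it shows (Lemma~\ref{lemma:DiscardStation}) that any two interferers $\Station_1,\Station_2$ can be replaced by a \emph{single} station $\Station^{*}$ whose energy matches $\Energy(\{\Station_1,\Station_2\},\cdot)$ at both $p_1$ and $p_2$ and dominates it everywhere on $\Segment{p_1}{p_2}$. The existence of $\Station^{*}$ is exactly where Lemma~\ref{lemma:ThreeStations} enters: one places $\Station^{*}$ on the intersection of two circles (radii $\rho_i=\Energy(\{\Station_1,\Station_2\},p_i)^{-1/2}$ about $p_i$), and the three-station convexity of $\{\Station^{*},\Station_1,\Station_2\}$ with $\beta=1$ gives $\Energy(\Station^{*},q)\geq\Energy(\{\Station_1,\Station_2\},q)$ on the segment. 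This drops the station count by one, and induction on $n$ finishes the argument. The missing idea in your plan is precisely this replacement-station construction; it is what converts Lemma~\ref{lemma:ThreeStations} into a usable inductive step.
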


Lemma~\ref{lemma:NoBackgroundNoise} is proved by induction on the number
of stations in the network, \( n = |S| \).
For the base of the induction, we consider the case where \( n = 2 \).
The theorem clearly holds if \(\Station_0\) and \(\Station_1\) share the same
location, as this implies that \( \ReceptionZone_0 = \{\Station_0\} \).
Furthermore, if \( \beta = 1 \), which means that \(\cA\) is trivial, then
\(\ReceptionZone_0\) is a half-plane and in particular, convex.
So, in what follows we assume that \( \Station_0 \neq \Station_1 \) and that
\( \beta > 1 \).

Corollary~\ref{corollary:ThickSet} implies that \(\ReceptionZone_0\) is a thick
set, thus, by Lemma~\ref{lemma:ThickSetsConvexity}, it is sufficient to argue
that every line \(L\) has at most two intersection points with \( \Boundary
\ReceptionZone_0 \).
If \( \Station_0 \in L \), then the argument holds due to
Lemma~\ref{lemma:StarConvex}.
If \( \Station_0 \notin L \), then \( \frac{\Energy(\Station_0,
q)}{\Interference(\Station_0, q)} = \beta \) is essentially a quadratic
equation, thus it has at most two real solutions which correspond to at most
two intersection points of \(L\) and \( \Boundary \ReceptionZone_0 \).

The inductive step of the proof of Lemma~\ref{lemma:NoBackgroundNoise} is
more involved.
We consider some arbitrary two points \( p_1, p_2 \in \ReceptionZone_0 \) and
prove that \( \Segment{p_1}{p_2} \subseteq \ReceptionZone_0 \).
Informally, we will show that if there exist at least two stations other than
\(\Station_0\), then we can discard one station and relocate the rest so that
the interference at \(p_i\) remains unchanged for \( i = 1, 2 \) and the
interference at \(q\) does not decrease for all points \( q \in
\Segment{p_1}{p_2} \).
By the inductive hypothesis, the segment \(\Segment{p_1}{p_2}\) is contained in
\(\ReceptionZone_0\) in the new setting, hence it is also contained in
\(\ReceptionZone_0\) in the original setting.
This idea relies on the following lemma.

\LongVersion 
\begin{AvoidOverfullParagraph}
\LongVersionEnd 
\begin{lemma} \label{lemma:DiscardStation}
Consider the stations \( \Station_0, \Station_1, \Station_2 \) and some
distinct two points \( p_1, p_2 \in \Reals^2 \).
If \( \Energy(\Station_0, p_i) \geq \Energy(\{\Station_1, \Station_2\}, p_i)
\) for \( i = 1, 2 \), then there exists a location \( \Station^{*} \in
\Reals^2 \) such that \\
(1) \( \Energy(\Station^{*}, p_i) = \Energy(\{\Station_1, \Station_2\}, p_i)
\) for \( i = 1, 2 \); and \\
(2) \( \Energy(\Station^{*}, q) \geq \Energy(\{\Station_1, \Station_2\}, q)
\), for all points \(q\) in the segment \(\Segment{p_1}{p_2}\).
\end{lemma}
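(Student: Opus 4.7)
The plan is to build $\Station^*$ explicitly from condition~(1) and then verify condition~(2) by direct polynomial analysis along $\Segment{p_1}{p_2}$. Write $\ell := \|p_1 - p_2\|$ and, for $i = 1, 2$, set
$r_i := \bigl( \|\Station_1 - p_i\|^{-2} + \|\Station_2 - p_i\|^{-2} \bigr)^{-1/2}$.
Condition~(1) is equivalent to $\|\Station^* - p_i\| = r_i$ for $i = 1, 2$; that is, $\Station^*$ must lie in the intersection $C_1 \cap C_2$ of the two circles $C_i := \{ s \in \Reals^2 : \|s - p_i\| = r_i \}$.

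First I will show that $C_1 \cap C_2 \neq \emptyset$, which amounts to verifying both $\ell \leq r_1 + r_2$ and $|r_1 - r_2| \leq \ell$. The hypothesis $\Energy(\Station_0, p_i) \geq \Energy(\{\Station_1, \Station_2\}, p_i)$ is exactly $\|\Station_0 - p_i\| \leq r_i$, so $\Station_0 \in \Ball(p_1, r_1) \cap \Ball(p_2, r_2)$ and the triangle inequality yields $\ell \leq \|\Station_0 - p_1\| + \|\Station_0 - p_2\| \leq r_1 + r_2$. For the other bound I will show that $p \mapsto r(p) := (\|\Station_1 - p\|^{-2} + \|\Station_2 - p\|^{-2})^{-1/2}$ is $1$-Lipschitz: a direct gradient computation reduces $\|\nabla r(p)\| \leq 1$ to the elementary inequality $u^{3/2} + v^{3/2} \leq (u + v)^{3/2}$ for $u, v \geq 0$, after which $|r_1 - r_2| = |r(p_1) - r(p_2)| \leq \ell$ follows.

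Next I pick an arbitrary $\Station^* \in C_1 \cap C_2$ (the at most two choices yield the same squared-distance profile along the segment, as they are reflections of each other across the line through $p_1, p_2$). Parametrize $q(t) := (1 - t) p_1 + t p_2$ for $t \in [0, 1]$, and for each $s \in \{\Station_1, \Station_2, \Station^*\}$ set $f_s(t) := \|s - q(t)\|^2 = (1 - t) \|s - p_1\|^2 + t \|s - p_2\|^2 - t(1 - t) \ell^2$. Condition~(2) becomes the polynomial inequality $f_{\Station_1}(t) f_{\Station_2}(t) \geq f_{\Station^*}(t) \bigl( f_{\Station_1}(t) + f_{\Station_2}(t) \bigr)$ on $[0, 1]$. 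The difference LHS~$-$~RHS is a quartic in $t$ that vanishes at $t = 0, 1$ by the defining relations on $r_1, r_2$, hence factors as $t(1 - t) R(t)$ for a quadratic $R(t)$. Setting $\alpha_j := \|\Station_j - p_1\|^2$ and $\beta_j := \|\Station_j - p_2\|^2$, an explicit expansion gives
\[
R(t) = \frac{(\alpha_1 \beta_2 - \alpha_2 \beta_1)^2}{(\alpha_1 + \alpha_2)(\beta_1 + \beta_2)} + 2 \ell^2 \bigl[ (1 - t) r_1^2 + t r_2^2 \bigr] - t(1 - t) \ell^4 .
\]
One then checks $R(t) \geq 0$ on $[0, 1]$ by inspecting the boundary values (strictly positive since the second summand is) and, when the minimum of this convex quadratic lies inside $(0, 1)$, combining the non-negative first summand with the constraints $(r_1 + r_2)^2 \geq \ell^2$ and $(r_1 - r_2)^2 \leq \ell^2$ established in the existence argument.

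The main obstacle is this final verification that $R(t) \geq 0$ throughout $[0, 1]$. The existence of $\Station^*$ is comparatively routine; the delicate case is when $|r_1 - r_2|$ is close to $\ell$, where one must use the \emph{perfect-square} numerator $(\alpha_1 \beta_2 - \alpha_2 \beta_1)^2$ to dominate a potentially negative contribution arising from the $-t(1 - t) \ell^4$ term near the minimum of $R$. The geometric constraints on $r_1, r_2$ produced above are precisely what is needed to close this gap.
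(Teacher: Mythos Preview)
Your proposal is correct and takes a genuinely different route from the paper. The paper proves this lemma by appealing to Lemma~\ref{lemma:ThreeStations} (the three-station convexity result, itself established via Sturm's theorem): it takes $\Station^{*}\in\partial\Ball_1\cap\partial\Ball_2$, forms the auxiliary three-station network $\{\Station^{*},\Station_1,\Station_2\}$, observes that $\SINR(\Station^{*},p_i)=1$ for $i=1,2$, and invokes convexity of the reception zone of $\Station^{*}$ to conclude $\SINR(\Station^{*},q)\geq 1$ along the segment. For the circle-intersection step, the paper rules out $\Ball_2\subseteq\Ball_1$ by a small argument using Lemma~\ref{lemma:StarConvex}. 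Your approach is self-contained: a gradient bound for the $1$-Lipschitz step, and direct polynomial algebra for condition~(2). In effect you are bypassing Lemma~\ref{lemma:ThreeStations} entirely, which is notable because in the paper that lemma is the main technical hurdle and is used \emph{only} here; your computation could replace the whole Sturm-sequence machinery.

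One remark: the ``main obstacle'' you flag in the last paragraph is much easier than you suggest. Once $\Station^{*}$ exists, $f_{\Station^{*}}(t)=\|\Station^{*}-q(t)\|^{2}\geq 0$ automatically, i.e.\ $(1-t)r_1^2+tr_2^2\geq t(1-t)\ell^2$ on $[0,1]$. Plugging this into your formula for $R(t)$ gives
\[
R(t)\;\geq\;\frac{(\alpha_1\beta_2-\alpha_2\beta_1)^2}{(\alpha_1+\alpha_2)(\beta_1+\beta_2)}\;+\;2\ell^2\cdot t(1-t)\ell^2\;-\;t(1-t)\ell^4\;=\;K+t(1-t)\ell^4\;\geq\;0,
\]
so no separate case analysis or use of the perfect-square term is needed. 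Your explicit formula for $R(t)$ is correct (the cross term indeed collapses to $(\alpha_1\beta_2-\alpha_2\beta_1)^2/[(\alpha_1+\alpha_2)(\beta_1+\beta_2)]$), so the whole argument goes through cleanly.
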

\LongVersion 
\end{AvoidOverfullParagraph}
\LongVersionEnd 
%
%
\newcommand{\ProofPropositionBallBoundariesIntersect}{
By Lemma~\ref{lemma:Transformation}, we may assume that \( p_1 = (0, 0) \) and
\( p_2 = (c ,0) \) for some positive \(c\).
Since \(\Station_0\) must be in both \(\Ball_1\) and \(\Ball_2\), it follows
that the two balls cannot be disjoint.
We establish the claim by showing that \(\Ball_2\) is not contained in
\(\Ball_1\).
Let us define a new \UPN{} \(\cA'\) consisting of the stations \(\Station_1\),
\(\Station_2\), and \( \Station' = (c + \rho_2, 0) \) with no background
noise.
The points \(p_1\) and \(p_2\) are colinear with the station \(\Station'\),
hence Lemma~\ref{lemma:StarConvex} may be employed to conclude that
\begin{equation} \label{equation:SinrInequality}
\SINR_{\cA'}(\Station', p_1) < \SINR_{\cA'}(\Station', p_2) ~ .
\end{equation}

The construction of \(\cA'\) guarantees that \( \SINR_{\cA'}(\Station', p_2) =
\Energy(\Station', p_2) / \Energy(\{\Station_1, \Station_2\}, p_2) = 1 \).
On the other hand, if \( \Ball_2 \subseteq \Ball_1 \), then \(\Station'\) is
in \(\Ball_1\) (see \Figure{}~\ref{figure:B2IsInsideB1}), and thus \(
\Energy(\Station', p_1) \geq \Energy(\{\Station_1, \Station_2\}, p_1) \) which
means that \( \SINR_{\cA'}(\Station', p_1) \geq 1 \), in contradiction to
inequality~(\ref{equation:SinrInequality}).
Therefore \( \Boundary \Ball_1 \) and \( \Boundary \Ball_2 \) must intersect.
} 
\begin{proof}
Let \( \rho_i = 1 / \sqrt{\Energy(\{\Station_1, \Station_2\}, p_i)} \) and let
\(\Ball_i\) be a ball of radius \(\rho_i\) centered at \(p_i\) for \( i = 1, 2
\).
It is easy to verify that \(\Ball_i\) consists of all station locations
\(\Station\) such that \( \Energy(\Station, p_i) \geq \Energy(\{\Station_1,
\Station_2\}, p_i) \).
Assume without loss of generality that \( \rho_1 \geq \rho_2 \).
\ShortVersion 
The proof of the following proposition is deferred to
Appendix~\ref{appendix:ProofPropositionBallBoundariesIntersect}.
\ShortVersionEnd 

\begin{proposition} \label{proposition:BallBoundariesIntersect}
\( \Boundary \Ball_1 \) and \( \Boundary \Ball_2 \) intersect.
\end{proposition}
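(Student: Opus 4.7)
The plan is to first use Lemma~\ref{lemma:Transformation} to place $p_1=(0,0)$ and $p_2=(c,0)$ for some $c>0$. The hypothesis $\Energy(\Station_0,p_i)\geq\Energy(\{\Station_1,\Station_2\},p_i)$ for $i=1,2$ says $\Station_0\in\Ball_1\cap\Ball_2$, so the two closed disks already meet. Since $\rho_1\geq\rho_2$ and $p_1\neq p_2$, $\Ball_1$ cannot be contained in $\Ball_2$; two meeting closed disks therefore fail to have intersecting boundary circles only if $\Ball_2$ sits strictly inside $\Ball_1$. It thus suffices to exhibit a single point of $\Boundary\Ball_2$ that does not lie in $\Ball_1$.

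The natural candidate is $\Station^{\ast}=(c+\rho_2,0)$: by construction it lies on $\Boundary\Ball_2$, and it is colinear with $p_1$ and $p_2$ on the far side of $p_2$ from $p_1$, so $p_2$ is an internal point of the segment $\Segment{\Station^{\ast}}{p_1}$. This colinearity is precisely what will allow me to invoke Lemma~\ref{lemma:StarConvex}. To that end, I would introduce the auxiliary \UPN{} $\cA'=\langle\{\Station_1,\Station_2,\Station^{\ast}\},\bar{1},0,1\rangle$, in which $\Station^{\ast}$ plays the role of the designated transmitter.

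By the placement of $\Station^{\ast}$, we have $\Energy(\Station^{\ast},p_2)=\rho_2^{-2}=\Energy(\{\Station_1,\Station_2\},p_2)$, whence $\SINR_{\cA'}(\Station^{\ast},p_2)=1$. Now suppose toward contradiction that $\Station^{\ast}\in\Ball_1$; then $\Energy(\Station^{\ast},p_1)\geq\rho_1^{-2}=\Energy(\{\Station_1,\Station_2\},p_1)$, so $\SINR_{\cA'}(\Station^{\ast},p_1)\geq 1$. Lemma~\ref{lemma:StarConvex}, applied with transmitter $\Station^{\ast}$ and with $q=p_2$ an internal point of $\Segment{\Station^{\ast}}{p_1}$, then forces $\SINR_{\cA'}(\Station^{\ast},p_2)>\SINR_{\cA'}(\Station^{\ast},p_1)\geq 1$, contradicting $\SINR_{\cA'}(\Station^{\ast},p_2)=1$. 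Hence $\Station^{\ast}\notin\Ball_1$, so $\Ball_2\not\subseteq\Ball_1$, and $\Boundary\Ball_1\cap\Boundary\Ball_2\neq\emptyset$.

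The main obstacle is conceptual rather than computational: the conclusion is a geometric statement about two circles, whereas the only tool that has been developed so far is the monotonicity of SINR along a ray. Bridging the two requires the idea of inventing the phantom transmitter $\Station^{\ast}$ and placing it on the far side of $p_2$ at distance exactly $\rho_2$, so that $\SINR_{\cA'}(\Station^{\ast},p_2)=1$ is forced; once that calibration is made, Lemma~\ref{lemma:StarConvex} delivers the contradiction immediately.
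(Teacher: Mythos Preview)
Your proof is correct and essentially identical to the paper's own argument: the same normalization of coordinates, the same phantom transmitter at $(c+\rho_2,0)$, the same auxiliary three-station \UPN{} with zero noise, and the same contradiction via Lemma~\ref{lemma:StarConvex}. If anything, your logical ordering is slightly cleaner, since you invoke Lemma~\ref{lemma:StarConvex} only after the hypothesis $\SINR_{\cA'}(\Station^{\ast},p_1)\geq 1$ is in hand, whereas the paper states the resulting inequality first and supplies the needed hypothesis afterward.
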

\LongVersion 
\begin{proof}
\ProofPropositionBallBoundariesIntersect{}
\end{proof}
\LongVersionEnd 

Let \(\Station^{*}\) be an intersection point of \( \Boundary \Ball_1 \) and
\( \Boundary \Ball_2 \) (see \Figure{}~\ref{figure:B1AndB2Intersect}).
We now show that \(\Station^{*}\) satisfies the assertion of 
Lemma~\ref{lemma:DiscardStation}.
Note that \( \Energy(\Station, p_i) = \Energy(\{\Station_1, \Station_2\},
p_i) \) for any station \(\Station\) located at \(\Boundary \Ball_i\), thus
\(\Station^{*}\) produces the desired energy at \(p_i\) for \( i = 1, 2 \),
that is, \( \Energy(\Station^{*}, p_i) = \Energy(\{\Station_1, \Station_2\},
p_i) \).
\LongVersion 
\par
\LongVersionEnd 
Consider a \UPN{} \(\cA^{*}\) consisting of the stations \(\Station^{*}\),
\(\Station_1\), and \(\Station_2\) with no background noise.
We have \( \SINR_{\cA^{*}}(\Station^{*}, p_i) = 1 \) for \( i = 1, 2 \).
Therefore, Lemma~\ref{lemma:ThreeStations} guarantees that
\( \SINR_{\cA^{*}}(\Station^{*}, q) \geq 1 \) for all points 
\( q \in \Segment{p_1}{p_2} \) which means that 
\( \Energy(\Station^{*}, q) \geq \Energy(\{\Station_1, \Station_2\}, q) \).
The assertion follows, completing the proof of 
Lemma~\ref{lemma:DiscardStation}.
\end{proof}

\newcommand{\FigureBTwoIsInsideBOne}{
\begin{figure}
\begin{center}
\includegraphics[width=0.3\textwidth]{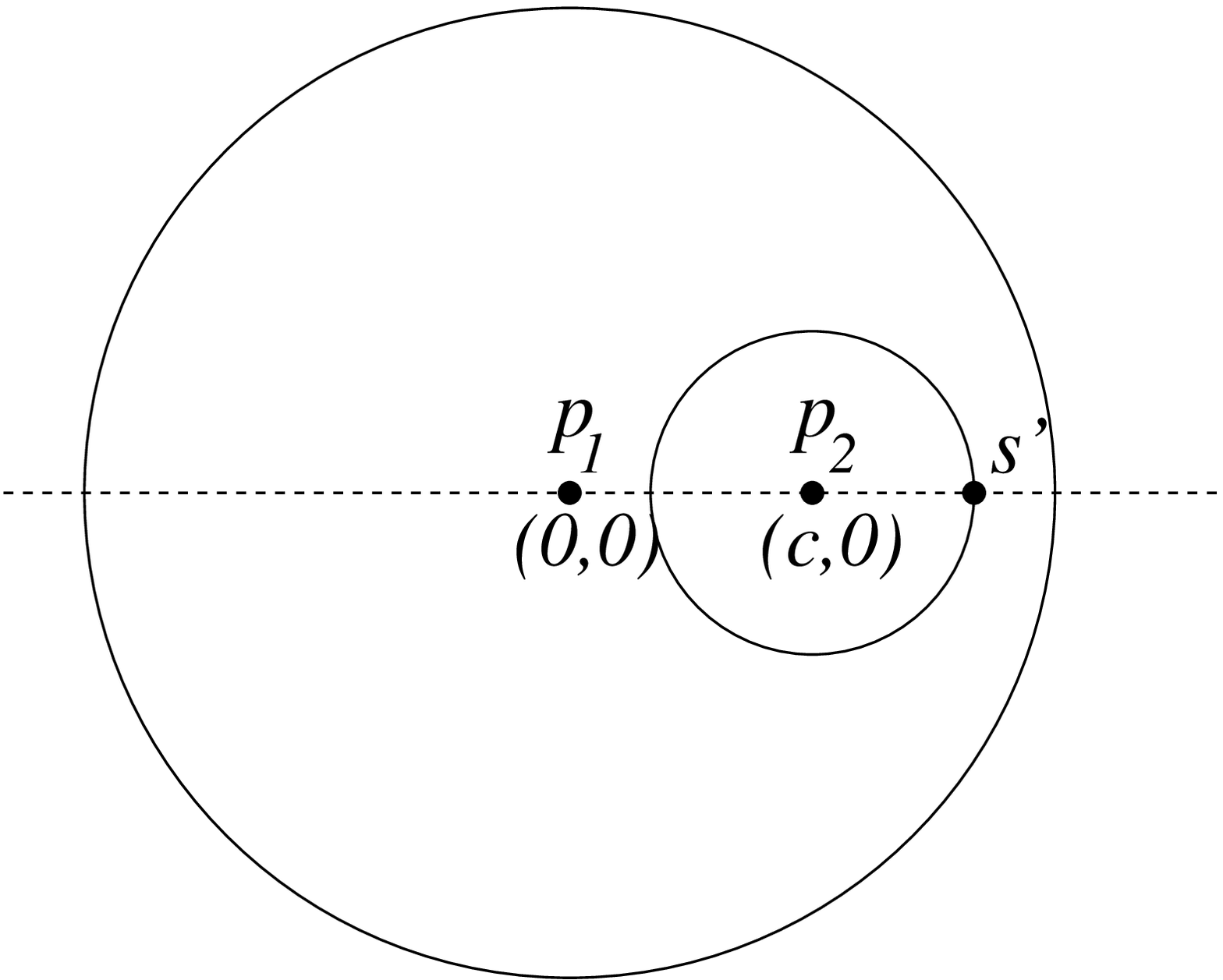}
\end{center}
\caption{\label{figure:B2IsInsideB1}
\(\Ball_2\) is strictly contained in \(\Ball_1\).
}
\end{figure}
} 
\LongVersion 
\FigureBTwoIsInsideBOne{}
\LongVersionEnd 

\newcommand{\FigureBOneAndBTwoIntersect}{
\begin{figure}
\begin{center}
\includegraphics[width=0.3\textwidth]{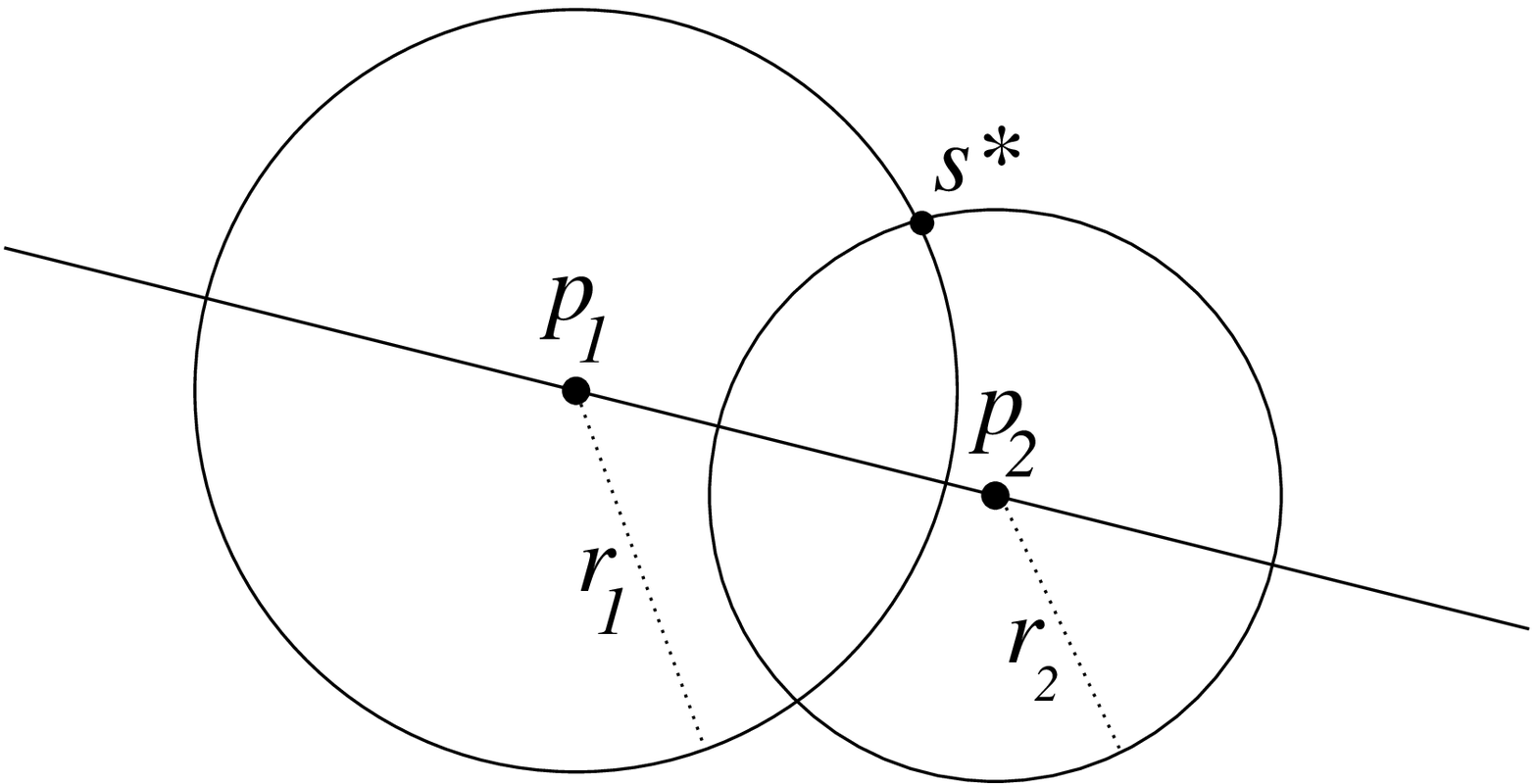}
\end{center}
\caption{\label{figure:B1AndB2Intersect}
\(\Station^{*}\) is at distance \(\rho_i\) from \(p_i\) for \( i = 1, 2 \).
}
\end{figure}
} 
\LongVersion 
\FigureBOneAndBTwoIntersect{}
\LongVersionEnd 

We now turn to describe the inductive step in the proof of
Lemma~\ref{lemma:NoBackgroundNoise}.
Assume by induction that the assertion of the theorem holds for \( n \geq 2 \)
stations, i.e., that in a \UPN{} with \( n \geq 2 \) stations and no background
noise we have \( \Segment{p_1}{p_2} \subseteq \ReceptionZone_0 \) for every \(
p_1, p_2 \in \ReceptionZone_0 \).
Now consider a \UPN{} \(\cA\) with \( n + 1 \) stations \( \Station_0, \dots,
\Station_n \) and no background noise.
Let \( p_1, p_2 \in \ReceptionZone_0 \).
Suppose that \( \Station_1 \) is closest to, say, \(p_1\) among all stations
\( \Station_1, \dots, \Station_n \).
Since \( p_1, p_2 \in \ReceptionZone_0 \), we know that \( \Energy(\Station_0,
p_i) > \Energy(\{\Station_1, \Station_2\}, p_i) \) for \( i = 1, 2 \).
Lemma~\ref{lemma:DiscardStation} guarantees that there exists a station
location \( \Station^{*} \in \Reals^2 \) such that
(1) \( \Energy(\Station^{*}, p_i) = \Energy(\{\Station_1, \Station_2\}, p_i)
\) for \( i = 1, 2 \); and
(2) \( \Energy(\Station^{*}, q) \geq \Energy(\{\Station_1, \Station_2\}, q)
\), for all points \(q\) in the segment \(\Segment{p_1}{p_2}\).

Note that the station location \(\Station^{*}\) must differ from
\(\Station_0\).
This is because \( \Energy(\Station^{*}, p_i) = \Energy(\{\Station_1,
\Station_2\}, p_i) \) while \( \Energy(\Station_0, p_i) >
\Energy(\{\Station_1, \Station_2\}, p_i) \) for \( i = 1, 2 \), thus \(
\dist{\Station^{*}, p_i} > \dist{\Station_0, p_i} \).

\ShortVersion 
\begin{AvoidOverfullParagraph}
\ShortVersionEnd 
Consider the \(n\)-station \UPN{} \(\cA^{*}\) obtained from \(\cA\) by
replacing \(\Station_1\) and \(\Station_2\) with a single station located at
\(\Station^{*}\) (see \Figure{}~\ref{figure:InductiveStep}).
Note that \( \Interference_{\cA^{*}}(\Station_0, p_i) =
\Interference_{\cA}(\Station_0, p_i) \) for \( i=1,2 \) and \(
\Interference_{\cA^{*}}(\Station_0, q) \geq \Interference_{\cA}(\Station_0, q)
\) for all points \( q \in \Segment{p_1}{p_2} \), hence \(
\SINR_{\cA^{*}}(\Station_0, p_i) = \SINR_{\cA}(\Station_0, p_i) \) for \( i =
1, 2 \) and \( \SINR_{\cA^{*}}(\Station_0, q) \leq \SINR_{\cA}(\Station_0, q)
\).
By the inductive hypothesis, \( \SINR_{\cA^{*}}(\Station_0, q) \geq \beta \)
for all points \( q \in \Segment{p_1}{p_2} \), therefore \(
\SINR_{\cA}(\Station_0, q) \geq \beta \) and \(\Station_0\) is heard at \(q\)
in \(\cA\).
It follows that every \( q \in \Segment{p_1}{p_2} \) belongs to \(
\ReceptionZone_0 \) in \(\cA\), which establishes the assertion and completes
the proof of Lemma~\ref{lemma:NoBackgroundNoise}.
\ShortVersion 
\end{AvoidOverfullParagraph}
\ShortVersionEnd 

\newcommand{\FigureInductiveStep}{
\begin{figure}
\begin{center}
\includegraphics[width=0.3\textwidth]{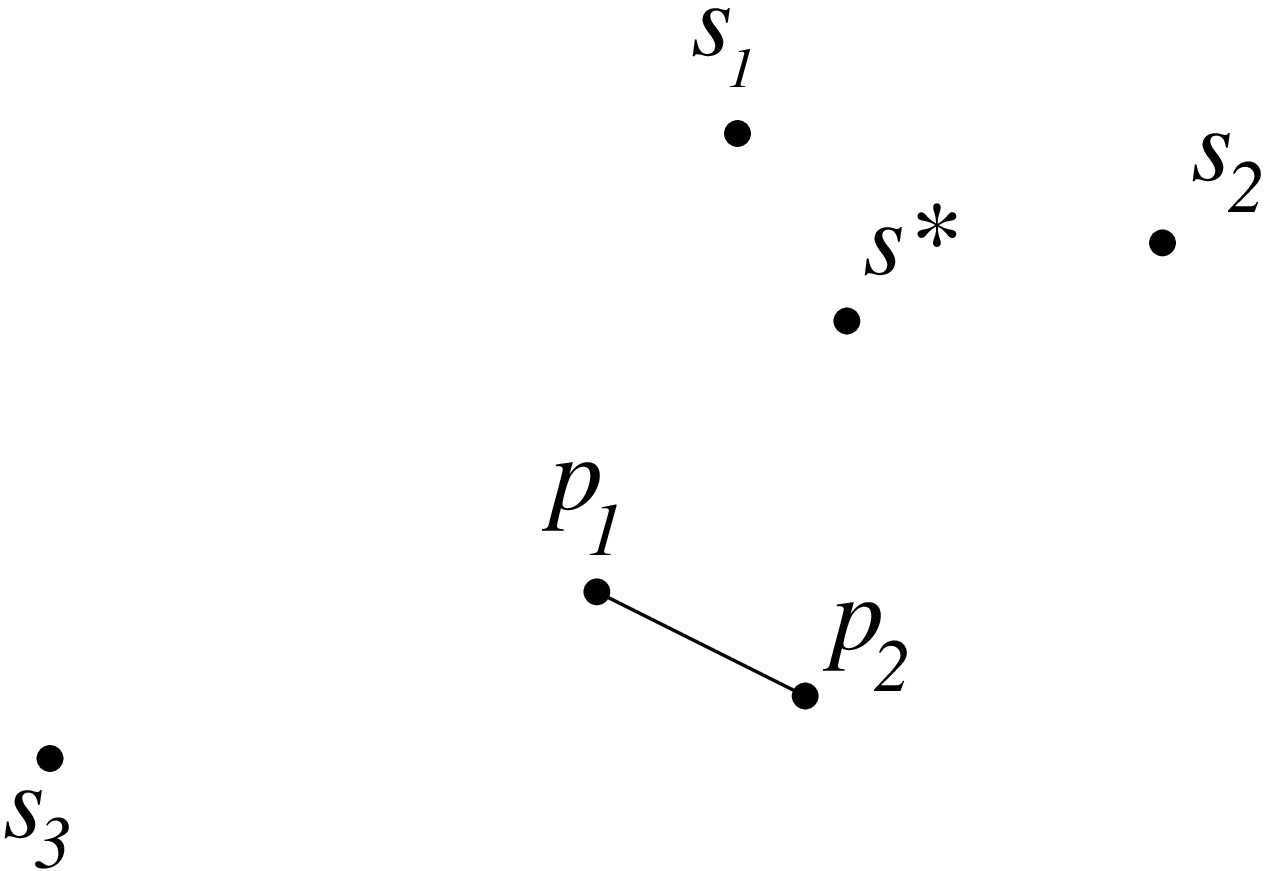}
\end{center}
\caption{\label{figure:InductiveStep}
\(\cA^{*}\) is obtained from \(\cA\) by removing stations \(\Station_1\) and
\(\Station_2\) and introducing station \(\Station^{*}\).
}
\end{figure}
} 
\LongVersion 
\FigureInductiveStep{}
\LongVersionEnd 

\LongVersion 
\subsection{Adding background noise}
\label{section:AddingNoise}
\LongVersionEnd 
\newcommand{\SectionAddingNoise}{
Our goal in this section is to show that the reception zones in a \UPN{} \(
\cA = \langle S, \bar{1}, \Noise, \beta \rangle \), where \( \Noise > 0 \), are
convex, thus establishing Theorem~\ref{gtheorem:Convexity}.
Let \(p_1\) and \(p_2\) be some points in \(\Reals^2\) and suppose that
\(\Station_0\) is heard at \(p_1\) and \(p_2\) in \(\cA\).
Let \(\Ball_1\) and \(\Ball_2\) be the balls of radius \( 1 / \sqrt{\Noise} \)
centered at \(p_1\) and \(p_2\), respectively.
Note that \( \SINR_{\cA}(\Station_0, p_i) \geq \beta \geq 1 \) implies that \(
\Energy(\Station_0, p_i) > \Noise \), thus \( \dist{\Station_0, p_i} < 1 /
\sqrt{\Noise} \) for \( i = 1, 2 \).
Therefore \( \dist{p_1, p_2} < 2 / \sqrt{\Noise} \) and \( \Boundary \Ball_1
\) and \( \Boundary \Ball_2 \) must intersect.

We construct an \( (n + 1) \)-station \UPN{} \(\cA'\) from \(\cA\) by locating
a new station \(\Station_n\) (with transmitting power \( \Power_n = 1 \) like
all other stations) in an intersection point of \( \Boundary \Ball_1 \) and \(
\Boundary \Ball_2 \) and omitting the background noise (see
\Figure{}~\ref{figure:AddingBackgroundNoise}).
Clearly, \( \Energy(\Station_n, p_i) = \Noise \) for \( i = 1, 2 \).
In particular, this means that \( \Station_n \neq \Station_0 \) as \(
\Energy(\Station_0, p_i) > \Noise \).
Since \( \dist{\Station_n, p_1} = \dist{\Station_n, p_2} = 1 / \sqrt{\Noise}
\), it follows that \( \dist{\Station_n, q} \leq 1 / \sqrt{\Noise} \) for all
points \( q \in \Segment{p_1}{p_2} \), hence \( \Energy(\Station_n, q) \geq
\Noise \).
Therefore \( \SINR_{\cA'}(\Station_0, p_i) = \SINR_{\cA}(\Station_0, p_i) \)
for \( i = 1, 2 \) and \( \SINR_{\cA}(\Station_0, q) \geq
\SINR_{\cA'}(\Station_0, q) \) for all points \( q \in \Segment{p_1}{p_2} \).
Since \(\cA'\) has no background noise, we may employ
Lemma~\ref{lemma:NoBackgroundNoise} to conclude that \(
\SINR_{\cA'}(\Station_0, q) \geq \beta \) for all points \( q \in
\Segment{p_1}{p_2} \). The assertion follows.
This completes the proof of Theorem~\ref{gtheorem:Convexity}.

\newcommand{\FigureAddingBackgroundNoise}{
\begin{figure}
\begin{center}
\includegraphics[width=0.3\textwidth]{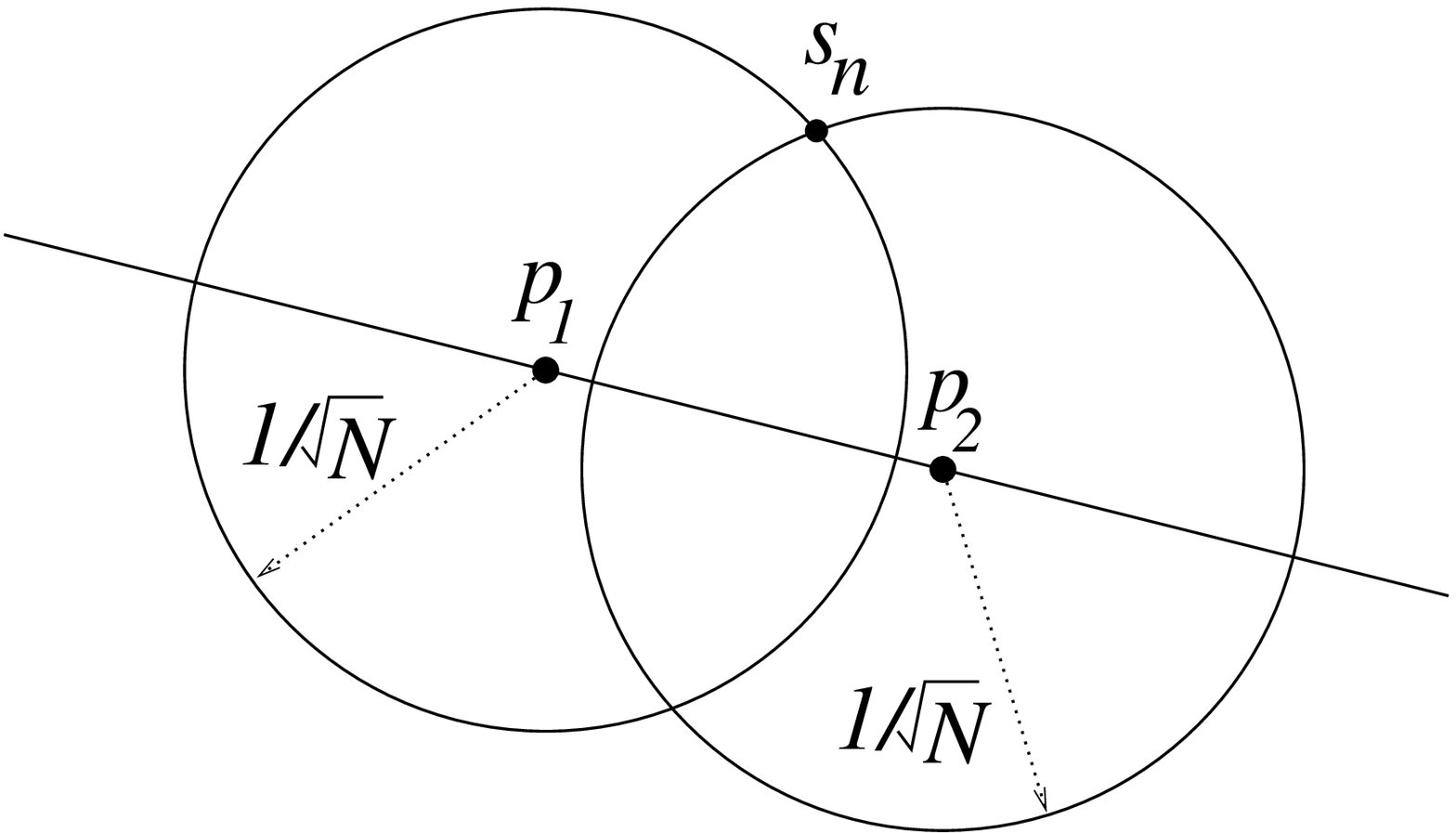}
\end{center}
\caption{\label{figure:AddingBackgroundNoise}
\(\cA'\) is obtained from \(\cA\) by omitting the background noise \(\Noise\)
and introducing station \(\Station_n\).
}
\end{figure}
} 
\LongVersion 
\FigureAddingBackgroundNoise{}
\LongVersionEnd 
} 
\LongVersion 
\SectionAddingNoise{}
\LongVersionEnd 

\ShortVersion 
\subsection{The fatness of the reception zones}
\label{subsection:Fatness}
In this section we develop a deeper understanding of the ``shape'' of the
reception zones by analyzing their fatness.
Consider a \UPN{} \( \cA = \langle S, \bar{1}, \Noise, \beta \rangle \), where
\( S = \{\Station_0, \dots, \Station_{n- 1}\} \) and\footnote{
Unlike the convexity proof
\LongVersion 
presented in Section~\ref{section:Convex}
\LongVersionEnd 
which holds for any \( \beta \geq 1 \), the analysis presented in the current
section is only suitable for \(\beta\) being a constant strictly greater than
\(1\).
In fact, when \( \beta = 1 \), the fatness parameter is not necessarily
defined (think of a trivial network).
} \(
\beta > 1 \) is a constant.
We focus on \(\Station_0\) and assume that its location is not shared by any
other station (otherwise, the reception zone \( \ReceptionZone_0 =
\{\Station_0\} \)).
In Appendix~\ref{section:FatnessExplicit} we establish explicit bounds on
\(\LargeRadius(\Station_0, \ReceptionZone_0)\) and \(\SmallRadius(\Station_0,
\ReceptionZone_0)\), showing the following.

\begin{theorem} \label{theorem:ExplicitBounds}
In a uniform energy network \( \cA = \langle S, \bar{1}, \Noise, \beta \rangle
\), where \( S = \{\Station_0, \dots, \Station_{n - 1}\} \) and \( \beta > 1
\) is a constant, if the minimum distance from \(\Station_0\) to any other
station is \( \MinDist > 0 \), then
\[
\SmallRadius(\Station_0, \ReceptionZone_0) \geq \frac{\MinDist}{\sqrt{\beta (n
- 1 + \Noise \cdot \MinDist^2)} + 1}
\quad \text{and} \quad
\LargeRadius(\Station_0, \ReceptionZone_0) \leq \frac{\MinDist}{\sqrt{\beta (1
+ \Noise \cdot \MinDist^2)} - 1} ~ .
\]
\end{theorem}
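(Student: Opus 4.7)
The plan is to bound each of $\SmallRadius(\Station_0,\ReceptionZone_0)$ and $\LargeRadius(\Station_0,\ReceptionZone_0)$ directly from the SINR condition using the triangle inequality, in the two natural directions. Throughout, write $r = \dist{\Station_0, p}$ for the candidate point $p$ and use the fact that every other station is at Euclidean distance at least $\MinDist$ from $\Station_0$, and that at least one station $\Station_j$ achieves $\dist{\Station_0, \Station_j} = \MinDist$.

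For the lower bound on $\SmallRadius$, I would show that every point $p$ with $r \leq \MinDist / (\sqrt{\beta(n-1+\Noise \MinDist^2)}+1)$ lies in $\ReceptionZone_0$. For any such $p$ and any station $\Station_j$ with $j \neq 0$, the reverse triangle inequality gives $\dist{\Station_j, p} \geq \MinDist - r > 0$, so the interference satisfies $\Interference(\Station_0, p) \leq (n-1)/(\MinDist - r)^2$. Plugging this into (\ref{eq:Def-SINR}) and using the trivial bound $(\MinDist - r)^2 \leq \MinDist^2$ on the noise term, one obtains
\[
\SINR(\Station_0, p) \;\geq\; \frac{(\MinDist-r)^2}{r^2\bigl((n-1)+\Noise(\MinDist-r)^2\bigr)} \;\geq\; \frac{(\MinDist-r)^2}{r^2\bigl(n-1+\Noise \MinDist^2\bigr)}.
\]
Requiring the last quantity to be at least $\beta$ rearranges to $\MinDist - r \geq r\sqrt{\beta(n-1+\Noise \MinDist^2)}$, i.e., exactly the stated bound on $r$. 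Hence the closed ball of that radius around $\Station_0$ is contained in $\ReceptionZone_0$, which gives the claimed lower bound on $\SmallRadius(\Station_0, \ReceptionZone_0)$.

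For the upper bound on $\LargeRadius$, I would argue in the opposite direction using only the single nearest station. Fix any $p \in \ReceptionZone_0$ and let $\Station_j$ be a station with $\dist{\Station_0, \Station_j} = \MinDist$; by the (ordinary) triangle inequality, $\dist{\Station_j, p} \leq \MinDist + r$, so
\[
\Interference(\Station_0, p) \;\geq\; \dist{\Station_j,p}^{-2} \;\geq\; (\MinDist + r)^{-2}.
\]
Combining with $\SINR(\Station_0, p) \geq \beta$ yields $(\MinDist + r)^2 \geq \beta r^2\bigl(1 + \Noise(\MinDist+r)^2\bigr)$. Since $(\MinDist + r)^2 \geq \MinDist^2$, the right-hand side is at least $\beta r^2(1 + \Noise \MinDist^2)$, which gives $\MinDist + r \geq r\sqrt{\beta(1+\Noise \MinDist^2)}$. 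Because $\beta > 1$ (and $\Noise \geq 0$) the factor $\sqrt{\beta(1+\Noise \MinDist^2)} - 1$ is strictly positive, so we can solve for $r$ and obtain $r \leq \MinDist/(\sqrt{\beta(1+\Noise \MinDist^2)}-1)$, which is precisely the asserted upper bound on $\LargeRadius(\Station_0, \ReceptionZone_0)$.

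The proof is essentially two triangle-inequality estimates plus one algebraic simplification in each direction; there is no real obstacle. The only mild subtlety is choosing when to loosen the bound by replacing $(\MinDist \pm r)^2$ with $\MinDist^2$: this loosening must be done on the noise contribution (not on the signal/interference terms) in order to produce the clean closed-form expressions displayed in the statement, and it is this step that makes the bounds depend on the constant $\Noise \MinDist^2$ rather than on $r$ itself.
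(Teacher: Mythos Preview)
Your proof is correct. The underlying idea is the same as the paper's---bound the interference from below and above using the nearest station(s) at distance $\MinDist$---but your presentation is more direct. The paper instead constructs explicit ``extreme scenario'' networks (all $n-1$ interferers collocated at distance $\MinDist$ for the $\SmallRadius$ bound; a single interferer at distance $\MinDist$ with the rest at infinity for the $\LargeRadius$ bound), then replaces the background noise $\Noise$ by a virtual station of power $\Noise\MinDist^2$ placed at distance $\MinDist$, and finally solves the resulting one-dimensional SINR equation exactly. Your triangle-inequality argument accomplishes the same thing without introducing auxiliary networks, and your step of relaxing $(\MinDist\pm r)^2$ to $\MinDist^2$ on the noise term is exactly what the paper's virtual-noise-station trick achieves. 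So the two routes yield identical bounds; yours is just a bit leaner.
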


These bounds imply that \( \FatnessParameter(\Station_0, \ReceptionZone_0) = O
(\sqrt{n}) \). This is improved in Appendix~\ref{section:FatnessImplicit},
where we show that \( \FatnessParameter(\Station_0, \ReceptionZone_0) = O (1)
\), thus establishing Theorem~\ref{gtheorem:Fatness}.
\ShortVersionEnd 

\LongVersion 
\section{The fatness of the reception zones}
\label{section:Fatness}
In Section~\ref{section:Convex} we showed that the reception zone of each
station in a \UPN{} is convex.
In this section we develop a deeper understanding of the ``shape'' of the
reception zones by analyzing their fatness.
Consider a \UPN{} \( \cA = \langle S, \bar{1}, \Noise, \beta \rangle \), where
\( S = \{\Station_0, \dots, \Station_{n- 1}\} \) and\footnote{
Unlike the convexity proof presented in Section~\ref{section:Convex} which
holds for any \( \beta \geq 1 \), the analysis presented in the current
section is only suitable for \(\beta\) being a constant strictly greater than
\(1\).
In fact, when \( \beta = 1 \), the fatness parameter is not necessarily
defined (think of a trivial network).
} \(
\beta > 1 \) is a constant.
We focus on \(\Station_0\) and assume that its location is not shared by any
other station (otherwise, the reception zone \( \ReceptionZone_0 =
\{\Station_0\} \)).
In Section~\ref{section:FatnessExplicit} we establish explicit bounds on
\(\LargeRadius(\Station_0, \ReceptionZone_0)\) and \(\SmallRadius(\Station_0,
\ReceptionZone_0)\).
These bounds imply that \( \FatnessParameter(\Station_0, \ReceptionZone_0) = O
(\sqrt{n}) \).
This is improved in Section~\ref{section:FatnessImplicit},
where we show that \( \FatnessParameter(\Station_0, \ReceptionZone_0) = O (1)
\), thus establishing Theorem~\ref{gtheorem:Fatness}.
\LongVersionEnd 

\newcommand{\SectionFatnessPartA}{
\subsection{Explicit bounds}
\label{section:FatnessExplicit}
Our goal in this section is to establish an explicit lower bound on
\(\SmallRadius(\Station_0, \ReceptionZone_0)\) and an explicit upper bound on
\(\LargeRadius(\Station_0, \ReceptionZone_0)\).
Since \(\ReceptionZone_0\) is compact and convex, it follows that there exists
some points \( q_{\SmallRadius}, q_{\LargeRadius} \in \Boundary
\ReceptionZone_0 \) such that \( \dist{\Station_0, q_{\SmallRadius}} =
\SmallRadius(\Station_0, \ReceptionZone_0) \) and \( \dist{\Station_0,
q_{\LargeRadius}} = \LargeRadius(\Station_0, \ReceptionZone_0) \).
In fact, we may redefine \(\SmallRadius(\Station_0, \ReceptionZone_0)\) as the
distance from \(\Station_0\) to a closest point in \( \Boundary
\ReceptionZone_0 \) and \(\LargeRadius(\Station_0, \ReceptionZone_0)\) as the
distance from \(\Station_0\) to a farthest point in \( \Boundary
\ReceptionZone_0 \).

Fix \( \MinDist = \min \{ \dist{\Station_0, \Station_i} \mid i > 0  \} \).
An extreme scenario for establishing a lower bound on
\(\SmallRadius(\Station_0, \ReceptionZone_0)\) would be to place
\(\Station_0\) in \((0, 0)\) and all other \( n - 1 \) stations in \(
(\MinDist, 0) \).
This introduces the \UPN{} \( \cA_{\SmallRadius} = \langle \{ (0, 0),
(\MinDist, 0), \dots, (\MinDist, 0) \}, \bar{1}, \Noise, \beta \rangle \).
The point \(q_{\SmallRadius}\) whose distance to \(\Station_0\) realizes
\(\SmallRadius(\Station_0, \ReceptionZone_0)\) is thus located at \((d, 0)\)
for some \( 0 < d < \MinDist \).
On the other hand, an extreme scenario for establishing an upper bound on
\(\LargeRadius(\Station_0, \ReceptionZone_0)\) would be to place
\(\Station_0\) in \((0, 0)\), \(\Station_1\) in \( (\MinDist, 0) \), and all
other \( n - 2 \) stations in \( (\infty, 0) \) so that their energy at the
vicinity of \(\Station_0\) is ignored.
This introduces the \UPN{} \( \cA_{\LargeRadius} = \langle \{ (0, 0),
(\MinDist, 0), (\infty, 0) \dots, (\infty, 0) \}, \bar{1}, \Noise, \beta
\rangle \).
The point \(q_{\LargeRadius}\) whose distance to \(\Station_0\) realizes
\(\LargeRadius(\Station_0, \ReceptionZone_0)\) is thus located at \((-D, 0)\)
for some \( D > 0 \).

For the sake of analysis, we shall replace the background noise \(\Noise\) in
the above scenarios with a new station \(\Station_n\) located at \( (\MinDist,
0) \) whose power is \( \Noise \cdot \MinDist^2 \).
More formally, the \UPN{} \(\cA_{\SmallRadius}\) is replaced by the network
\[
\cA'_{\SmallRadius}
= \left\langle \{ (0, 0), (\MinDist, 0), \dots, (\MinDist, 0), (\MinDist, 0)
\}, (1, \dots, 1, \Noise \cdot \MinDist^2), 0, \beta \right\rangle
\]
and the \UPN{} \(\cA_{\LargeRadius}\) is replaced by the network
\[
\cA'_{\LargeRadius}
= \left\langle \{ (0, 0), (\MinDist, 0), (\infty, 0) \dots, (\infty, 0),
(\MinDist, 0) \}, (1, \dots, 1, \Noise \cdot \MinDist^2), 0, \beta
\right\rangle ~ .
\]
Note that the energy of the new station \(\Station_n\) at point \((x, 0)\) is
greater than \(\Noise\) for all \( 0 < x < \MinDist \); exactly \(\Noise\) for
\( x = 0 \); and smaller than \(\Noise\) for all \( x < 0 \).
Therefore the value of \(\SmallRadius(\Station_0, \ReceptionZone_0)\)
(respectively, \(\LargeRadius(\Station_0, \ReceptionZone_0)\)) under
\(\cA'_{\SmallRadius}\) (resp., \(\cA'_{\LargeRadius}\)) is smaller (resp.,
greater) than that under \(\cA_{\SmallRadius}\) (resp.,
\(\cA_{\LargeRadius}\)).
In the remainder of this section we establish a lower bound (resp., an upper
bound) on the former.

\ShortVersion 
\begin{AvoidOverfullParagraph}
\ShortVersionEnd 
In the context of \(\cA'_{\SmallRadius}\), we would like to compute the value
of \( d > 0 \) that solves the equation \(
\SINR_{\cA'_{\SmallRadius}}(\Station_0, (d, 0)) = \beta \), which means that
\LongVersion 
\[
\frac{d^{-2}}{(n-1 + \Noise \cdot \MinDist^2) (\MinDist - d)^{-2}} ~=~ \beta~,
\]
\LongVersionEnd 
\ShortVersion 
\( d^{-2} / \left((n-1 + \Noise \cdot \MinDist^2) (\MinDist - d)^{-2}\right) 
= \beta~,\)
\ShortVersionEnd 
or equivalently,
\( (\MinDist - d)^2 = d^2 \beta (n - 1 + \Noise \cdot \MinDist^2) \), or, 
\LongVersion 
\[ d ~=~ \frac{\MinDist}{\sqrt{\beta (n - 1 + \Noise \cdot \MinDist^2)} + 1} ~.
\]
\LongVersionEnd 
\ShortVersion 
\( d = \MinDist / 
\left( \sqrt{\beta (n - 1 + \Noise \cdot \MinDist^2)} + 1 \right).\)
\ShortVersionEnd 
Hence \( \SmallRadius(\Station_0, \ReceptionZone_0) \geq
\frac{\MinDist}{\sqrt{\beta (n - 1 + \Noise \cdot \MinDist^2)} + 1} \).
\ShortVersion 
\end{AvoidOverfullParagraph}
\ShortVersionEnd 

\ShortVersion 
\begin{AvoidOverfullParagraph}
\ShortVersionEnd 
In the context of \(\cA'_{\LargeRadius}\), we would like to compute the value
of \( D > 0 \) that solves the equation \(
\SINR_{\cA'_{\LargeRadius}}(\Station_0, (-D, 0)) = \beta \), which means that
\LongVersion 
\[
\frac{D^{-2}}{(1 + \Noise \cdot \MinDist^2) (\MinDist + D)^{-2}} ~=~ \beta ~,
\]
\LongVersionEnd 
\ShortVersion 
\( D^{-2} / \left( (1+\Noise\cdot\MinDist^2) (\MinDist+D)^{-2} \right) = 
\beta , \)
\ShortVersionEnd 
or equivalently,
\( (\MinDist + D)^2 = D^2 \beta (1 + \Noise \cdot \MinDist^2) \), or
\LongVersion 
\[ 
D ~=~ \frac{\MinDist}{\sqrt{\beta (1 + \Noise \cdot \MinDist^2)} - 1} ~ .
\]
\LongVersionEnd 
\ShortVersion 
\( D = 
\MinDist / \left( \sqrt{\beta (1 + \Noise \cdot \MinDist^2)} - 1 \right) . \)
\ShortVersionEnd 
Hence \( \LargeRadius(\Station_0, \ReceptionZone_0) \leq
\frac{\MinDist}{\sqrt{\beta (1 + \Noise \cdot \MinDist^2)} - 1} \).
\ShortVersion 
\end{AvoidOverfullParagraph}
\ShortVersionEnd 

Theorem~\ref{theorem:ExplicitBounds} follows from the above bounds and from
the following observation.

\begin{observation*}
The inequality \( \frac{\sqrt{a + c} + 1}{\sqrt{b + c} - 1} \leq
\frac{\sqrt{a} + 1}{\sqrt{b} - 1} \) holds for any choice of reals \( a \geq b
> 1 \) and \( c > 0 \).
\end{observation*}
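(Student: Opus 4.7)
The plan is to view the left-hand side as a function of $c$ and show it is non-increasing on $[0,\infty)$; since the right-hand side is just its value at $c=0$, the claim follows immediately. Formally, I define $f(c) = \frac{\sqrt{a+c}+1}{\sqrt{b+c}-1}$ for $c \geq 0$, which is well-defined and positive because the hypothesis $b>1$ forces $\sqrt{b+c}>1$ throughout.

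The key step is to differentiate $f$ and show that $f'(c) \leq 0$. Applying the quotient rule and then clearing the positive denominator, the sign of $f'(c)$ equals the sign of
\[
\sqrt{b+c}\left(\sqrt{b+c}-1\right) \;-\; \sqrt{a+c}\left(\sqrt{a+c}+1\right).
\]
Expanding the products using $(\sqrt{b+c})^2 = b+c$ and $(\sqrt{a+c})^2 = a+c$, this collapses to
\[
(b-a) \;-\; \sqrt{b+c} \;-\; \sqrt{a+c}.
\]
Since $a \geq b$, the first term is non-positive, and the two square-root terms are strictly positive, so the whole expression is strictly negative. Hence $f'(c) < 0$ on $(0,\infty)$, $f$ is strictly decreasing, and the observation follows.

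There is no real obstacle here; the only care required is to track signs (the denominator $\sqrt{b+c}-1$ is positive by hypothesis $b>1$, so cross-multiplication preserves direction) and to carry out the algebraic collapse cleanly. For readers who prefer a calculus-free route, one can set $u=\sqrt{a+c}$, $v=\sqrt{b+c}$, $x=\sqrt{a}$, $y=\sqrt{b}$ and observe that the identity $u-x = c/(u+x) \leq c/(v+y) = v-y$ holds (since $a\geq b$ gives $u\geq v$ and $x\geq y$); cross-multiplication of the target inequality then reduces it to $(y-1)(u-x) \leq (x+1)(v-y)$, which follows from $u-x\leq v-y$ together with $y-1\leq x+1$. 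Either route yields the observation and, combined with the explicit bounds on $\SmallRadius(\Station_0,\ReceptionZone_0)$ and $\LargeRadius(\Station_0,\ReceptionZone_0)$ derived above, establishes Theorem~\ref{theorem:ExplicitBounds}.
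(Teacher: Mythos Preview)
Your proof is correct. The paper itself states this observation without proof, so there is no original argument to compare against; your calculus approach (and the alternative algebraic one) both fill that gap cleanly. The derivative computation checks out: after clearing the positive factors $(\sqrt{b+c}-1)^2$ and $2\sqrt{a+c}\,\sqrt{b+c}$, the sign of $f'(c)$ is indeed that of $(b-a)-\sqrt{b+c}-\sqrt{a+c}<0$, and the calculus-free variant via $(y-1)(u-x)\leq(x+1)(v-y)$ is also valid since $0<u-x\leq v-y$ and $0<y-1\leq x+1$.
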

} 
\newcommand{\SectionFatnessPartB}{
\subsection{An improved bound on the fatness parameter}
\label{section:FatnessImplicit}
In this section we prove Theorem~\ref{gtheorem:Fatness} by establishing the
following theorem.

\begin{theorem} \label{theorem:BoundingBoundedBalls}
The fatness parameter of \(\ReceptionZone_0\) with respect to \(\Station_0\)
satisfies
\[
\FatnessParameter(\Station_0, \ReceptionZone_0)
~ \leq ~ \frac{\sqrt{\beta} + 1}{\sqrt{\beta} - 1}
~ = ~ O(1) ~ .
\]
\end{theorem}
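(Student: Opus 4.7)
The plan is to sharpen Theorem~\ref{theorem:ExplicitBounds} by comparing the interferences at the two extreme boundary points in a single estimate, rather than bounding \SmallRadius{} and \LargeRadius{} separately (which unavoidably loses a $\sqrt{n}$ factor). Since $\ReceptionZone_0$ is compact by Observation~\ref{observation:ReceptionSetBasicProperties} and convex by Theorem~\ref{gtheorem:Convexity}, I pick points $q_\SmallRadius, q_\LargeRadius \in \Boundary \ReceptionZone_0$ with $\dist{\Station_0, q_\SmallRadius} = \SmallRadius$ and $\dist{\Station_0, q_\LargeRadius} = \LargeRadius$ (writing $\SmallRadius = \SmallRadius(\Station_0, \ReceptionZone_0)$ and $\LargeRadius = \LargeRadius(\Station_0, \ReceptionZone_0)$). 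Since $\SINR(\Station_0, \cdot) = \beta$ on the boundary, this gives
\[
\left(\LargeRadius/\SmallRadius\right)^2 = \frac{\Interference(\Station_0, q_\SmallRadius) + \Noise}{\Interference(\Station_0, q_\LargeRadius) + \Noise}~.
\]

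The heart of the proof is a term-by-term bound on the right side. Let $\Station_1$ attain $\MinDist = \min_{i>0} \dist{\Station_0, \Station_i}$. For each interferer $\Station_i$ with $d_i = \dist{\Station_0, \Station_i} \geq \MinDist$, the triangle inequality yields $\dist{\Station_i, q_\SmallRadius} \geq d_i - \SmallRadius$ and $\dist{\Station_i, q_\LargeRadius} \leq d_i + \LargeRadius$, so
\[
\frac{\dist{\Station_i, q_\SmallRadius}^{-2}}{\dist{\Station_i, q_\LargeRadius}^{-2}} ~\leq~ \left(\frac{d_i + \LargeRadius}{d_i - \SmallRadius}\right)^2~.
\]
Crucially, this bound is strictly \emph{decreasing} in $d_i$ for $d_i > \SmallRadius$ (the derivative in $d_i$ is proportional to $-(\SmallRadius + \LargeRadius)$), so the per-term bound is maximized at $d_i = \MinDist$, yielding the uniform constant $K = ((\MinDist+\LargeRadius)/(\MinDist-\SmallRadius))^2$. (Note $\MinDist > \SmallRadius$; in fact $\SmallRadius \leq \MinDist/2$, since by Observation~\ref{observation:ReceptionSetBasicProperties} $\ReceptionZone_0$ lies in the Voronoi cell of $\Station_0$ and cannot extend past distance $\MinDist/2$ toward $\Station_1$.) Summing over $i > 0$ gives $\Interference(\Station_0, q_\SmallRadius) \leq K \cdot \Interference(\Station_0, q_\LargeRadius)$, and since $K \geq 1$ this extends to the $+\Noise$ versions. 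Hence $\LargeRadius/\SmallRadius \leq (\MinDist+\LargeRadius)/(\MinDist-\SmallRadius)$, which rearranges to
\[
\frac{1}{\SmallRadius} - \frac{1}{\LargeRadius} ~\leq~ \frac{2}{\MinDist}~.
\]

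To finish, I combine this with the single-station Apollonius-style upper bound $\LargeRadius \leq \MinDist/(\sqrt{\beta}-1)$, obtained by restricting to $\Station_1$'s contribution at $q_\LargeRadius$: $\LargeRadius^{-2} \geq \beta \dist{\Station_1, q_\LargeRadius}^{-2} \geq \beta(\MinDist+\LargeRadius)^{-2}$, i.e., $\MinDist + \LargeRadius \geq \sqrt{\beta}\,\LargeRadius$. Equivalently, $1/\MinDist \leq 1/[\LargeRadius(\sqrt{\beta}-1)]$, which when plugged into the displayed inequality above gives
\[
\frac{1}{\SmallRadius} ~\leq~ \frac{1}{\LargeRadius} + \frac{2}{\LargeRadius(\sqrt{\beta}-1)} ~=~ \frac{1}{\LargeRadius}\cdot\frac{\sqrt{\beta}+1}{\sqrt{\beta}-1}~,
\]
which is exactly the desired $\LargeRadius/\SmallRadius \leq (\sqrt{\beta}+1)/(\sqrt{\beta}-1)$.

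The main obstacle --- and the insight that eliminates the $\sqrt{n}$ loss in Theorem~\ref{theorem:ExplicitBounds} --- is the monotonicity of $(d + \LargeRadius)/(d - \SmallRadius)$ in $d$ from the second paragraph. Even when many stations cluster near distance $\MinDist$ (which is precisely what made the explicit lower bound on $\SmallRadius$ weak), their joint contribution inflates $\Interference(\Station_0, q_\SmallRadius)$ and $\Interference(\Station_0, q_\LargeRadius)$ by comparable factors, so the ratio stays controlled by the single nearest-station Apollonius disk.
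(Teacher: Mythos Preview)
Your proof is correct and takes a genuinely different---and more elementary---route than the paper.

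The paper proceeds via a three-stage reduction: first the exact two-station computation on a line (Lemma~\ref{lemma:OneDimensionTwoStations}), then an extension to \emph{positive colinear} networks (all interferers on the positive $x$-axis) by replacing each interferer with a station at the nearest location $(d,0)$ whose power is calibrated to match the energy at $(\RightMost,0)$, and finally a reduction of the general case to the positive colinear case by rotating every interferer around the extremal point $q=(-\LargeRadius,0)$ onto the $x$-axis. The positive-colinear step leans on the convexity theorem (to argue that $\SmallRadius=\RightMost$ via symmetry) and on a somewhat delicate comparison showing that the calibrated replacement shrinks the reception interval on the right while extending it on the left.

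Your argument sidesteps all of this machinery. By writing $(\LargeRadius/\SmallRadius)^2$ as a ratio of total interferences and bounding that ratio term-by-term via the triangle inequality, you reduce everything to the monotonicity of $d\mapsto(d+\LargeRadius)/(d-\SmallRadius)$ on $(\SmallRadius,\infty)$. This is precisely the observation that the paper's station-relocation argument encodes geometrically, but you extract it algebraically in one line. The resulting inequality $1/\SmallRadius-1/\LargeRadius\le 2/\MinDist$ is then combined with the same single-station Apollonius bound $\LargeRadius\le\MinDist/(\sqrt\beta-1)$ that underlies the paper's base case. A pleasant byproduct is that your argument uses only compactness of $\ReceptionZone_0$ (to realize $q_\SmallRadius,q_\LargeRadius$ on the boundary) and the Voronoi containment from Observation~\ref{observation:ReceptionSetBasicProperties} (to ensure $\SmallRadius<\MinDist$); the full convexity theorem, which the paper invokes in the positive-colinear step, is not actually needed here.
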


Theorem~\ref{theorem:BoundingBoundedBalls} is proved in three steps.
First, in
\LongVersion 
Section~\ref{section:OneDimensionTwoStations}
\LongVersionEnd 
\ShortVersion 
Appendix~\ref{section:OneDimensionTwoStations}
\ShortVersionEnd 
we bound the ratio \( \LargeRadius / \SmallRadius \) in a setting of two
stations in a one dimensional space.
This is used in
\LongVersion 
Section~\ref{section:PositiveColinear}
\LongVersionEnd 
\ShortVersion 
Appendix~\ref{section:PositiveColinear}
\ShortVersionEnd 
to establish the desired bound for a special type of \UPN{}s called
\emph{positive colinear} networks.
We conclude in
\LongVersion 
Section~\ref{section:GeneralUniformEnergy},
\LongVersionEnd 
\ShortVersion 
Appendix~\ref{section:GeneralUniformEnergy},
\ShortVersionEnd 
where we reduce the general case to the case of positive colinear networks.

\subsubsection{Two stations in a one dimensional space}
\label{section:OneDimensionTwoStations}
Let \(\cA\) be a network consisting of two stations \(\Station_0,
\Station_1\) with no background noise (i.e., \( \Noise = 0 \)).
Consider the embedding of \(\cA\) in the Euclidean one dimensional space
\(\Reals\) and assume without loss of generality that \(\Station_0\) is
located at \( a_0 = 0 \) and \(\Station_1\) is located at \( a_1 = 1 \)
(recall that this is made possible due to Lemma~\ref{lemma:Transformation}).
Suppose that \(\Station_0\) admits a unit transmitting power \( \Power_0 = 1 \)
while the transmitting power of \(\Station_1\) is any \( \Power_1 \geq 1 \).
Let \( \RightMost = \max \{ p > 0 \mid \SINR_{\cA}(\Station_0, p) \geq \beta
\} \) and let \( \LeftMost = \min \{ p < 0 \mid \SINR_{\cA}(\Station_0, p)
\geq \beta \} \) (see \Figure{}~\ref{figure:TwoNodesOnTheLine}).
It is easy to verify that \( \ReceptionZone_0 = [\LeftMost, \RightMost] \) and
that \( \SmallRadius = \RightMost \) and \( \LargeRadius = -\LeftMost \).

\newcommand{\FigureTwoNodesOnTheLine}{
\begin{figure}
\begin{center}
\LongVersion 
\includegraphics[width=0.35\textwidth]{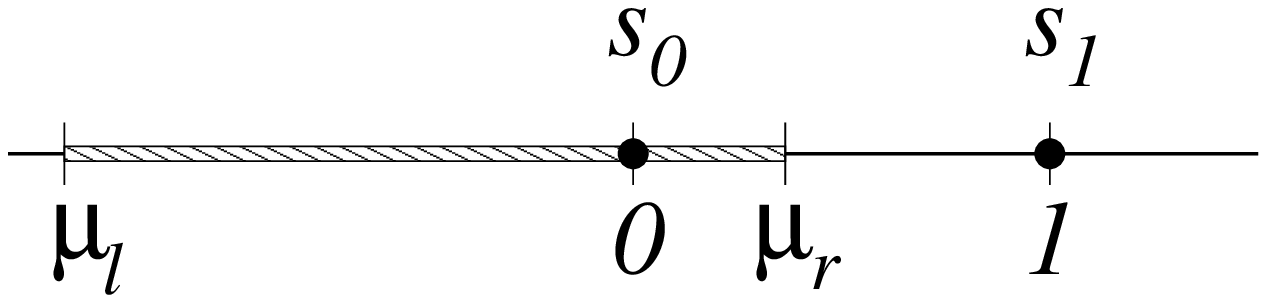}
\LongVersionEnd 
\ShortVersion 
\includegraphics[width=0.25\textwidth]{figs/two-nodes-on-the-line}
\ShortVersionEnd 
\end{center}
\caption{\label{figure:TwoNodesOnTheLine}
The embedding of \(\Station_0\) and \(\Station_1\) in a one dimensional
space.
}
\end{figure}
} 
\LongVersion 
\FigureTwoNodesOnTheLine{}
\LongVersionEnd 

\begin{lemma} \label{lemma:OneDimensionTwoStations}
The network \(\cA\) satisfies \( \LargeRadius / \SmallRadius \leq
\frac{\sqrt{\beta} + 1}{\sqrt{\beta} - 1} \), with equality attained when \(
\Power_1 = 1 \).
\end{lemma}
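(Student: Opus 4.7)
The plan is to just compute $\RightMost$ and $\LeftMost$ explicitly by solving the SINR boundary equation on each side of $\Station_0$, and then show the resulting ratio is monotone decreasing in $\Power_1$.

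First, I would use the star-shape/convexity results already established (in particular Lemma~\ref{lemma:StarConvex} applied in the one-dimensional embedding) to observe that $\ReceptionZone_0$ is the closed interval $[\LeftMost, \RightMost]$, so that $\RightMost$ and $\LeftMost$ are precisely the two real solutions of $\SINR_{\cA}(\Station_0, p) = \beta$. Since there is no noise and only one interfering station, this boundary equation becomes $(1-p)^2 = \beta \Power_1 \, p^2$, i.e., $|1-p| = \sqrt{\beta \Power_1}\,|p|$. Solving the two cases $p > 0$ (where $0 < p < 1$, giving $1 - p = \sqrt{\beta \Power_1}\,p$) and $p < 0$ (writing $p = -x$ with $x > 0$, giving $1 + x = \sqrt{\beta \Power_1}\,x$) yields
\[
\SmallRadius ~=~ \RightMost ~=~ \frac{1}{\sqrt{\beta \Power_1} + 1}, \qquad
\LargeRadius ~=~ -\LeftMost ~=~ \frac{1}{\sqrt{\beta \Power_1} - 1},
\]
where the second denominator is positive since $\beta > 1$ and $\Power_1 \geq 1$.

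Taking the ratio then gives
\[
\frac{\LargeRadius}{\SmallRadius} ~=~ \frac{\sqrt{\beta \Power_1} + 1}{\sqrt{\beta \Power_1} - 1}.
\]
To finish, I would show that the function $f(t) = (t+1)/(t-1)$ is strictly decreasing on $(1, \infty)$ (its derivative is $-2/(t-1)^2 < 0$). Since $\Power_1 \geq 1$ implies $\sqrt{\beta \Power_1} \geq \sqrt{\beta} > 1$, monotonicity gives $f(\sqrt{\beta \Power_1}) \leq f(\sqrt{\beta})$, which is exactly the claimed bound $\LargeRadius/\SmallRadius \leq (\sqrt{\beta}+1)/(\sqrt{\beta}-1)$, with equality precisely when $\Power_1 = 1$.

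There is essentially no obstacle; the only mild subtlety is justifying that $\RightMost$ and $\LeftMost$ really are the extremal points of $\ReceptionZone_0$, but this is immediate from the star-shape property with respect to $\Station_0$ already proved in Lemma~\ref{lemma:StarConvex} (which implies that on each side of $\Station_0$ the SINR is monotone in distance, so the reception zone is a single interval bounded by the two SINR-equals-$\beta$ roots).
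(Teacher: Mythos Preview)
Your proof is correct and follows essentially the same route as the paper: both compute the two boundary points explicitly from the equation $(1-p)^2 = \beta\Power_1 p^2$, obtain the ratio $(\sqrt{\beta\Power_1}+1)/(\sqrt{\beta\Power_1}-1)$, and bound it by its value at $\Power_1=1$ (you make the monotonicity step explicit, which the paper leaves implicit). One minor caveat: Lemma~\ref{lemma:StarConvex} is stated only for uniform power networks, whereas here $\Power_1\geq 1$ need not equal $1$; the paper simply asserts that $\ReceptionZone_0=[\LeftMost,\RightMost]$ is ``easy to verify,'' and indeed the monotonicity of $p\mapsto (1-p)^2/p^2$ on each side of the origin is immediate without appealing to that lemma.
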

\begin{proof}
The boundary points \(\RightMost\) and \(\LeftMost\) of \(\ReceptionZone_0\)
are the solutions to the quadratic equation
\[
\frac{(x - 1)^2}{\Power_1 x^2} = \beta
\quad \Longleftrightarrow \quad
(\beta \Power_1 - 1) x^2 + 2x - 1 = 0 ~ .
\]
Solving this equation, we get
\begin{align*}
\RightMost
& = \frac{-2 + \sqrt{4 \beta \Power_1}}{2 \beta \Power_1 - 2}
= \frac{\sqrt{\beta \Power_1} - 1}{\beta \Power_1 - 1} \\
\LeftMost
& = \frac{-2 - \sqrt{4 \beta \Power_1}}{2 \beta \Power_1 - 2}
= - \frac{\sqrt{\beta \Power_1} + 1}{\beta \Power_1 - 1} ~ .
\end{align*}
Therefore the ratio \( \LargeRadius / \SmallRadius \) satisfies
\[
\frac{\LargeRadius}{\SmallRadius} = \frac{\sqrt{\beta \Power_1} +
1}{\sqrt{\beta \Power_1} - 1} \leq \frac{\sqrt{\beta} + 1}{\sqrt{\beta} - 1}
\]
as desired.
\end{proof}

\subsubsection{Positive colinear networks}
\label{section:PositiveColinear}
In this section we switch back to the Euclidean plane \(\Reals^2\) and
consider a special type of \UPN{}s.
A network \( \cA = \langle \{\Station_0, \dots, \Station_{n - 1}\}, \bar{1},
\Noise, \beta \rangle \) is said to be \emph{positive colinear} if \(
\Station_0 = (0, 0) \) and \( \Station_i = (a_i, 0) \) for some \( a_i > 0 \)
for every \( 1 \leq i \leq n - 1 \).
Positive colinear networks play an important role in the subsequent analysis
due to the following lemma.
(Refer to \Figure{}~\ref{figure:NNodesOnTheLine} for illustration.)

\newcommand{\FigureNNodesOnTheLine}{
\begin{figure}
\begin{center}
\LongVersion 
\includegraphics[width=0.4\textwidth]{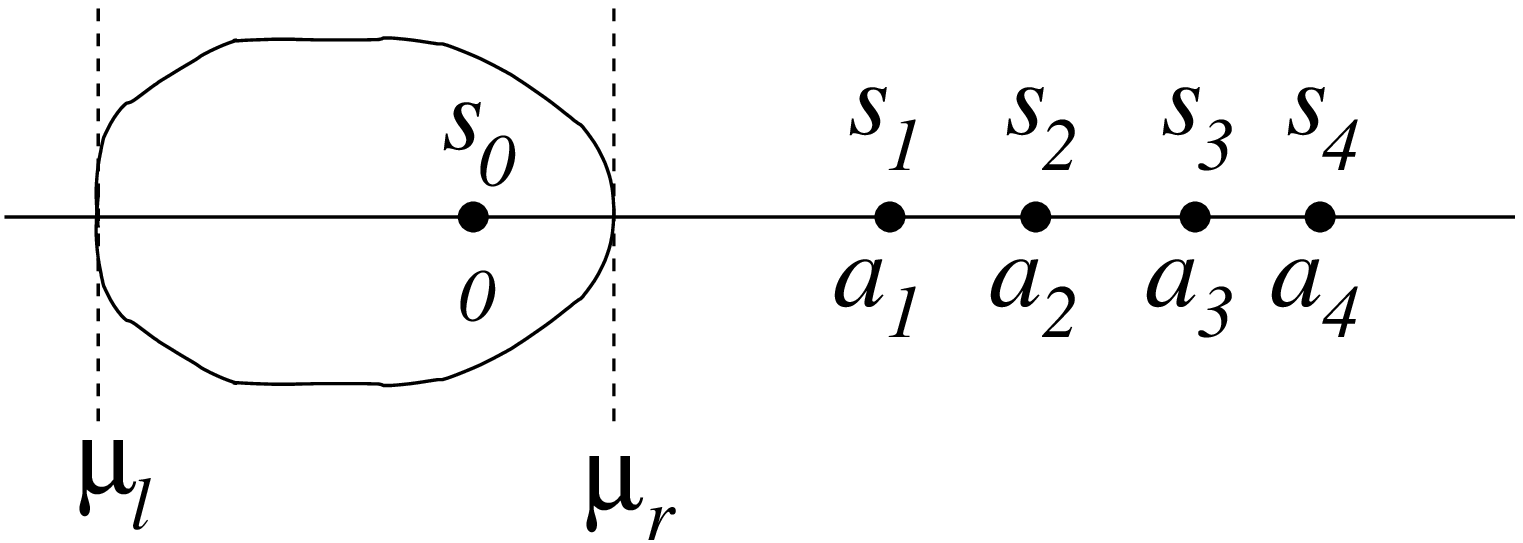}
\LongVersionEnd 
\ShortVersion 
\includegraphics[width=0.3\textwidth]{figs/n-nodes-on-the-line}
\ShortVersionEnd 
\end{center}
\caption{\label{figure:NNodesOnTheLine}
A positive colinear network.
}
\end{figure}
} 
\LongVersion 
\FigureNNodesOnTheLine{}
\LongVersionEnd 

\ShortVersion 
\begin{AvoidOverfullParagraph}
\ShortVersionEnd 
\begin{lemma} \label{lemma:PositiveColinear}
Let \(\cA\) be a positive colinear \UPN{}.
Fix \( \RightMost = \max \{ r > 0 \mid \SINR_{\cA}(\Station_0, (r, 0))
\geq \beta \} \) and \( \LeftMost = \min \{ r < 0 \mid \SINR_{\cA}(\Station_0,
(r, 0)) \geq \beta \} \).
Then
\[
\SmallRadius = \RightMost ~ ,
\]
\[
\LargeRadius = -\LeftMost ~ ,
\]
and
\[
-\frac{\LeftMost}{\RightMost} \leq \frac{\sqrt{\beta} + 1}{\sqrt{\beta} - 1} ~
.
\]
\end{lemma}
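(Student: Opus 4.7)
The plan is to treat the three assertions in order, with the ratio bound being the main technical challenge. For the first two equalities, I exploit the fact that all interferers lie on the positive $x$-axis. For any point $p = (r \cos \theta, r \sin \theta)$ at distance $r$ from $\Station_0$, expansion gives $\dist{p, \Station_i}^{2} = r^{2} - 2 a_i r \cos \theta + a_i^{2}$. Comparing against $\dist{(\pm r, 0), \Station_i}^{2} = r^{2} \mp 2 a_i r + a_i^{2}$ and using $a_i > 0$ together with $-1 \leq \cos \theta \leq 1$ yields
\[
\dist{(r, 0), \Station_i} ~\leq~ \dist{p, \Station_i} ~\leq~ \dist{(-r, 0), \Station_i}
\]
for every $i \geq 1$. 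Since $\Energy(\Station_0, \cdot)$ depends only on $r$ while $\Energy(\Station_i, \cdot) = \dist{\cdot, \Station_i}^{-2}$ inverts the inequality, this translates to $\SINR_{\cA}(\Station_0, (r, 0)) \leq \SINR_{\cA}(\Station_0, p) \leq \SINR_{\cA}(\Station_0, (-r, 0))$. Hence $\Ball(\Station_0, \RightMost) \subseteq \ReceptionZone_0 \subseteq \Ball(\Station_0, -\LeftMost)$, and combined with the fact that $(\RightMost, 0), (\LeftMost, 0) \in \Boundary \ReceptionZone_0$, this gives $\SmallRadius = \RightMost$ and $\LargeRadius = -\LeftMost$.

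For the ratio bound, write $R = \RightMost$, $L = -\LeftMost$, and assume without loss of generality that $a_1 = \min_{i \geq 1} a_i$. The plan is to reduce to a two-station auxiliary network $\cA^{*}$ consisting of $\Station_0$ at the origin and a single interferer $\Station_{1}^{*}$ at $(a_1, 0)$ with transmitting power
\[
\Power_{1}^{*} ~=~ (a_1 - R)^{2} \left( \sum_{i \geq 1} (a_i - R)^{-2} + \Noise \right)
\]
and no background noise. By construction, (i) $\Power_{1}^{*} \geq 1$, since already the $i = 1$ summand in the bracket contributes $(a_1 - R)^{-2}$, and (ii) direct substitution shows $\SINR_{\cA^{*}}(\Station_0, (R, 0)) = \beta$, so the rightmost boundary point of $\cA^{*}$ on the $x$-axis is $\RightMost^{*} = R$.

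The critical technical step is to show that $\cA^{*}$ reaches at least as far to the left as $\cA$, namely $L^{*} = -\LeftMost^{*} \geq L$. By star-shape (Lemma~\ref{lemma:StarConvex}) the reception zone of $\cA^{*}$ intersects the negative $x$-axis in a connected interval $[-L^{*}, 0]$, so it suffices to verify $\SINR_{\cA^{*}}(\Station_0, (-L, 0)) \geq \beta$, which reduces to $\Power_{1}^{*} (a_1 + L)^{-2} \leq \sum_{i \geq 1} (a_i + L)^{-2} + \Noise$. After substituting $\Power_{1}^{*}$, this follows term-by-term from the inequality $(a_1 - R)^{2} (a_i + L)^{2} \leq (a_1 + L)^{2} (a_i - R)^{2}$; expanding and cancelling reduces this to $(a_i - a_1)(R + L) \geq 0$, which is immediate since $a_1 \leq a_i$. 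The noise term is handled analogously using $a_1 - R \leq a_1 + L$. Once $L^{*} \geq L$ is in hand, applying Lemma~\ref{lemma:OneDimensionTwoStations} to $\cA^{*}$ (invoking Lemma~\ref{lemma:Transformation} to scale $a_1$ to $1$, under which the ratio $L^{*} / R$ is invariant) gives $L^{*} / R \leq (\sqrt{\beta} + 1)/(\sqrt{\beta} - 1)$, whence $L / R \leq L^{*} / R \leq (\sqrt{\beta} + 1)/(\sqrt{\beta} - 1)$ as claimed. The main obstacle is identifying the right auxiliary network: placing $\Station_{1}^{*}$ at the position $a_1$ of the nearest interferer is precisely what forces the sign $(a_i - a_1)(R + L) \geq 0$ in the term-by-term comparison, and tuning $\Power_{1}^{*}$ to exactly preserve $R$ is what allows Lemma~\ref{lemma:OneDimensionTwoStations} to finish the argument.
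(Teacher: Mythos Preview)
Your proof is correct and follows essentially the same approach as the paper's: both place all interference at the nearest position $a_1$ with total power tuned so that the auxiliary network matches $\cA$ exactly at $(\RightMost,0)$, then compare at points on the negative axis and invoke Lemma~\ref{lemma:OneDimensionTwoStations}. Your verification of the key inequality $(a_i-a_1)(R+L)\geq 0$ is the same computation the paper carries out (in the form $(d-a_i)(r-\RightMost)\geq 0$), and your direct distance comparison for $\SmallRadius=\RightMost$ and $\LargeRadius=-\LeftMost$ is a slightly more self-contained version of the paper's argument, which appeals to convexity for the former.
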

\ShortVersion 
\end{AvoidOverfullParagraph}
\ShortVersionEnd 

Before we can establish Lemma~\ref{lemma:PositiveColinear}, we would like to
prove some basic properties of positive colinear networks.
First, we argue that the reception zone \(\ReceptionZone_0\) of \(\Station_0\)
under the positive colinear network \(\cA\) is contained in the halfplanes
intersection \( \{ (x, y) \mid \LeftMost \leq x \leq \RightMost \} \).
To see why this is true, suppose towards deriving contradiction that the point
\( (x, y) \in \ReceptionZone_0 \) for some \( x > \RightMost \) or \( x <
\LeftMost \).
Due to symmetry considerations, we conclude that the point \((x, -y)\) is also
in \(\ReceptionZone_0\).
By the convexity of \(\ReceptionZone_0\), it follows that \( (x, 0) \in
\ReceptionZone_0 \), in contradiction to the definitions of \(\RightMost\) and
\(\LeftMost\).

\begin{corollary}
If \( (x, y) \in \ReceptionZone_0 \), then \( \LeftMost \leq x \leq \RightMost
\).
\end{corollary}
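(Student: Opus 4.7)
The plan is to combine the x-axis symmetry inherent to positive colinear networks with the convexity of the reception zone established in Theorem~\ref{gtheorem:Convexity}. First, I would observe that since $\Station_0 = (0, 0)$ and every other station $\Station_i = (a_i, 0)$ lies on the x-axis, the Euclidean distance satisfies $\dist{\Station_i, (x, y)} = \dist{\Station_i, (x, -y)}$ for every $i$. Consequently $\SINR_{\cA}(\Station_0, (x, y)) = \SINR_{\cA}(\Station_0, (x, -y))$, so the reception zone $\ReceptionZone_0$ is invariant under reflection across the x-axis; in particular $(x, y) \in \ReceptionZone_0$ implies $(x, -y) \in \ReceptionZone_0$.

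Next, I would proceed by contradiction. Assume that some point $(x, y) \in \ReceptionZone_0$ has $x > \RightMost$ (the symmetric case $x < \LeftMost$ is handled identically). By the symmetry just observed, the reflected point $(x, -y)$ also belongs to $\ReceptionZone_0$. Invoking the convexity of $\ReceptionZone_0$ from Theorem~\ref{gtheorem:Convexity}, the entire segment joining $(x, y)$ to $(x, -y)$ lies inside $\ReceptionZone_0$; in particular, its midpoint $(x, 0)$ belongs to $\ReceptionZone_0$, meaning $\SINR_{\cA}(\Station_0, (x, 0)) \geq \beta$. Since $x > \RightMost > 0$, this contradicts the definition of $\RightMost$ as the largest positive real $r$ satisfying $\SINR_{\cA}(\Station_0, (r, 0)) \geq \beta$.

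The argument is short because both of its main ingredients are readily available: the reflection symmetry is immediate from the placement of all stations on the x-axis, and the convexity was already proved earlier. The only mild subtlety worth noting is that $\RightMost$ and $\LeftMost$ must be genuine maximum and minimum (not merely supremum and infimum); this is ensured by the compactness of $\ReceptionZone_0$ guaranteed by Observation~\ref{observation:ReceptionSetBasicProperties} (assuming the network is non-trivial and $\Station_0$'s location is not shared, which is assumed throughout the fatness analysis). Thus I do not expect any substantive obstacle.
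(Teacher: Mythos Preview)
Your proposal is correct and follows essentially the same argument as the paper: the paper also argues by contradiction, uses the reflection symmetry across the $x$-axis to get $(x,-y)\in\ReceptionZone_0$, and then applies convexity to conclude $(x,0)\in\ReceptionZone_0$, contradicting the definitions of $\RightMost$ and $\LeftMost$. Your added remark about compactness ensuring that $\RightMost$ and $\LeftMost$ are attained is a nice clarification that the paper leaves implicit.
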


We now turn to prove that \( \SmallRadius = \RightMost \).
To do so, we will prove that the ball of radius \(\RightMost\) centered at
\(\Station_0\) is contained in \(\ReceptionZone_0\).
In fact, by the convexity of \(\ReceptionZone_0\), it is sufficient to show
that the point \( p(\theta) = (\RightMost \cos \theta, \RightMost \sin \theta)
\) is in \(\ReceptionZone_0\) for all \( 0 \leq \theta \leq \pi \).
Since the network is positive colinear, it follows that
\(\Interference_{\cA}(\Station_0, p(\theta))\) attains its maximum for \(
\theta = 0 \).
Therefore the fact that \( p(0) = (\RightMost, 0) \in \ReceptionZone_0 \)
implies that \( p(\theta) \in \ReceptionZone_0 \) for all \( 0 \leq \theta \leq
\pi \) as desired.

\begin{corollary} \label{corollary:SmallRadius}
The positive colinear network \(\cA\) satisfies \( \SmallRadius = \RightMost
\).
\end{corollary}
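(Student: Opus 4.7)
The plan is to establish the corollary via two matching inequalities. For the upper bound $\SmallRadius \leq \RightMost$, I would observe that $(\RightMost, 0)$ is a boundary point of $\ReceptionZone_0$: by the definition of $\RightMost$ as the maximum $r > 0$ with $\SINR_{\cA}(\Station_0, (r,0)) \geq \beta$ and the continuity of $\SINR_{\cA}(\Station_0, \cdot)$ off of $S$, we have $\SINR_{\cA}(\Station_0, (\RightMost, 0)) = \beta$, so $(\RightMost, 0) \in \Boundary \ReceptionZone_0$. Since $\SmallRadius$ equals the distance from $\Station_0$ to a nearest boundary point (because $\ReceptionZone_0$ is compact and convex, by Theorem~\ref{gtheorem:Convexity}), we conclude $\SmallRadius \leq \dist{\Station_0, (\RightMost, 0)} = \RightMost$.

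For the matching lower bound, I would show that the entire closed ball $\Ball(\Station_0, \RightMost)$ is contained in $\ReceptionZone_0$. By Lemma~\ref{lemma:StarConvex}, $\ReceptionZone_0$ is star-shaped with respect to $\Station_0$; therefore, it suffices to verify that the boundary circle $C = \{p(\theta) := (\RightMost\cos\theta, \RightMost\sin\theta) : \theta \in [0, 2\pi)\}$ lies entirely in $\ReceptionZone_0$. Because all stations lie on the $x$-axis, the network $\cA$ is invariant under reflection across the $x$-axis, and hence so is $\ReceptionZone_0$; this reduces the verification to $\theta \in [0, \pi]$.

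The main step is then to compare the SINR at $p(\theta)$ with the SINR at $p(0) = (\RightMost, 0)$. The signal energy $\Energy(\Station_0, p(\theta)) = \RightMost^{-2}$ is constant along $C$, so it remains to bound the interference. For each $i \geq 1$ we compute
\[
\dist{\Station_i, p(\theta)}^{2} = a_{i}^{2} - 2 a_i \RightMost \cos\theta + \RightMost^{2} \geq (a_i - \RightMost)^{2} = \dist{\Station_i, (\RightMost, 0)}^{2},
\]
where the inequality uses $a_i > 0$ and $\cos\theta \leq 1$. Hence $\Interference_{\cA}(\Station_0, p(\theta)) \leq \Interference_{\cA}(\Station_0, (\RightMost, 0))$, so $\SINR_{\cA}(\Station_0, p(\theta)) \geq \SINR_{\cA}(\Station_0, (\RightMost, 0)) = \beta$, and $p(\theta) \in \ReceptionZone_0$.

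There is no substantive obstacle here: the positive-colinear structure forces the interference at any point of $C$ to be dominated by the interference at the extreme point $(\RightMost, 0)$ on the positive $x$-axis, and once the circle $C$ is shown to be contained in $\ReceptionZone_0$, the star-shape property of Lemma~\ref{lemma:StarConvex} immediately promotes this to containment of the whole ball, giving $\SmallRadius \geq \RightMost$ and completing the proof.
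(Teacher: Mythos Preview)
Your proof is correct and follows essentially the same approach as the paper: show that the circle of radius $\RightMost$ centered at $\Station_0$ lies in $\ReceptionZone_0$ by observing that the interference $\Interference_{\cA}(\Station_0, p(\theta))$ is maximized at $\theta = 0$, and conclude that the full ball is contained in $\ReceptionZone_0$. The only cosmetic difference is that you invoke the star-shape property (Lemma~\ref{lemma:StarConvex}) rather than convexity to pass from the circle to the ball, and you spell out the distance computation and the upper bound $\SmallRadius \leq \RightMost$ explicitly.
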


Next, we prove that \(\LargeRadius\) is realized by the point \((\LeftMost,
0)\).
Indeed, by the triangle inequality, all points at distance \(d\) from
\(\Station_0\) are at distance \( \leq d + a_i \) from \( \Station_i = (a_i,
0) \), with equality attained for the point \((-d, 0)\).
Thus the minimum interference to \(\Station_0\) under \(\cA\) among all
points at distance \(d\) from \(\Station_0\) is attained at the point \((-d,
0)\).
Therefore, by the definition of \(\LeftMost\), there cannot exist any point
\( p \in \ReceptionZone_0 \) such that \( \dist{p, \Station_0} > -\LeftMost
\).

\begin{corollary} \label{corollary:LargeRadius}
The positive colinear network \(\cA\) satisfies \( \LargeRadius = -\LeftMost
\).
\end{corollary}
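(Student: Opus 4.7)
The plan is to prove the two inequalities $\LargeRadius \geq -\LeftMost$ and $\LargeRadius \leq -\LeftMost$ separately. The first direction is immediate: since $\LeftMost$ is defined so that the point $(\LeftMost, 0)$ lies in $\ReceptionZone_0$ (as a boundary point), and its Euclidean distance to $\Station_0 = (0,0)$ is exactly $-\LeftMost$ (recall $\LeftMost < 0$), we immediately get $\LargeRadius \geq -\LeftMost$ from the definition of $\LargeRadius$ as the distance to a farthest point of $\ReceptionZone_0$.

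For the harder direction, I would fix an arbitrary point $p \in \ReceptionZone_0$ at distance $d = \dist{\Station_0, p}$ and argue that the colinear point $q = (-d, 0)$ also belongs to $\ReceptionZone_0$. The key observation is a triangle inequality argument: for each interfering station $\Station_i = (a_i, 0)$ with $a_i > 0$, we have
\[
\dist{p, \Station_i} \;\leq\; \dist{p, \Station_0} + \dist{\Station_0, \Station_i} \;=\; d + a_i \;=\; \dist{q, \Station_i},
\]
so $\Energy(\Station_i, p) \geq \Energy(\Station_i, q)$. Summing over $i > 0$ yields $\Interference(\Station_0, p) \geq \Interference(\Station_0, q)$. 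Since also $\dist{p, \Station_0} = d = \dist{q, \Station_0}$, the signal strengths of $\Station_0$ at $p$ and at $q$ coincide, and hence $\SINR(\Station_0, q) \geq \SINR(\Station_0, p) \geq \beta$, placing $q$ in $\ReceptionZone_0$. By the minimality of $\LeftMost$ among negative reals $r$ with $(r, 0) \in \ReceptionZone_0$, this forces $-d \geq \LeftMost$, i.e., $d \leq -\LeftMost$. Taking the supremum over $p \in \ReceptionZone_0$ gives $\LargeRadius \leq -\LeftMost$.

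The main (and really only) obstacle here is the triangle inequality step, but it is quite clean because positive colinearity aligns every station with the horizontal axis on one side of $\Station_0$, so the farthest point from each $\Station_i$ among all points at distance $d$ from the origin is precisely $(-d, 0)$; this is exactly why the interference is maximized — and the SINR minimized — at $q$ rather than at $p$. Note that the background noise $\Noise$ plays no role in the comparison since it contributes identically at $p$ and at $q$, so the argument goes through without any assumption on $\Noise$.
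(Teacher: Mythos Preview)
Your proof is correct and takes essentially the same approach as the paper: both argue via the triangle inequality that among all points at distance $d$ from $\Station_0$, the point $(-d,0)$ is farthest from every interferer $\Station_i=(a_i,0)$, hence has the smallest interference and the largest SINR, so that any $p\in\ReceptionZone_0$ at distance $d$ forces $(-d,0)\in\ReceptionZone_0$ and thus $d\le -\LeftMost$. One small slip to fix: in your closing paragraph you wrote that the interference is \emph{maximized} and the SINR \emph{minimized} at $q=(-d,0)$; your own computation a few lines earlier (correctly) shows the opposite, namely $\Interference(\Station_0,q)\le\Interference(\Station_0,p)$ and $\SINR(\Station_0,q)\ge\SINR(\Station_0,p)$.
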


It remains to establish the bound on the ratio \( - \LeftMost / \RightMost =
\LargeRadius / \SmallRadius \).
Fix \( d = \min \{ a_i \mid 1 \leq i \leq n - 1 \} \), that is, the leftmost
station other than \(\Station_0\) is located at \((d, 0)\).
Clearly, \( \RightMost < d \).
We denote the energy of station \(\Station_i\) at \((\RightMost, 0)\) by \(
\mathcal{E}_i = (a_i - \RightMost)^{-2} \).
We construct a new network \( \cA' = \langle S', \Power', 0, \beta \rangle \)
consisting of \(\Station_0\) and \(n\) new stations \( \Station'_1, \dots,
\Station'_n \), all located at \((d, 0)\).
For \( 1 \leq i \leq n - 1 \), we set the transmitting power \(\Power'_i\) of
the new station \(\Station'_{i}\) so that the energy it produces at
\((\RightMost, 0)\) is \(\mathcal{E}_i\).
The transmitting power \(\Power'_{n}\) of the new station \(\Station'_{n}\) is
set so that the energy it produces at \((\RightMost, 0)\) is \(\Noise\).
This accounts to
\[
\Power'_i =
\left\{
\begin{array}{ll}
\mathcal{E}_i \cdot (d - \RightMost)^2 & \text{ for } 1 \leq i \leq n - 1;
\text{ and} \\
\Noise \cdot (d - \RightMost)^2 & \text{ for } i = n ~ .
\end{array}
\right.
\]

The network \(\cA'\) falls into the setting of
\LongVersion 
Section~\ref{section:OneDimensionTwoStations}:
\LongVersionEnd 
\ShortVersion 
Appendix~\ref{section:OneDimensionTwoStations}:
\ShortVersionEnd 
the stations \( \Station'_1, \dots, \Station'_n \) share the same location,
thus they can be considered as a single station with transmitting power \(
\sum_{i = 1}^{n} \Power'_i \).
We define \( \RightMost' = \max \{ r > 0 \mid \SINR_{\cA'}(\Station_0, (r, 0))
\geq \beta \} \) and \( \LeftMost' = \min \{ r < 0 \mid
\SINR_{\cA'}(\Station_0, (r, 0)) \geq \beta \} \), so that the restriction of
the reception zone of \(\Station_0\) under \(\cA'\) to the \(x\)-axis is
\( [\LeftMost', \RightMost'] \).
Lemma~\ref{lemma:OneDimensionTwoStations} implies that \( -\LeftMost' /
\RightMost' \leq \frac{\sqrt{\beta} + 1}{\sqrt{\beta} - 1} \).
The remainder of the proof relies on establishing the following two bounds: \\
(1) \( \SINR_{\cA'}(\Station_0, (r, 0)) \leq \SINR_{\cA}(\Station_0,
(r, 0)) \) for all \( \RightMost \leq r < d \); and \\
(2) \( \SINR_{\cA'}(\Station_0, (r, 0)) \geq \SINR_{\cA}(\Station_0, (r, 0))
\) for all \( r \leq \RightMost \), \( r \neq 0 \). \\
By combining bounds (1) and (2), we conclude that \( \RightMost' \leq
\RightMost \) and \( \LeftMost' \leq \LeftMost \), which completes the proof
of Lemma~\ref{lemma:PositiveColinear}.

To establish bounds (1) and (2), consider some point \( p = (r, 0) \), where
\( r < d \), \( r \neq 0 \).
For every \( 1 \leq i \leq n - 1 \), we have
\[
\Energy(\Station_i, p) = \frac{1}{(a_i - r)^2} ~ ,
\quad \text{while} \quad
\Energy(\Station'_i, p) = \frac{\Power'_i}{(d - r)^2} = \frac{(d -
\RightMost)^2}{(d - r)^2 (a_i - \RightMost)^2} ~ .
\]
Comparing between the former expression and the latter, we get
\LongVersion 
\[ 
\Energy(\Station_i, p) ~\geq~ \Energy(\Station'_i, p) ~,
\]
\LongVersionEnd 
\ShortVersion 
\(  \Energy(\Station_i, p) \geq \Energy(\Station'_i, p) , \)
\ShortVersionEnd 
or equivalently,
\LongVersion 
\[
(d - r) (a_i - \RightMost) ~\geq~ (d - \RightMost) (a_i - r) ~ .
\]
\LongVersionEnd 
\ShortVersion 
\( (d - r) (a_i - \RightMost) \geq (d - \RightMost) (a_i - r) .\)
\ShortVersionEnd 
Rearranging, we get
\LongVersion 
\[
d a_i - d \RightMost - a_i r + r \RightMost ~\geq~ d a_i
- d r - a_i \RightMost + r \RightMost ~, 
\]
\LongVersionEnd 
\ShortVersion 
\( d a_i - d \RightMost - a_i r + r \RightMost \geq d a_i
- d r - a_i \RightMost + r \RightMost , \)
\ShortVersionEnd 
or
\LongVersion 
\[
\RightMost (a_i - d) ~\geq~ r (a_i - d) ~ ,
\]
\LongVersionEnd 
\ShortVersion 
\( \RightMost (a_i - d) \geq r (a_i - d) , \)
\ShortVersionEnd 
where the last inequality holds if and only \( a_i = d \), which, by
definition, implies that \( \Energy(\Station_i, p) = \Energy(\Station'_i, p)
\), or \( \RightMost \geq r \).
Therefore the contribution of \(\Station'_i\) to the interference to
\(\Station_0\) at \( p = (0, r) \) is not larger then that of \(\Station_i\)
as long as \( r \leq \RightMost \) and not smaller than that of \(\Station_i\)
as long as \( \RightMost \leq r < d \).
On the other hand, the energy of \(\Station'_n\) at \( p = (r, 0) \) is
not larger than the background noise \(\Noise\) for all \( d \leq \RightMost
\) and not smaller than \(\Noise\) for all \( \RightMost \leq r < d \).
Bounds (1) and (2) follow.

\subsubsection{A general uniform power network}
\label{section:GeneralUniformEnergy}
We are now ready to prove the main theorem of 
\LongVersion 
Section~\ref{section:Fatness}.
\LongVersionEnd 
\ShortVersion 
Appendix~\ref{appendix:Fatness}
\ShortVersionEnd 

\begin{proof}[Proof of Theorem~\ref{theorem:BoundingBoundedBalls}.]
Let \( \cA = \langle S, \bar{1}, \Noise, \beta \rangle \), where \( S = \{
\Station_0, \dots, \Station_{n -1} \} \) and \( \beta > 1 \) is a constant, be
an arbitrary \UPN{}.
We employ Lemma~\ref{lemma:Transformation} to assume that \(\Station_0\) is
located at \((0, 0)\) and that \( \max \{ \dist{\Station_0, q} \mid q \in
\ReceptionZone_0 \} \) is realized by a point \(q\) on the negative
\(x\)-axis.
By definition, we have \( q = (-\LargeRadius, 0) \).

We construct a new \UPN{} \( \cA' = \langle \{\Station_0, \Station'_1, \dots,
\Station'_{n - 1}\}, \bar{1}, \Noise, \beta \rangle \), obtained from \(\cA\)
by rotating each station \(\Station_i\) around the point \(q\) until it
reaches the positive \(x\)-axis (see \Figure{}~\ref{figure:NNodesInTwoDim}).
More formally, the location of \(\Station_0\) remains unchanged and \(
\Station'_i = (a'_i, 0) \), where \( a'_i = \dist{\Station_i, q} -
\LargeRadius \) for every \( 1 \leq i \leq n - 1 \).
Since \(\Station_0\) is heard at \(q\) under \(\cA\), it follows that \(
\LargeRadius = \dist{\Station_0, q} < \dist{\Station_i, q} \) for every \( 1
\leq i \leq n - 1 \), hence \( a'_i > 0 \) and \(\cA'\) is a positive colinear
network.
Clearly, \( \dist{\Station'_i, q} = \dist{\Station_i, q} \) for every \( 1
\leq i \leq n - 1 \).

Let \(\ReceptionZone'_0\) denote the reception zone of \(\Station_0\) under
\(\cA'\).
Fix \( \SmallRadius' = \max \{ r > 0 \mid \Ball(\Station_0, r) \subseteq
\ReceptionZone'_0 \} \) and \( \LargeRadius' = \min \{ r > 0 \mid
\Ball(\Station_0, r) \supseteq \ReceptionZone'_0 \} \).
Let \( \RightMost' = \max \{ r > 0 \mid \SINR_{\cA'}(\Station_0, (r, 0))
\geq \beta \} \) and let \( \LeftMost' = \min \{ r < 0 \mid
\SINR_{\cA'}(\Station_0, (r, 0)) \geq \beta \} \).
Lemma~\ref{lemma:PositiveColinear} guarantees that \( \SmallRadius' =
\RightMost' \), \( \LargeRadius' = -\LeftMost' \), and \(
\frac{\LargeRadius'}{\SmallRadius'} \leq \frac{\sqrt{\beta} + 1}{\sqrt{\beta}
- 1} \).
We shall establish the proof of Theorem~\ref{theorem:BoundingBoundedBalls} by
showing that \( \LargeRadius' = \LargeRadius \) and \( \SmallRadius' \leq
\SmallRadius \).
The former is a direct consequence of Lemma~\ref{lemma:PositiveColinear}:
since \( \SINR_{\cA'}(\Station_0, q) = \SINR_{\cA}(\Station_0, q) = \beta \),
it follows that \( \max \{ \dist{\Station_0, p} \mid p \in \ReceptionZone'_0 \}
\) is realized at \( p = q \).

It remains to prove that \( \SmallRadius' \leq \SmallRadius \).
We shall do so by showing that \( \Ball(\Station_0, \SmallRadius') \subseteq
\ReceptionZone_0 \).
Fix \( \rho_i = \dist{\Station_i, q} \) for every \( 1 \leq i \leq n - 1 \).
We argue that the ball \(\Ball(\Station_0, \SmallRadius')\) is strictly
contained in the ball \(\Ball(q, \rho_i)\) for every \( 1 \leq i \leq n - 1
\).
To see why this is true, observe that \( -\Delta < 0 < \SmallRadius' =
\RightMost' < a'_i \), hence the ball centered at \( q = (-\Delta, 0) \) of
radius \( \rho_i = \Delta + a'_i \) strictly contains the ball of radius
\(\SmallRadius'\) centered at \( \Station_0 = (0, 0) \).

Consider an arbitrary point \( p \in \Ball(\Station_0, \SmallRadius') \).
We can now rewrite
\[
\dist{\Station'_i, (\SmallRadius', 0)}
= a'_i - \SmallRadius'
= \min \{ \dist{t, t'} \mid t \in \Ball(\Station_0, \SmallRadius'), t' \in
\Boundary \Ball(q, \rho_i) \}
\]
for every \( 1 \leq i \leq n - 1 \).
Recall that \( \Station_i \in \Boundary \Ball(q, \rho_i) \), thus \(
\dist{\Station_i, p} \geq \dist{\Station'_i, (\SmallRadius', 0)} \).
Therefore \( \Interference_{\cA}(\Station_0, p) \leq
\Interference_{\cA'}(\Station_0, (\SmallRadius', 0)) \) and \(
\SINR_{\cA}(\Station_0, p) \geq \SINR_{\cA'}(\Station_0, (\SmallRadius', 0)) =
\beta \).
It follows that \( p \in \ReceptionZone_0 \), which completes the proof.
\end{proof}

\newcommand{\FigureNNodesInTwoDim}{
\begin{figure}
\begin{center}
\LongVersion 
\includegraphics[width=0.8\textwidth]{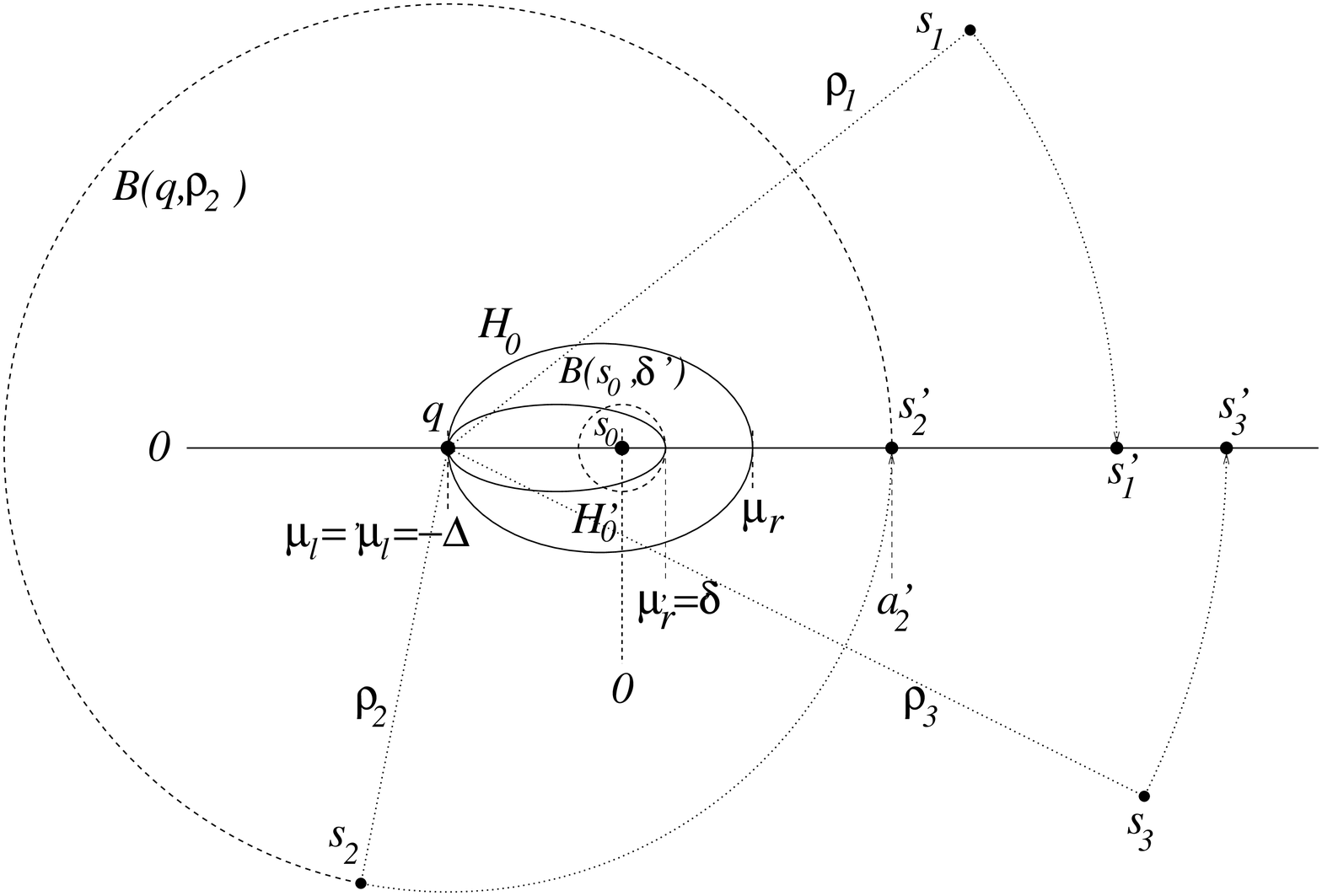}
\LongVersionEnd 
\ShortVersion 
\includegraphics[width=0.6\textwidth]{figs/n-nodes-in-two-dim}
\ShortVersionEnd 
\end{center}
\caption{\label{figure:NNodesInTwoDim}
\(\cA'\) is obtained from \(\cA\) by relocating each station \(\Station_i\) on
the \(x\)-axis.
}
\end{figure}
} 
\LongVersion 
\FigureNNodesInTwoDim{}
\LongVersionEnd 
} 
\LongVersion 
\SectionFatnessPartA{}

\begin{theorem} \label{theorem:ExplicitBounds}
In a uniform energy network \( \cA = \langle S, \bar{1}, \Noise, \beta \rangle
\), where \( S = \{\Station_0, \dots, \Station_{n - 1}\} \) and \( \beta > 1
\) is a constant, if the minimum distance from \(\Station_0\) to any other
station is \( \MinDist > 0 \), then
\[
\SmallRadius(\Station_0, \ReceptionZone_0) \geq \frac{\MinDist}{\sqrt{\beta (n
- 1 + \Noise \cdot \MinDist^2)} + 1}
\quad \text{and} \quad
\LargeRadius(\Station_0, \ReceptionZone_0) \leq \frac{\MinDist}{\sqrt{\beta (1
+ \Noise \cdot \MinDist^2)} - 1} ~ .
\]
The fatness parameter of \(\ReceptionZone_0\) with respect to \(\Station_0\)
thus satisfies
\[
\FatnessParameter(\Station_0, \ReceptionZone_0)
\leq {\frac{\MinDist}{\sqrt{\beta (1 + \Noise \cdot \MinDist^2)} - 1}} \bigg{/}
{\frac{\MinDist}{\sqrt{\beta (n - 1 + \Noise \cdot \MinDist^2)} + 1}}
\leq \frac{\sqrt{\beta (n - 1)} + 1}{\sqrt{\beta} - 1}
= O (\sqrt{n}) ~ .
\]
\end{theorem}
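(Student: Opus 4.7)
The plan is to treat the two bounds separately, each by reducing to an extremal geometric configuration and solving a quadratic equation in one unknown. Throughout, we rely on the convexity of $\ReceptionZone_0$ (Theorem~\ref{gtheorem:Convexity}) to replace balls by distances, and on Lemma~\ref{lemma:Transformation} to normalize the coordinate frame.

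For the lower bound on $\SmallRadius$, I first identify a worst case configuration: among all wireless networks satisfying $\dist{\Station_0, \Station_i} \geq \MinDist$ for all $i > 0$, the one minimizing $\SmallRadius$ has $\Station_0 = (0,0)$ and all other $n-1$ stations piled up at $(\MinDist, 0)$, since piling interferers together in one direction only shrinks the reception zone (each interferer's energy at any point on the positive $x$-axis close to $\Station_0$ is then maximized). To handle the background noise algebraically, I replace $\Noise$ by a virtual station at $(\MinDist, 0)$ of power $\Noise \cdot \MinDist^{2}$; this station contributes exactly $\Noise$ at the origin but more than $\Noise$ at any point $(d,0)$ with $0 < d < \MinDist$, so it yields a configuration that is no easier to hear in, hence the extremum it produces is still a valid lower bound on $\SmallRadius$. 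I then solve $\SINR(\Station_0,(d,0))=\beta$, which reduces to $(\MinDist-d)^2 = d^2 \beta (n-1+\Noise\cdot\MinDist^2)$, giving $d = \MinDist/(\sqrt{\beta(n-1+\Noise\MinDist^2)}+1)$.

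For the upper bound on $\LargeRadius$, the extremal configuration is dual: only one interferer at the minimum distance $\MinDist$ and the remaining $n-2$ stations at infinity (zero contribution), maximizing $\LargeRadius$ in the direction opposite to that interferer. The same noise-to-virtual-station trick applies; here the replacement virtual station contributes \emph{less} than $\Noise$ at points $(-D,0)$, so the quadratic I solve yields an upper bound on $D$. Solving $\SINR(\Station_0,(-D,0))=\beta$ gives $(\MinDist+D)^2 = D^2 \beta (1+\Noise\MinDist^2)$, hence $D = \MinDist/(\sqrt{\beta(1+\Noise\MinDist^2)}-1)$.

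Finally, dividing the two bounds gives
\[
\FatnessParameter(\Station_0,\ReceptionZone_0) \;\leq\; \frac{\sqrt{\beta(n-1+\Noise\MinDist^2)}+1}{\sqrt{\beta(1+\Noise\MinDist^2)}-1},
\]
and applying the elementary inequality $\frac{\sqrt{a+c}+1}{\sqrt{b+c}-1}\leq \frac{\sqrt{a}+1}{\sqrt{b}-1}$ for $a\geq b>1$, $c>0$ (with $a=\beta(n-1)$, $b=\beta$, $c=\beta\Noise\MinDist^2$) yields the stated $O(\sqrt{n})$ bound. The main obstacle is not the quadratic arithmetic but rigorously justifying the two extremal reductions — that concentrating all interferers at one point on the $\Station_0$-side is genuinely worst for $\SmallRadius$, and that deleting all but the nearest interferer is genuinely best for $\LargeRadius$; this requires a monotonicity argument on $\Interference(\Station_0,\cdot)$ under moving individual stations farther away or merging them, combined with the symmetry/convexity of $\ReceptionZone_0$ established earlier.
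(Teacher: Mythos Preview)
Your proposal is correct and follows essentially the same route as the paper: the paper likewise reduces to the two extremal configurations $\cA_{\SmallRadius}$ (all $n-1$ interferers at $(\MinDist,0)$) and $\cA_{\LargeRadius}$ (one interferer at $(\MinDist,0)$, the rest at infinity), replaces the background noise by a virtual station of power $\Noise\MinDist^{2}$ at $(\MinDist,0)$ with the same monotonicity observation you make, solves the identical quadratics, and closes with the same elementary inequality $\frac{\sqrt{a+c}+1}{\sqrt{b+c}-1}\le\frac{\sqrt{a}+1}{\sqrt{b}-1}$. Your remark that the extremal reductions deserve a careful monotonicity justification is apt; the paper treats them somewhat informally, but the triangle-inequality bound $\dist{\Station_i,p}\ge\MinDist-\dist{\Station_0,p}$ (respectively $\le\MinDist+\dist{\Station_0,p}$) makes them immediate.
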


\SectionFatnessPartB{}
\LongVersionEnd 

\section{Handling approximate point location queries}
\label{section:ApproximatePointLocationQueries}

\newcommand{\SectionApproximatePointLocationQueries}{
\subsection{The construction of \QuerDS{}}
\label{section:QueriesForPolyZones}
In this section we describe the construction of \QuerDS{}.
Let \(\GridSpace\) be a positive real to be determined later on.
The data structure \QuerDS{} is based upon imposing a
\(\GridSpace\)-spaced \emph{grid}, denoted by \(\Grid_{\GridSpace}\), on the
Euclidean plane.
The grid is aligned so that the point \(s\) is a grid vertex.
The Euclidean plane is partitioned to grid \emph{cells} with respect to
\(\Grid_{\GridSpace}\) in the natural manner, where ties are broken such that
each cell contains all points on its south edge except its south east corner
and all points on its west edge except its north west corner (the cell does
contain its south west corner).
Given some cell \(C\), we define its \emph{9-cell}, denoted by \( \sharp
C \), as the collection of \( 3 \times 3 \) cells containing \(C\) and the
eight cells surrounding it.

The grid cells will be classified to three types corresponding to the zones
\(\PolyZone^{+}\), \(\PolyZone^{-}\), and \(\PolyZone^{?}\):
cells of type \(\Type^{+}\) are fully contained in \(\PolyZone\);
cells of type \(\Type^{-}\) do not intersect \(\PolyZone\); and
cells of type \(\Type^{?}\) are suspect of partially overlapping
\(\PolyZone\), i.e., having some points in \(\PolyZone\) and some points
not in \(\PolyZone\).
A query on point \( p \in \Reals^{2} \) is handled merely by computing the
cell to which \(p\) belongs and returning its type.
Our analysis relies on bounding the number (and thus the total area) of
\(\Type^{?}\) cells.

By definition, the zone \(\PolyZone\) contains a ball of radius
\(\SmallRadiusBound\) and it is contained in a ball of radius
\(\LargeRadiusBound\), both centered at \(s\).
Clearly, the area of \(\PolyZone\) is lower bounded by the area of any
ball it contains.
Since \(\PolyZone\) is convex, it follows that its perimeter is upper
bounded by the perimeter of any ball that contains it.
Therefore the zone \(\PolyZone\) satisfies
\begin{equation} \label{equation:BoundingAreaAndPerimeter}
\Area(\PolyZone) \geq \pi \SmallRadiusBound^{2}
\quad \text{and} \quad
\Perimeter(\PolyZone) \leq 2 \pi \LargeRadiusBound ~ .
\end{equation}

We will soon present an iterative process, referred to as the \emph{Boundary
Reconstruction Process (BRP)}, which identifies the \(\Type^{?}\) cells.
The union of the \(\Type^{?}\) cells form the zone \(\PolyZone^{?}\) that
contains \(\PolyZone\)'s boundary \( \Boundary \PolyZone = \{ (x, y) \in
\Reals^2 \mid \QuerPoly(x, y) = 0 \} \).
In fact, the zone \(\PolyZone^{?}\) is isomorphic to a ring and in
particular, it partitions \( \Reals^{2} - \PolyZone^{?} \) to a zone
\emph{enclosed} by \(\PolyZone^{?}\) and a zone \emph{outside}
\(\PolyZone^{?}\).
The cells in the former zone (respectively, latter zone) are subsequently
classified as \(\Type^{+}\) cells (resp., \(\Type^{-}\) cells).
We shall conclude by bounding the area of \(\PolyZone^{?}\), showing that it
is at most an \(\epsilon\)-fraction of the area of \(\PolyZone\).

The main ingredient of BRP is a procedure referred to as the \emph{segment
test}.
On input segment \(\sigma\) (which may be open or closed in each endpoint), the
segment test returns the number of distinct intersection points of \(\Boundary
\PolyZone\) and \(\sigma\).
(Since \(\PolyZone\) is convex, this number is either \(0\), \(1\), or
\(2\).)
The segment test is implemented to run in time \(O (m^2)\) by employing Sturm's
condition of the projection of the polynomial \(\QuerPoly(x, y)\) on
\(\sigma\) and by direct calculations of the \(\SINR\) function in the
endpoints of \(\sigma\).
Typically, the segment test will be invoked on segments consisting of edges
of the grid \(\Grid_{\GridSpace}\).

Note that if \(\sigma\) is tangent to \(\Boundary \PolyZone\), then the
segment test reports a single intersection point.
To distinguish this (extreme) case from the (common) case where
\(\Boundary \PolyZone\) crosses \(\sigma\) in a single point, we can append
three other segments to \(\sigma\), thus closing a virtual square, and apply
the segment test to these three new segments.
Since \(\Boundary \PolyZone\) is a closed curve, if \(\Boundary
\PolyZone\) crosses \(\sigma\) and enters the virtual square, then it must
exit it at some point.
On the other hand, by the convexity of \(\PolyZone\), we know that if
\(\sigma\) is tangent to \(\Boundary \PolyZone\), then such a virtual square
cannot intersect \(\Boundary \PolyZone\) at any other point.
(Of course, one should consider two possible such squares, one on each side of
\(\sigma\).)

We now turn to describe BRP.
Informally, the process traverses the boundary of \(\PolyZone\) in the
clockwise direction and identifies the grid cells that intersect it (with some
slack).
Let \(C_{s}\) be the grid cell that contains the point \(s\).
(We will choose the parameter \(\GridSpace\) to ensure that \( \GridSpace <
\SmallRadiusBound / \sqrt{2} \) so that \(C_{s}\) and the three other cells
that share the vertex \(s\) are fully contained in \(\PolyZone\).)
BRP begins by identifying the unique cell \(C_1\) north to \(C_{s}\) (\(C_1\)
and \(C_{s}\) are in the same grid column) which contains a point of
\(\Boundary \PolyZone\) along its west edge.
Note that all grid vertices between \(s\) and the south west corner of \(C_1\)
are in \(\PolyZone\), while the north west corner of \(C_1\) and all the
grid vertices to its north are not in \(\PolyZone\).
The computation of \(C_1\) is performed by direct calculations of the \(\SINR\)
function at grid vertices north of \(s\) in a binary search fashion, starting
with a grid vertex at distance at most \(\LargeRadiusBound\) from \(s\), and
ending with a grid vertex at distance at least \(\SmallRadiusBound\) from
\(s\), so that the total number of \(\SINR\) calculations is \( O (\log
(\LargeRadiusBound / \SmallRadiusBound)) \).

Let \(q\) denote the (unique) intersection point of \(\Boundary
\PolyZone\) and the west edge of \(C_1\).
Consider some continuous (injective) curve function \( \phi : [0, 2 \pi)
\rightarrow \Boundary \PolyZone \) that traverses \(\Boundary \PolyZone\)
in the clockwise direction, aligned so that \( \phi(0) = q \).
For the sake of formality, we extend the domain of \(\phi\) to \( [0, \infty)
\) by setting \( \phi(z) = \phi(z - \lfloor z / (2 \pi) \rfloor \cdot 2 \pi)
\) for every \( z > 2 \pi \).
Let \( z_1 = 0 \).
Given the cell \( C_{j - 1} \) and the real \( z_{j - 1} \in [0, 2 \pi) \), we
define \( z_j = \inf \{ z > z_{j - 1} \mid \phi(z) \notin \sharp C_{j - 1} \}
\).
(Informally, \(\phi(z_j)\) is the first point out of \( \sharp C_{j - 1} \)
encountered along a clockwise traversal of \(\Boundary \PolyZone\) that
begins at \( \phi(z_{j - 1}) \).)

If \( z_j \geq 2 \pi \) (which means that the process have completed a full
encirclement of \(\Boundary \PolyZone\)), then we fix \( m = j \) and BRP
is over.
Assume that \( z_j < 2 \pi \).
If \( \phi(z_j) \notin \sharp C_{j - 1} \), then the cell \(C_j\) is defined
to be the cell containing \(\phi(z_j)\).
Otherwise, the cell \(C_j\) is defined to be the cell containing the point \(
\phi(z_j + \delta) \) for sufficiently small \( \delta > 0 \).
BRP then continues, gradually constructing the collection of \(\Type^{?}\)
cells, consisting of all cells in the 9-cell of \(C_j\) for every \( 1 \leq j
< m \).
The choice of cells \( C_1, \dots, C_{m - 1} \) is illustrated in
\Figure{}~\ref{figure:Grid}.

\newcommand{\FigureGrid}{
\begin{figure}
\begin{center}
\LongVersion 
\includegraphics[width=0.5\textwidth]{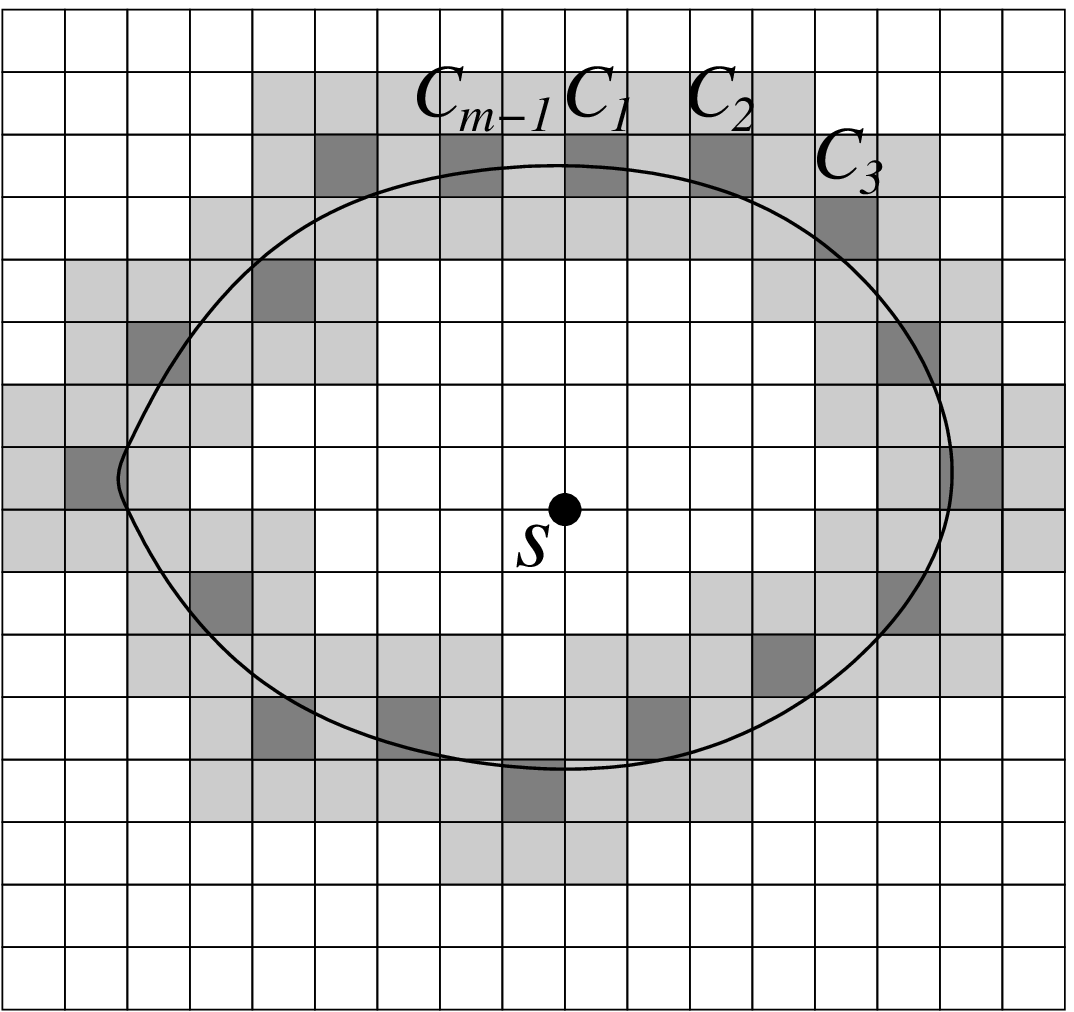}
\LongVersionEnd 
\ShortVersion 
\includegraphics[width=0.4\textwidth]{figs/grid}
\ShortVersionEnd 
\caption{
\label{figure:Grid}
The cell collection \( C_1, \dots, C_{m - 1} \) (dark gray) on top of the
boundary of \(\PolyZone\) (bold curve).
The \(\Type^{?}\) cells are the union of \( C_1, \dots, C_{m - 1} \) and the
\(8\) cells surrounding each one of them (in light gray).
}
\end{center}
\end{figure}
} 
\LongVersion 
\FigureGrid{}
\LongVersionEnd 

It should be clarified that from an algorithmic perspective, we do not
explicitly compute the real sequence \( z_1, \dots, z_m \), but rather the
cell sequence \( C_1, \dots, C_{m - 1} \).
This is done by \(O (1)\) applications of the segment test for every 9-cell
involved in the process.
Since \(\Boundary \PolyZone\) is a closed curve, and since \(\PolyZone\)
is convex, these applications are sufficient to identify the grid edges (and
vertices) crossed by (or tangent to) \(\Boundary \PolyZone\), and hence to
compute the desired cell sequence \( C_1, \dots, C_{m - 1} \).

Next, we bound the number of \(\Type^{?}\) cells.
In every iteration of BRP, we introduce at most \(9\) new \(\Type^{?}\) cells,
hence the total number of \(\Type^{?}\) cells is at most \( 9 (m - 1) \).
Recall our choice of reals \( z_1, \dots, z_m \).
As \( \phi(z_{j - 1}) \) lies on the boundary of \( C_{j - 1} \) and
\(\phi(z_j)\) lies on the boundary of \( \sharp C_{j - 1} \), we conclude that
\( \dist{\phi(z_j), \phi(z_{j - 1})} \geq \GridSpace \) for every \( 1 < j
\leq m \).
Therefore in each iteration of BRP, at least \(\GridSpace\) units of length are
``consumed'' from \(\Perimeter(\PolyZone)\).
By inequality~(\ref{equation:BoundingAreaAndPerimeter}), we have \( m \leq
\lceil \Perimeter(\PolyZone) / \GridSpace \rceil \leq \lceil 2 \pi
\LargeRadiusBound / \GridSpace \rceil \), thus the number of \(\Type^{?}\)
cells is at most \( 9 (m - 1) < 18 \pi \LargeRadiusBound / \GridSpace \).

Since the area of each grid cell is \(\GridSpace^2\), it follows that the
total area of \(\PolyZone^{?}\) (which is the union of the \(\Type^{?}\)
cells) is smaller than \( 18 \pi \LargeRadiusBound \GridSpace \).
In order to guarantee that \( \Area(\PolyZone^{?}) \leq \epsilon \cdot
\Area(\PolyZone) \), we employ
inequality~(\ref{equation:BoundingAreaAndPerimeter}) once more and demand that
\( 18 \pi \LargeRadiusBound \GridSpace \leq \epsilon \pi \SmallRadiusBound^2
\).
Therefore it is sufficient to fix
\[
\GridSpace = \frac{\epsilon \SmallRadiusBound^2}{18 \LargeRadiusBound} ~ ,
\]
which means that the number of \(\Type^{?}\) cells is \( O ((\LargeRadiusBound
/ \SmallRadiusBound)^2 \epsilon^{-1}) \).

Let \(\ColumnsCollection\) be the collection of grid columns with at least one
\(\Type^{?}\) cell.
Clearly, \( |\ColumnsCollection| = O ((\LargeRadiusBound /
\SmallRadiusBound)^2 \epsilon^{-1}) \).
Each column \(\chi\) in \(\ColumnsCollection\) contains at most \(6\)
\(\Type^{?}\) cells.
Consider some cell \(C\) in \(\chi\) which is not a \(\Type^{?}\) cell.
If there is a \(\Type^{?}\) cell to the north of \(C\) and a \(\Type^{?}\)
cell to its south, then \(C\) is a \(\Type^{+}\) cell; otherwise, \(C\) is a
\(\Type^{-}\) cell.
It follows that the data structure \QuerDS{} can be represented as a
vector with one entry per each grid column in \(\ColumnsCollection\) (\( O
((\LargeRadiusBound / \SmallRadiusBound)^2 \epsilon^{-1}) \) entries
altogether), where the entry corresponding to the grid column \( \chi \in
\ColumnsCollection \) stores the \(\Type^{?}\) cells of \(\chi\) (at most
\(6\) of them).
On input point \( p \in \Reals^2 \), we merely have to compute the grid cell
to which \(p\) belongs and (possibly) look up at the relevant entry of
\QuerDS{}.

\subsection{Approximate point location queries in the SINR diagram}
\label{section:QueriesForReceptionZones}
\LongVersion 
\begin{AvoidOverfullParagraph}
\LongVersionEnd 
In this section we explain the relevance of the construction presented in
\LongVersion 
Section~\ref{section:QueriesForPolyZones}
\LongVersionEnd 
\ShortVersion 
Appendix~\ref{section:QueriesForPolyZones}
\ShortVersionEnd 
to \(\epsilon\)-approximate point location queries in the SINR diagram and
establish Theorem~\ref{gtheorem:ApproximatePointLocationQueries}.
Consider some \UPN{} \( \langle S, \bar{1}, \Noise, \beta \rangle \), where \(
S = \{\Station_0, \dots, \Station_{n - 1}\} \) and \( \beta > 1 \) is a
constant.
Recall that the reception zone \( \ReceptionZone_i = \{ (x, y) \in \Reals^2
\mid \HearPoly_i(x, y) \leq 0 \} \) for every \( 0 \leq i \leq n - 1 \), where
\(\HearPoly_{i}(x, y)\) is a \(2\)-variate polynomial of degree at most \( 2 n
\) that is strictly negative for all internal points \((x, y)\) of
\(\ReceptionZone_i\)
\LongVersion 
(see Section~\ref{section:WirelessNetworks}).
\LongVersionEnd 
\ShortVersion 
(see Section~\ref{section:Preliminaries}).
\ShortVersionEnd 
Assuming that the location of \(\Station_i\) is not shared by any other
station (if it is, then \( \ReceptionZone_i = \{\Station_i\} \) and point
location queries are answered trivially), we know that \(\Station_i\) is an
internal point of \(\ReceptionZone_i\).
Furthermore, Theorem~\ref{gtheorem:Convexity} guarantees that the reception
zone \(\ReceptionZone_i\) is a bounded convex zone and
Theorem~\ref{theorem:ExplicitBounds} provides us with a lower bound
\(\SmallRadiusBound\) on \(\SmallRadius(\Station_i, \ReceptionZone_i)\), and
an upper bound \(\LargeRadiusBound\) on \(\LargeRadius(\Station_i,
\ReceptionZone_i)\) such that \( \LargeRadiusBound / \SmallRadiusBound = O
(\sqrt{n}) \).
\LongVersion 
\end{AvoidOverfullParagraph}
\LongVersionEnd 

In fact, we can obtain much tighter bounds on \(\SmallRadius(\Station_i,
\ReceptionZone_i)\) and \(\LargeRadius(\Station_i, \ReceptionZone_i)\).
Let \(r\) be some positive real and assume that we are promised that \(
\SmallRadius(\Station_i, \ReceptionZone_i) = O (r) \) and that \(
\LargeRadius(\Station_i, \ReceptionZone_i) = \Omega (r) \).
Theorem~\ref{theorem:BoundingBoundedBalls} guarantees that \(
\LargeRadius(\Station_i, \ReceptionZone_i) / \SmallRadius(\Station_i,
\ReceptionZone_i) = O (1) \), hence both \(\SmallRadius(\Station_i,
\ReceptionZone_i)\) and \(\LargeRadius(\Station_i, \ReceptionZone_i)\) are
\(\Theta (r) \).
Such a positive real \(r\) is found via an iterative binary-search-like
process that directly computes the values of the \(\SINR\) function of
\(\Station_i\) at points to the, say, north of \(\Station_i\), starting with a
point at distance \(\LargeRadiusBound\) form \(\Station_i\), and ending with a
point at distance at least \(\SmallRadiusBound\) from \(\Station_i\).
Since \( \LargeRadiusBound / \SmallRadiusBound = O (\sqrt{n}) \), it follows
that this process is bound to end within \( O (\log n) \) iterations.
Each iteration takes \( O (n) \) time, thus the improved bounds for
\(\SmallRadius(\Station_i, \ReceptionZone_i)\) and \(\LargeRadius(\Station_i,
\ReceptionZone_i)\) are computed in time \( O (n \log n) \).

Given a performance parameter \( 0 < \epsilon < 1 \), we apply the technique
presented in
\LongVersion 
Section~\ref{section:QueriesForPolyZones}
\LongVersionEnd 
\ShortVersion 
Appendix~\ref{section:QueriesForPolyZones}
\ShortVersionEnd 
to \(\ReceptionZone_i\) and its corresponding polynomial \(\HearPoly_i\) with
the improved bounds on \(\SmallRadius(\Station_i, \ReceptionZone_i)\) and
\(\LargeRadius(\Station_i, \ReceptionZone_i)\) and construct in time \( O (n^2
\epsilon^{-1}) \) a data structure \(\QuerDS_i\) of size \( O (\epsilon^{-1})
\) that partitions the Euclidean plane to disjoint zones \( \Reals^2 =
\ReceptionZone_{i}^{+} \cup \ReceptionZone_{i}^{-} \cup \ReceptionZone_{i}^{?}
\) such that
(1) \( \ReceptionZone_{i}^{+} \subseteq \ReceptionZone_i \);
(2) \( \ReceptionZone_{i}^{-} \cap \ReceptionZone_i = \emptyset \); and
(3) \(\ReceptionZone_{i}^{?}\) is bounded and its area is at most an
\(\epsilon\)-fraction of \(\ReceptionZone_i\).
Given a query point \( p \in \Reals^2 \), \(\QuerDS_i\) answers in constant
time whether \(p\) is in \(\ReceptionZone_{i}^{+}\),
\(\ReceptionZone_{i}^{-}\), or \(\ReceptionZone_{i}^{?}\).
(We construct a separate data structure \(\QuerDS_i\) for every \( 0 \leq i
\leq n - 1 \).)

Recall that by Observation~\ref{observation:ReceptionSetBasicProperties},
point \(p\) cannot be in \(\ReceptionZone_i\) unless it is closer to
\(\Station_i\) than it is to any other station in \(S\).
Thus for such a point \(p\) there is no need to query the data structure
\(\QuerDS_j\) for any \( j \neq i \).
A Voronoi diagram of linear size for the \(n\) stations is constructed in \( O
(n \log n) \) preprocessing time, so that given a query point \( p \in \Reals^2
\), we can identify the closest station \(\Station_i\) in time \( O (\log n)
\) and invoke the appropriate data structure \(\QuerDS_i\).

Combining the Voronoi diagram with the data structures \(\QuerDS_i\) for all
\( 0 \leq i \leq n - 1 \), we obtain a data structure \DataStructure{} of size
\( O (n \epsilon^{-1}) \), constructed in \( O (n^3 \epsilon^{-1}) \)
preprocessing time, that decides in time \( O (\log n) \) whether the query
point \(p\) is in \(\ReceptionZone_{i}^{+}\) for some \(i\), in
\(\ReceptionZone_{i}^{?}\) for some \(i\), or neither, which means that \( p
\in \ReceptionZone^{-} = \bigcap_{i = 0}^{n - 1} \ReceptionZone_{i}^{-} \).
Theorem~\ref{gtheorem:ApproximatePointLocationQueries} follows.
} 

\ShortVersion 
Our goal in this section is to prove
Theorem~\ref{gtheorem:ApproximatePointLocationQueries}.
In fact, our technique for approximate point location queries is suitable for
a more general framework of zones (and diagrams).
Let \(\QuerPoly(x, y)\) be a \(2\)-variate polynomial of degree \(m\) and
suppose that the zone
\[
\PolyZone = \left\{ (x, y) \in \Reals^2 \mid \QuerPoly(x, y) \leq 0
\right\}
\]
is a thick set and that \(\QuerPoly(x, y)\) is strictly negative for all
internal points \((x, y)\) of \(\PolyZone\).
Moreover, suppose that we are given an internal point \(s\) of
\(\PolyZone\), a lower bound \(\SmallRadiusBound\) on \(\SmallRadius(s,
\PolyZone)\), and an upper bound \(\LargeRadiusBound\) on \(\LargeRadius(s,
\PolyZone)\).
Let \( 0 < \epsilon < 1 \) be a predetermined performance parameter.
We construct in \( O (m^2 (\LargeRadiusBound / \SmallRadiusBound)^2
\epsilon^{-1}) \) preprocessing time a data structure \QuerDS{} of size
\( O ((\LargeRadiusBound / \SmallRadiusBound)^2 \epsilon^{-1}) \).
\QuerDS{} essentially partitions the Euclidean plane into three
disjoint zones \( \Reals^2 = \PolyZone^{+} \cup \PolyZone^{-} \cup
\PolyZone^{?} \) such that \\
(1) \( \PolyZone^{+} \subseteq \PolyZone \); \\
(2) \( \PolyZone^{-} \cap \PolyZone = \emptyset \); and \\
(3) \(\PolyZone^{?}\) is bounded and its area is at most an
\(\epsilon\)-fraction of the area of \(\PolyZone\). \\
Given a query point \( p \in \Reals^2 \), \QuerDS{} answers in constant
time whether \(p\) is in \(\PolyZone^{+}\), \(\PolyZone^{-}\), or
\(\PolyZone^{?}\).

In Appendix~\ref{section:QueriesForPolyZones}
we describe the construction of \QuerDS{}.
In Appendix~\ref{section:QueriesForReceptionZones}
we explain how the reception zones and the SINR diagram fall into the above
framework and establish
Theorem~\ref{gtheorem:ApproximatePointLocationQueries}.
\ShortVersionEnd 

\LongVersion 
Our goal in this section is to prove
Theorem~\ref{gtheorem:ApproximatePointLocationQueries}.
In fact, our technique for approximate point location queries is suitable for
a more general framework of zones (and diagrams).
Let \(\QuerPoly(x, y)\) be a \(2\)-variate polynomial of degree \(m\) and
suppose that the zone
\[
\PolyZone = \left\{ (x, y) \in \Reals^2 \mid \QuerPoly(x, y) \leq 0
\right\}
\]
is a thick set and that \(\QuerPoly(x, y)\) is strictly negative for all
internal points \((x, y)\) of \(\PolyZone\).
Moreover, suppose that we are given an internal point \(s\) of
\(\PolyZone\), a lower bound \(\SmallRadiusBound\) on \(\SmallRadius(s,
\PolyZone)\), and an upper bound \(\LargeRadiusBound\) on \(\LargeRadius(s,
\PolyZone)\).
Let \( 0 < \epsilon < 1 \) be a predetermined performance parameter.
We construct in \( O (m^2 (\LargeRadiusBound / \SmallRadiusBound)^2
\epsilon^{-1}) \) preprocessing time a data structure \QuerDS{} of size
\( O ((\LargeRadiusBound / \SmallRadiusBound)^2 \epsilon^{-1}) \).
\QuerDS{} essentially partitions the Euclidean plane into three
disjoint zones \( \Reals^2 = \PolyZone^{+} \cup \PolyZone^{-} \cup
\PolyZone^{?} \) such that \\
(1) \( \PolyZone^{+} \subseteq \PolyZone \); \\
(2) \( \PolyZone^{-} \cap \PolyZone = \emptyset \); and \\
(3) \(\PolyZone^{?}\) is bounded and its area is at most an
\(\epsilon\)-fraction of the area of \(\PolyZone\). \\
Given a query point \( p \in \Reals^2 \), \QuerDS{} answers in constant
time whether \(p\) is in \(\PolyZone^{+}\), \(\PolyZone^{-}\), or
\(\PolyZone^{?}\).

In Section~\ref{section:QueriesForPolyZones}
we describe the construction of \QuerDS{}.
In Section~\ref{section:QueriesForReceptionZones}
we explain how the reception zones and the SINR diagram fall into the above
framework and establish
Theorem~\ref{gtheorem:ApproximatePointLocationQueries}.

\SectionApproximatePointLocationQueries{}
\LongVersionEnd 

\clearpage
\bibliographystyle{plain}
\bibliography{sinr}

\ShortVersion 
\clearpage

\pagenumbering{roman}
\appendix

\renewcommand{\theequation}{A-\arabic{equation}}
\setcounter{equation}{0}

\centerline{\large{\bf APPENDIX}}

\section{A brief overview of some basic notions in point set topology}
\label{appendix:TopologyOverview}
\TopologyOverview{}

\section{Additional material for the convexity proof}

\subsection{Proof of Lemma~\ref{lemma:StarConvex}}
\label{appendix:ProofLemmaStarConvex}
\ProofLemmaStarConvex{}
\qed

\subsection{Proof of Proposition~\ref{proposition:OppositeSign}}
\label{appendix:ProofPropositionOppositeSign}
\ProofPropositionOppositeSign{}
\qed

\subsection{Proof of Proposition~\ref{proposition:BallBoundariesIntersect}}
\label{appendix:ProofPropositionBallBoundariesIntersect}
\ProofPropositionBallBoundariesIntersect{}
\qed

\subsection{Adding background noise}
\label{appendix:AddingNoise}
\SectionAddingNoise{}

\section{The fatness of the reception zones}
\label{appendix:Fatness}
\SectionFatnessPartA{}
\SectionFatnessPartB{}

\section{Handling approximate point location queries}
\label{appendix:ApproximatePointLocationQueries}
\SectionApproximatePointLocationQueries{}

\clearpage
\renewcommand{\thepage}{}

\centerline{\large{\bf FIGURES}}
\bigskip

\FigureExample{}
\FigureSinrUdgThree{}
\FigureNonConvex{}
\FigureApproxQuery{}
\FigureFatness{}
\FigureStarShape{}
\FigureRoots{}
\FigureBTwoIsInsideBOne{}
\FigureBOneAndBTwoIntersect{}
\FigureInductiveStep{}
\FigureAddingBackgroundNoise{}
\FigureTwoNodesOnTheLine{}
\FigureNNodesOnTheLine{}
\FigureNNodesInTwoDim{}
\FigureGrid{}

\ShortVersionEnd 

\end{document}